\documentclass[twoside,11pt]{article}

\usepackage[preprint]{jmlr2e}

\usepackage{amsmath}
\usepackage{amsfonts}
\usepackage{tikz}
\usepackage{bbm}
\usepackage{algorithm}
\usepackage{algpseudocode}
\usepackage{chngcntr}
\usepackage{multicol}
\usepackage{multirow}
\usepackage{pdflscape} 
\usepackage{rotating}
\usepackage{tcolorbox}


\counterwithin{algorithm}{section} 

\algnewcommand\algorithmicinput{\textbf{Input:}}
\algnewcommand\Input{\item[\algorithmicinput]}

\algnewcommand\algorithmiciterate{\textbf{Iterate}}
\algnewcommand\Iterate{\item[\algorithmiciterate]}

\algnewcommand\algorithmicdefine{\textbf{Define}}
\algnewcommand\Define{\item[\algorithmicdefine]}

\algnewcommand\algorithmiccompute{\textbf{Compute}}
\algnewcommand\Compute{\item[\algorithmiccompute]}

\algnewcommand\algorithmiccalculate{\textbf{Calculate}}
\algnewcommand\Calculate{\item[\algorithmiccalculate]}

\algnewcommand\algorithmicinitialize{\textbf{Initialize}}
\algnewcommand\Initialize{\item[\algorithmicinitialize]}

\algnewcommand\algorithmicwhileloop{\textbf{While}}
\algnewcommand\Whileloop{\item[\algorithmicwhileloop]}

\algnewcommand\algorithmicendloop{\textbf{Return:}}
\algnewcommand\Endloop{\item[\algorithmicendloop]}

\newtheorem{assumption}[theorem]{Assumption}

\newcommand{\indep}{\perp \!\!\! \perp}
\newcommand{\der}[2]{\dfrac{\textnormal{d}^{#2}#1}{\textnormal{{dt}}^{#2}}}

\DeclareMathOperator{\EX}{\mathbb{E}}
\DeclareMathOperator{\Prob}{\mathbb{P}}
\newcommand{\R}{\mathbb{R}}
\DeclareMathOperator{\expit}{expit}
\DeclareMathOperator*{\argmax}{arg\,max}
\DeclareMathOperator*{\argmin}{arg\,min}
\DeclareMathOperator*{\eqdef }{\stackrel{\text{def}}{=}}

\usepackage{lastpage}
\jmlrheading{00}{2025}{1-\pageref{LastPage}}{6/25; Revised --/--}{--/--}{GROLLEAU25A}{François Grolleau, Céline Beji, Raphaël Porcher and François Petit}

\ShortHeadings{Estimating CACE with Mixtures of Experts}{Grolleau, Beji, Porcher, and Petit}
\firstpageno{1}

\begin{document}

\title{Estimating Complier Average Causal Effects with Mixtures of Experts}

\author{\name François Grolleau \email grolleau@stanford.edu \\
       \addr Stanford Center for Biomedical Informatics Research\\
       Stanford University, Stanford, CA 94305, USA
       \AND
       \name Céline Beji\\
       \name Raphaël Porcher\\
       \name François Petit\\
       \addr Université Paris Cité and Université Sorbonne Paris Nord, Inserm, INRAE,\\ Center for Research in Epidemiology and StatisticS (CRESS), Paris, France}

\editor{Editor to be Assigned}

\maketitle

\begin{abstract}%
Treatment non-compliance, where individuals deviate from their assigned experimental conditions, frequently complicates the estimation of causal effects. To address this, we introduce a novel learning framework based on a mixture of experts architecture to estimate the Complier Average Causal Effect (CACE). Our framework provides a flexible alternative to classical instrumental variable methods by relaxing their strict monotonicity and exclusion restriction assumptions. We develop a principled, two-step procedure where each step is optimized with a dedicated Expectation-Maximization (EM) algorithm. Crucially, we provide formal proofs that the model's components are identifiable, ensuring the learning procedure is well-posed. The resulting CACE estimators are proven to be consistent and asymptotically normal. Extensive simulations demonstrate that our method achieves a substantially lower root mean squared error than traditional instrumental variable approaches when their assumptions fail, an advantage that persists even when our own mixture of experts are misspecified. We illustrate the framework’s practical utility on data from a large-scale randomized trial.
\end{abstract}

\begin{keywords}
  causal inference, local average treatment effect, principal ignorability, mixture of experts, expectation-maximization algorithm
\end{keywords}

\section{Introduction}

Randomized experiments are the gold standard for estimating causal effects, yet their real-world deployments—from clinical trials to large-scale online A/B tests—rarely unfold with textbook simplicity. A frequent complication is non-compliance, where an individual’s assigned treatment (e.g., being encouraged to use a new feature) and their received treatment (e.g., actually using it) diverge for behavioral or logistical reasons \citep{e9r1, prosperi2020}. When this occurs, the overall Average Treatment Effect is often a poor measure of a program's efficacy. While much recent work in machine learning has focused on learning balanced representations to estimate this effect \citep{johansson2016learning}, the more relevant parameter in the presence of non-compliance is the Complier Average Causal Effect (CACE)—the mean effect on the latent subpopulation of individuals who would comply with any assignment given to them \citep{frangakis1999, imbens1994}.

The classic approach to estimating the CACE uses an instrumental variable (IV) framework \citep{angrist1996}. However, its validity rests on two strong and often untestable assumptions: (i) monotonicity, which posits that no individual would actively defy their assignment, and (ii) the exclusion restriction, which assumes that assignment influences the outcome only through the received treatment \citep{stuart2015}. In many modern applications, these assumptions are questionable. For instance, in medical trials, the assignment itself can create placebo effects (violating exclusion restriction), while in online A/B testing, users may actively seek out the experience opposite to that assigned (violating monotonicity) \citep{mansournia2017}. Consequently, settings where both assumptions fail simultaneously are plausible and common. While methods exist to relax one of these assumptions, general frequentist frameworks that relax both while retaining point identification are scarce in the mainstream literature (\emph{see}, e.g., the surveys by \citet{imbens2014instrumental} and \citet{mogstad2024instrumental}).

This paper recasts CACE estimation as a machine learning problem solvable with a mixture of experts model \citep{jacobs1991adaptive, jordan1994, yuksel2012twenty}. We treat the four principal strata—compliers, always-takers, never-takers, and defiers—as unobserved latent classes. Our architecture consists of a gating network that learns to predict an individual's probability of belonging to each latent stratum based on their covariates, and expert networks that model the outcomes for each relevant stratum. This architecture is grounded in the principal ignorability assumption \citep{jo2009}, which posits that an individual's covariates are sufficient to characterize their compliance behavior. By explicitly modeling the latent strata, our framework avoids reliance on the monotonicity and exclusion restriction assumptions. We learn the model parameters using a principled two-step process: we first train the gating network to estimate compliance probabilities, and then use these probabilities to train the expert networks that model outcomes. Each step is optimized using a dedicated Expectation-Maximization (EM) algorithm. Our contributions are as follows:
\begin{enumerate}
\item We introduce a novel mixture of experts framework for CACE estimation that simultaneously relaxes the monotonicity and exclusion restriction assumptions, extending principled causal inference to settings where traditional IV methods are inapplicable.
\item We develop a flexible two-step learning procedure. First, a gating network is trained to estimate compliance probabilities. Second, expert networks are trained to model the conditional potential outcomes from covariates, with the learning process for this step guided by the probabilities from the first. Each step is optimized with a dedicated EM algorithm, and the overall framework yields a suite of four estimators tailored to different combinations of identifying assumptions.
\item We provide robust empirical validation, demonstrating through extensive simulations that our framework achieves a substantially lower Root Mean Squared Error (RMSE) than classical IV approaches when their standard assumptions are violated. This performance advantage persists even when our own mixture of experts are misspecified.
\item We establish formal identifiability guarantees, proving that the core components of our model—the gating and expert networks—are identifiable from observed data under mild, formal technical conditions. This theoretical result ensures our learning procedure is well-posed.
\end{enumerate}

The remainder of this paper is organized as follows. Section \ref{sec:related_work} situates our work in the broader literature. Section \ref{sec:setup} formalizes our causal model. Section \ref{sec:infer} details the estimation procedure. Section \ref{sec:partcases} presents the four specialized estimators derived from our framework. Sections \ref{sec:sim} and \ref{sec:app} present empirical results, and Section \ref{sec:diss} concludes. Reproducibility materials are available at \url{https://github.com/fcgrolleau/CACEmix}.

\section{Related Work}\label{sec:related_work}

Our work is situated within the instrumental variable (IV) framework for estimating the Local Average Treatment Effect (LATE), or CACE, pioneered by \citet{imbens1994} and \citet{angrist1996}. The classical IV approach achieves \emph{point identification} of the treatment effect for compliers but requires four key assumptions: instrument relevance, instrument exogeneity, exclusion restriction, and monotonicity. Our paper focuses on relaxing the latter two, which are often the most difficult to justify in practice.

A significant body of literature has explored this problem. One line of work develops methods to evaluate the plausibility of the core IV assumptions \citep{glymour2012credible,huber2014testing,swanson2015definition,burauel2023evaluating}. Another major branch focuses on \emph{partial identification}, deriving bounds on the CACE when monotonicity or exclusion restriction are violated \citep{balke1997, heckman2001instrumental, flores2013, kitagawa2021}. A third approach seeks to retain point identification by relaxing a single assumption while targeting a specific estimand. This includes methods robust to the failure of monotonicity \citep{de2017tolerating} and machine learning approaches for estimating the average treatment effect when the exclusion restriction fails \citep{sun2022selective}.

Our framework is distinct from these approaches as it aims for point identification while simultaneously relaxing both the monotonicity and exclusion restriction assumptions. We achieve this by replacing these behavioral assumptions with functional form assumptions on the data-generating process—a common strategy in machine learning. This use of covariates to aid identification builds on a long tradition of research in IV settings \citep{abadie2003semiparametric}.
	
Our central identifying condition, principal ignorability (Assumption \ref{as:pi}(ii)), posits that potential outcomes are independent of compliance strata, conditional on covariates. While strong, this assumption enables our framework to identify the distinct components of the mixture-of-experts models without invoking monotonicity or exclusion restriction. Our contribution, therefore, lies in leveraging this modeling trade-off to build a flexible and transparent estimation framework.
	
Finally, the identifiability of our model architecture is grounded in the theory of finite mixture models. This field was pioneered by \citet{teicher1967identifiability}, with key extensions on the identifiability of general mixtures by \citet{jiang1999identifiability}. In Appendix \ref{app:ident}, we provide specific proofs establishing that our proposed gating and expert network structures are identifiable under mild technical conditions, ensuring our learning procedure is well-posed.

\section{Setup}\label{sec:setup}
\subsection{A probability model for the data generating mechanism}
We denote by $X$ the individuals' baseline (i.e., pre-randomization) covariates, and by $Z$ their allocated binary treatment. We formalize the concept of a randomized controlled trial via the following assumption.
\begin{assumption}[{\bf Random allocation}]\label{as:exo} For any realized value of covariates, individuals could be allocated to either treatment option. That is, the allocated treatment is generated as
\begin{align*}
    &Z|X\sim \textit{Bernoulli}(p=\eta(X))
\end{align*}
where $\eta(\cdot)$, the allocation ratio function of the randomized controlled trial\footnote{In many medical RCTs, the allocation ratio is 1:1 independently of $X$ that is, $\eta(\cdot)\equiv0.5$.}, is such that the following holds:
\begin{align*}
&\exists \epsilon_{\eta} \in \R, \quad  \forall x\in \mathcal{X}, \quad 0<\epsilon_{\eta}<\eta(x)<1-\epsilon_{\eta}<1.
\end{align*}
\end{assumption}

To characterize the notions of \emph{complier}, \emph{always taker}, \emph{never taker}, and \emph{defier} individuals we introduce the following potential treatments:
\begin{equation*}
    T^{s=c}\eqdef Z,\quad T^{s=a}\eqdef 1,\quad T^{s=n}\eqdef 0, \;\text{and}\; T^{s=d}\eqdef 1-Z.
\end{equation*}

These potential treatments correspond to the treatment that a \emph{complier}, an \emph{always taker}, a \emph{never taker}, and a \emph{defier} would take respectively. In the notation above, the \emph{s} superscript indicates the ``stratum" of an individual i.e., \emph{complier}, \emph{always taker}, \emph{never taker}, or \emph{defier}. Since the strata are not observed in practice, throughout this paper, we consider it a latent variable. For later convenience, we introduce the latent stratum of an individual as the one-hot-encoded random vector $\boldsymbol{S}=(S_c,S_a,S_n,S_d)^T,$ where the vectors $(1,0,0,0)^T$, $(0,1,0,0)^T$, $(0,0,1,0)^T$, $(0,0,0,1)^T$ indicate that an individual is a \textit{complier}, an \textit{always taker}, a \textit{never taker}, and a \textit{defier} respectively. We further define the conditional probabilities that an individual is a \textit{complier}, an \textit{always taker}, a \textit{never taker}, and a \textit{defier} as $\rho_c(X)\eqdef \EX(S_c|X)$, $\rho_a(X)\eqdef \EX(S_a|X)$, $\rho_n(X)\eqdef \EX(S_n|X)$, and $\rho_d(X)\eqdef \EX(S_d|X)$ respectively. We note that if an individual's stratum and their allocated treatment were known, then the treatment they effectively took would be entirely characterized. To account for this fact, we define the treatment effectively taken $T$ as follows. 
\begin{definition}[{\bf Treatment effectively taken}]\label{def_ttt} The treatment effectively taken $T$ is consistent with potential treatments in the sense that
\begin{align}
    T&\eqdef S_cT^{s=c} + S_aT^{s=a} + S_nT^{s=n}+ S_dT^{s=d} \notag \\
     &=S_c Z + S_a + S_d  (1-Z) \label{eq:ttt}.
\end{align}
\end{definition}
We suppose the existence of elementary potential outcomes of the form $Y^{s=k,z=l,t=m}$ corresponding to the outcome that would be observed if an individual stratum had been $k,$ their allocated treatment $l,$ and their treatment effectively taken $m.$ Although this may appear complex,\footnote{Note that the potential outcome of a \emph{never taker} taking treatment is meaningless i.e., $Y^{s=n,z=0,t=1}$ and $Y^{s=n,z=1,t=1}$ do not have a commonsensical interpretation. For similar reasons, we will not make use of the following elementary potential outcomes: $Y^{s=a,z=1,t=0},$ $Y^{s=a,z=0,t=0},$ $Y^{s=c,z=1,t=0},$ $Y^{s=c,z=0,t=1},$ $Y^{s=d,z=1,t=1},$ and $Y^{s=d,z=0,t=0}$.} these elementary potential outcomes serve the purpose of characterizing the standard potential outcomes while relaxing usual assumptions of exclusion restriction and monotonicity. The standard potential outcomes $Y^{t=1}$ and $Y^{t=0},$ represent the outcome an individual would achieve if they had taken treatment option $T = 0$ or $T = 1$ respectively. In this paper, we define the potential outcomes as follows.
\begin{definition}[{\bf Potential outcomes}]\label{def_po}
The potential outcomes $Y^{t=1}$ and $Y^{t=0}$ are consistent with elementary potential outcomes in the sense that
\begin{align*}
Y^{t=1}&\eqdef S_cY^{s=c,z=1,t=1} + S_aZY^{s=a,z=1,t=1}+ S_a(1-Z)Y^{s=a,z=0,t=1} + S_dY^{s=d,z=0,t=1}, \\
Y^{t=0}&\eqdef S_cY^{s=c,z=0,t=0} + S_nZY^{s=n,z=1,t=0}+ S_n(1-Z)Y^{s=n,z=0,t=0} + S_dY^{s=d,z=1,t=0}.  
\end{align*}
\end{definition}
Note that this definition for $Y^{t=1}$ and $Y^{t=0}$ should not be viewed as a causal assumption as it does not impose any conceptual constraint. In fact, with this definition the potential outcome $Y^{t=1}$ of an \emph{always-taker} could be different depending on whether their allocated treatment was $Z=0$ or $Z=1$; that is, we do not impose $Y^{s=a,z=0,t=1}$ and $ Y^{s=a,z=1,t=1}$ to be equal. Likewise, the potential outcome $Y^{t=0}$ of a \emph{never-taker} could be different depending on whether their allocated treatment was $Z=0$ or $Z=1$. In other words, our model does not make an exclusion restriction assumption. Moreover, in Definition \ref{def_po} the defiers' potential outcomes are explicitly taken into account and hence, our model does not make a monotonicity assumption. In section \ref{sec:partcases}, we will consider the particular situations where the exclusion restriction and/or monotonicity assumptions hold. We introduce the observed outcomes $Y$ by appealing to the standard consistency assumption.
\begin{assumption}[{\bf Consistency}]\label{as:consistency} The observed outcomes are consistent with potential outcomes, in the sense that
\begin{align*}
&Y=TY^{t=1} + (1-T)Y^{t=0}.
\end{align*}
\end{assumption}
To identify complier average causal effects, we also rely on the principal ignorablity and positivity of compliers assumptions.
\begin{assumption}[{\bf Principal ignorablity}]\label{as:pi} The following two conditional independence statements hold.
\begin{itemize}
    \item[(i)] All variables causing the stratum $\boldsymbol{S}$ and the allocated treatment $Z$ are measured, i.e.,
\begin{align*}
&Z\indep \boldsymbol{S}|X. 
\end{align*}
    \item[(ii)] All variables causing the stratum $\boldsymbol{S}$ and the elementary potential outcomes are measured, i.e., $\forall (k,l,m)\in\{c,a,n,d\}\times\{0,1\}\times \{0,1\}$
\begin{align*}
&Y^{s=k,z=l,t=m}\indep \boldsymbol{S}|X. 
\end{align*}
\end{itemize}
\end{assumption}
In a randomized controlled trial, the first conditional independance statement is a weak assumption that often holds by design. The second statement of principal ignorability may be viewed as an adaptation of the usual no unmeasured confounders assumption \citep{rubin1978} to the context of imperfect compliance.
\begin{assumption}[{\bf Positivity of compliers}]\label{as:pos}
There exists a constant $\epsilon_\rho>0$ such that the following holds:
\begin{align*}
&\forall x\in \mathcal{X}, \quad \rho_c(x)> \epsilon_\rho.
\end{align*}
\end{assumption}
Assumption \ref{as:pos} may be viewed as an adaptation of the usual positivity assumption \citep{rosenbaum1983} to the context of imperfect compliance. In practice, assuming that assumptions \ref{as:exo}, \ref{as:pi}, and \ref{as:pos} hold guarantees that all conditional expectations introduced in the next subsection are well-defined.

\subsection{Notations}
For $k\in\{c,a,n,d\}$, $l\in\{0,1\}$ and $m\in\{0,1\},$ we define the conditional probability functions
\begin{align*}
    P_{klm}(x) &\eqdef \EX(S_k|Z=l, T=m, X=x),
\end{align*}
the conditional observed outcome functions
\begin{align*}
    q_{lm}(x) &\eqdef \EX(Y|Z=l,T=m,X=x),
\end{align*}
and the conditional elementary potential outcome functions
\begin{align*}
    Q_{klm}(x) &\eqdef \EX(Y^{s=k,z=l,t=m}|X=x).
\end{align*}
For clarity, we denote the standard propensity score by $e(X)\eqdef \EX[T|X]$ and its relevant adaptation in the context of imperfect compliance is denoted by 
\[
\pi(X,Z)\eqdef \EX(T|X,Z). 
\]
A summary of the notations used in this paper is provided in Appendix \ref{app:notations}.
\subsection{The CACE estimand}
Our target estimand, the CACE, is defined as 
\[
\Delta \eqdef \EX(Y^{t=1}-Y^{t=0}|S_c=1).
\]
In this paper, we make use of the following rearrangement.
\begin{lemma}\label{lemma:est}
Under assumptions \ref{as:pi} and \ref{as:pos}, the CACE estimand can be represented as
\begin{equation*}
    \Delta=\EX\Big[\big\{ Q_{c11}(X)-Q_{c00}(X)\big\}\rho_c(X)\Big] / \EX[\rho_c(X)].
\end{equation*}
\end{lemma}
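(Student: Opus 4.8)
The plan is to reduce the conditional-on-$\{S_c=1\}$ expectation defining $\Delta$ to an ordinary covariate expectation of the complier elementary potential outcome functions, and then rewrite the result as the claimed ratio using the tower property. First I would substitute Definition \ref{def_po} into $\Delta = \EX(Y^{t=1}-Y^{t=0}\mid S_c=1)$. On the event $\{S_c=1\}$ the one-hot encoding of $\boldsymbol{S}$ forces $S_a=S_n=S_d=0$, so every always-taker, never-taker, and defier term drops out and only the complier terms $S_c Y^{s=c,z=1,t=1}$ and $S_c Y^{s=c,z=0,t=0}$ survive, giving
\[
\Delta = \EX\bigl(Y^{s=c,z=1,t=1} - Y^{s=c,z=0,t=0}\mid S_c=1\bigr).
\]

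Next I would apply the tower property to introduce conditioning on $X$:
\[
\Delta = \EX\Bigl[\EX\bigl(Y^{s=c,z=1,t=1} - Y^{s=c,z=0,t=0}\mid X, S_c=1\bigr)\,\Big|\,S_c=1\Bigr].
\]
The pivotal step is to strip the inner conditioning on $\{S_c=1\}$. Since $\{S_c=1\}$ corresponds to the specific value $\boldsymbol{S}=(1,0,0,0)^T$, principal ignorability (Assumption \ref{as:pi}(ii)) applied to each of the two retained elementary potential outcomes yields, for $(l,m)\in\{(1,1),(0,0)\}$,
\[
\EX\bigl(Y^{s=c,z=l,t=m}\mid X, S_c=1\bigr) = \EX\bigl(Y^{s=c,z=l,t=m}\mid X\bigr) = Q_{clm}(X),
\]
so the inner expectation collapses to $Q_{c11}(X)-Q_{c00}(X)$ and therefore $\Delta = \EX\bigl[Q_{c11}(X)-Q_{c00}(X)\mid S_c=1\bigr]$.

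Finally I would convert this conditional expectation into the stated quotient. Writing $g(X)=Q_{c11}(X)-Q_{c00}(X)$ and using $\EX(S_c\mid X)=\rho_c(X)$, the definition of conditional expectation together with the tower property gives
\[
\EX[g(X)\mid S_c=1] = \frac{\EX[g(X)\,S_c]}{\EX[S_c]} = \frac{\EX[g(X)\,\rho_c(X)]}{\EX[\rho_c(X)]},
\]
which is exactly the claimed representation; Assumption \ref{as:pos} ensures $\EX[\rho_c(X)]\geq\epsilon_\rho>0$, so the ratio is well-defined.

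I expect the main obstacle to be the middle step, namely the rigorous justification for dropping the conditioning on the latent event $\{S_c=1\}$. This requires invoking Assumption \ref{as:pi}(ii) at the level of the full vector $\boldsymbol{S}$ and arguing that independence of each elementary potential outcome from $\boldsymbol{S}$ given $X$ transfers to independence from the indicator $S_c$ evaluated at the value $(1,0,0,0)^T$. Some care is also needed to confirm that the only elementary potential outcomes surviving the reduction are the two commonsensical ones $Y^{s=c,z=1,t=1}$ and $Y^{s=c,z=0,t=0}$, rather than any of the meaningless outcomes flagged earlier. The remaining manipulations are routine applications of the tower property.
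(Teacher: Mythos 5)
Your proposal is correct and follows essentially the same route as the paper's proof: reduce to the complier elementary potential outcomes on $\{S_c=1\}$, apply the tower property and Assumption \ref{as:pi}(ii) to strip the latent conditioning, and convert the conditional expectation to the ratio form (the paper does this last step via densities and Bayes' rule, you via the equivalent identity $\EX[g(X)\mid S_c=1]=\EX[g(X)S_c]/\EX[S_c]$). No gaps.
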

A proof of this lemma is included in Appendix \ref{app:proofs}. The goal of the inference procedure described in the next section is to estimate the functions $Q_{c11}$, $Q_{c00}$, and $\rho_c$ in order to derive a plug-in estimator for $\Delta$.

\section{Inference}\label{sec:infer}
We consider the experiment $(X_i,\boldsymbol{S_i},Z_i,T_i,Y_i)\stackrel{\text{iid}}\sim \mathcal{P}$ where we only observe $(X_i,Z_i,T_i,Y_i)_{1\le i\le n}$ as $\boldsymbol{S_i}$, the stratum of an individual, is considered a latent variable. An overview of our estimation procedure can be found in Algorithm \ref{metaalgo:1}. Below, we explain the main steps involved.

\subsection{Step 1: joint estimation of the gating network \texorpdfstring{$\rho_k(\cdot)_{k\in\{c,a,n,d\}}$}{rho}}\label{subsec:step1}
First, we will estimate the function $\rho_c$ by making use of the following expression for the propensity score $\pi$.
\begin{lemma}\label{lemma:mix4}
Under assumptions \ref{as:exo} and \ref{as:pi}, the mechanism generating the treatment effectively taken is given by
\begin{align*}
    \pi(X,Z)= \sum_{k\in\{c,a,n,d\}}\rho_k(X)\mu_k(Z)
\end{align*}
where $\mu_k(Z)\eqdef \EX(T^{s=k}|Z).$
\end{lemma}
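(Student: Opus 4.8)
The plan is to compute $\pi(X,Z)=\EX(T\mid X,Z)$ directly, by substituting the structural expression for $T$ from Definition \ref{def_ttt} and then invoking the conditional independence in Assumption \ref{as:pi}(i). This is fundamentally a bookkeeping identity, so the work is to set up the conditioning carefully rather than to overcome any deep difficulty.

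First I would record the fact that each potential treatment $T^{s=k}$ is a deterministic function of $Z$ alone: $T^{s=c}=Z$, $T^{s=a}=1$, $T^{s=n}=0$, and $T^{s=d}=1-Z$. In particular each $T^{s=k}$ is $\sigma(Z)$-measurable, so $\EX(T^{s=k}\mid Z)=T^{s=k}$ almost surely, which is precisely the statement that $T^{s=k}=\mu_k(Z)$ as random variables. Substituting this into the first line of Definition \ref{def_ttt} rewrites the treatment effectively taken as $T=\sum_{k\in\{c,a,n,d\}} S_k\,\mu_k(Z)$.

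Next I would take the conditional expectation given $(X,Z)$ and push it through the finite sum by linearity. Since each $\mu_k(Z)$ is $\sigma(X,Z)$-measurable, it factors out of the conditional expectation, leaving $\pi(X,Z)=\sum_{k}\mu_k(Z)\,\EX(S_k\mid X,Z)$. The one place an assumption enters is the simplification of $\EX(S_k\mid X,Z)$: Assumption \ref{as:pi}(i) gives $Z\indep\boldsymbol{S}\mid X$, and since each coordinate $S_k$ is a measurable function of $\boldsymbol{S}$ we obtain $S_k\indep Z\mid X$, whence $\EX(S_k\mid X,Z)=\EX(S_k\mid X)=\rho_k(X)$. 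Assembling the terms yields the claimed mixture $\pi(X,Z)=\sum_{k\in\{c,a,n,d\}}\rho_k(X)\mu_k(Z)$. Assumption \ref{as:exo} enters only to guarantee that both allocations $Z\in\{0,1\}$ occur with positive conditional probability, so that $\EX(T\mid X,Z)$ is genuinely well-defined for each value of $Z$ on the support of $X$.

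I do not expect a real obstacle here. The only points requiring minor care are the measurability argument that lets one replace the deterministic potential treatments $T^{s=k}$ by the constants $\mu_k(Z)$ inside the conditional expectation, and the clean passage from the vector independence $Z\indep\boldsymbol{S}\mid X$ to the coordinatewise statements $S_k\indep Z\mid X$ used to collapse $\EX(S_k\mid X,Z)$ to $\rho_k(X)$.
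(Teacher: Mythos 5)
Your proposal is correct and follows essentially the same route as the paper's proof: substitute the decomposition of $T$ from Definition \ref{def_ttt}, factor out the $\sigma(Z)$-measurable potential treatments from the conditional expectation given $(X,Z)$, and invoke Assumption \ref{as:pi}\emph{(i)} to collapse $\EX(S_k\mid X,Z)$ to $\rho_k(X)$. The only cosmetic difference is that you identify $T^{s=k}=\mu_k(Z)$ up front, whereas the paper works with the explicit constants $Z,1,0,1-Z$ and re-identifies them as $\mu_k(Z)$ at the end.
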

A proof of this lemma is included in Appendix \ref{app:proofs}. Lemma \ref{lemma:mix4} suggests that the propensity score $\pi(\cdot)$ can be viewed as mixture of the known experts $\mu_c(z)=z$, $\mu_a(z)=1$, $\mu_n(z)=0$, $\mu_d(z)=1-z$, while the proportions of the mixture are given by the unknown gating network $\rho_k(\cdot)_{k\in\{c,a,n,d\}}$. In Theorem \ref{thm:identif_mix1} (Appendix \ref{app:ident}), we show that, under parametric assumptions, the mixture of expert model for $\pi(x,z)$ in Lemma \ref{lemma:mix4} is identifiable. \citet{jordan1994} described procedures to fit mixture of experts. For instance, if the functions  $\rho_k(\cdot)_{k\in\{c,a,n,d\}}$ are assumed to be differentiable with respect to some parameters then, fitting can be achieved in the supervised learning paradigm by specifying the relevant (mixture) architecture and minimizing via gradient descent a binary cross-entropy loss function with targets $(T_i)_{1\leq i\leq n}$.\footnote{This would require training a custom multi-input model where the four known experts $\mu_s(z)_{s\in\{c,a,n,d\}}$ are provided, and the unknown gating network $\rho_k(x;\theta)_{k\in\{c,a,n,d\}}$ is specified as any architecture (differentiable wrt some parameters $\theta$) with output size $(4\times1)$ and softmax activation. Such implementation is feasible in frameworks such as PyTorch \citep{paszke2019pytorch} and Keras \citep{chollet2021deep}.} Alternatively, we propose to jointly estimate $\rho_k(\cdot)_{k\in\{c,a,n,d\}}$ via the procedure given in Algorithm \ref{algo:1}. This procedure details an EM algorithm, based on the description from \citet{Xu1993} for fitting a mixture of known experts. In Algorithm \ref{algo:4}, we provide an adaptation of this EM-procedure that allows to fit nonparametric and/or non-differentiable functions for $\rho_k(\cdot)_{k\in\{c,a,n,d\}}$.

\subsection{Step 2: parallel estimation of the experts \texorpdfstring{$\{Q_{c11},\,Q_{a11}\}$}{Qc11 and Qa11} and \texorpdfstring{$\{Q_{c00},\,Q_{n00}\}$}{QC00, Qn00}}\label{subsec:step2}
Next, we make use of the following rearrangement of the conditional observed outcome functions $q_{11}$ and $q_{00}$ to estimate $Q_{c11}$, and $Q_{c00}$ separately.
 \begin{lemma}\label{lemma:mix}
Suppose that assumptions \ref{as:exo}, \ref{as:consistency}, \ref{as:pi} and \ref{as:pos} hold. Then,
\begin{align*}
    (i) \quad q_{11}(X)&=P_{c11}(X)Q_{c11}(X) + P_{a11}(X)Q_{a11}(X), \\
    (ii) \quad q_{00}(X)&=P_{c00}(X)Q_{c00}(X) + P_{n00}(X)Q_{n00}(X).
\end{align*}
\end{lemma}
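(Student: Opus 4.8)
The plan is to prove part (i); part (ii) then follows by the symmetric argument that exchanges $\{Z=1,T=1\}$ with $\{Z=0,T=0\}$ and replaces the complier/always-taker decomposition of $Y^{t=1}$ with the complier/never-taker decomposition of $Y^{t=0}$. First I would check that every conditional expectation in the statement is well defined: by Assumption~\ref{as:exo} we have $\Prob(Z=1\mid X=x)=\eta(x)>\epsilon_\eta$, and since $T=S_c+S_a$ on $\{Z=1\}$ by \eqref{eq:ttt}, Assumption~\ref{as:pi}(i) together with Assumption~\ref{as:pos} give $\Prob(T=1\mid Z=1,X=x)\ge\rho_c(x)>\epsilon_\rho$; hence $\Prob(Z=1,T=1\mid X=x)>0$.

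Starting from $q_{11}(x)=\EX(Y\mid Z=1,T=1,X=x)$, I would use consistency (Assumption~\ref{as:consistency}) to replace $Y$ by $Y^{t=1}$ on the event $\{T=1\}$, then substitute the expression for $Y^{t=1}$ from Definition~\ref{def_po}; on the conditioning event $\{Z=1\}$ this reduces to
\[
Y^{t=1}=S_cY^{s=c,z=1,t=1}+S_aY^{s=a,z=1,t=1}+S_dY^{s=d,z=0,t=1}.
\]
The key structural observation is that, because $\boldsymbol S$ is one-hot and $T=S_cZ+S_a+S_d(1-Z)$ by \eqref{eq:ttt}, the event $\{Z=1,T=1\}$ forces $S_c+S_a=1$, hence $S_n=S_d=0$ almost surely on this event. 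This annihilates the defier contribution, leaving
\[
q_{11}(x)=\EX\big(S_cY^{s=c,z=1,t=1}\mid Z=1,T=1,X=x\big)+\EX\big(S_aY^{s=a,z=1,t=1}\mid Z=1,T=1,X=x\big).
\]

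To handle each summand I would condition further on the relevant stratum indicator via the tower property. For the complier term, since $S_c\in\{0,1\}$,
\[
\EX\big(S_cY^{s=c,z=1,t=1}\mid Z=1,T=1,X=x\big)=\Prob(S_c=1\mid Z=1,T=1,X=x)\,\EX\big(Y^{s=c,z=1,t=1}\mid S_c=1,Z=1,T=1,X=x\big),
\]
whose first factor is $P_{c11}(x)$ by definition. For the second factor, note that $\{S_c=1,Z=1\}$ already determines $T=1$ through \eqref{eq:ttt}, so the conditioning on $T=1$ is redundant and may be dropped; it then remains to strip $Z=1$ and $S_c=1$ and recover $Q_{c11}(x)=\EX(Y^{s=c,z=1,t=1}\mid X=x)$. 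Here the independence assumptions enter: Assumption~\ref{as:pi}(ii) gives $Y^{s=c,z=1,t=1}\indep\boldsymbol S\mid X$, and random allocation (Assumption~\ref{as:exo}) renders $Z$ conditionally independent of the latent stratum and the elementary potential outcomes given $X$, so jointly $Y^{s=c,z=1,t=1}\indep(S_c,Z)\mid X$ and the factor collapses to $Q_{c11}(x)$. Treating the always-taker term identically yields $P_{a11}(x)Q_{a11}(x)$, and summing gives (i).

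I expect the main obstacle to be the careful bookkeeping around conditioning on $T$, which is itself a deterministic function of $(\boldsymbol S,Z)$: one must argue both that $\{Z=1,T=1\}$ is equivalent to $\{Z=1,\,S_c+S_a=1\}$ (eliminating the never-taker and defier strata) and that, once the relevant stratum and $Z$ are fixed, $\{T=1\}$ carries no further information. A secondary subtlety is that collapsing the second factor to $Q_{c11}$ requires removing $Z$ from the conditioning, which is not literally one of the principal ignorability statements; I would therefore make the joint conditional independence $Z\indep(\boldsymbol S,\{Y^{s=k,z=l,t=m}\})\mid X$ supplied by randomization explicit and invoke the weak-union property to extract the marginal statements actually used.
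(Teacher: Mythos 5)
Your proposal is correct and follows essentially the same route as the paper's proof: consistency converts $Y$ to $Y^{t=1}$ on $\{Z=1,T=1\}$, Definition~\ref{def_po} together with $T=S_cZ+S_a+S_d(1-Z)$ kills the never-taker and defier terms, and Assumption~\ref{as:pi}(ii) plus the conditional independence of $Z$ from $(\boldsymbol S,\boldsymbol{Y^{s,z,t}})$ given $X$ lets you collapse the outcome factors to $Q_{c11}$ and $Q_{a11}$. The only difference is presentational: you factor each summand via the tower property on $\{S_k=1\}$ where the paper factors $\EX(S_kY^{\cdots}\mid Z,T,X)$ directly, and you make explicit the joint independence $Z\indep(\boldsymbol S,\{Y^{s=k,z=l,t=m}\})\mid X$ that the paper instead justifies by a d-separation argument on the graphical model of Appendix~\ref{app:graph}.
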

A proof of this lemma is included in Appendix \ref{app:proofs}. Furthermore, we can use Bayes’ rule to verify the following lemma.
 \begin{lemma}\label{lemma:P}
Under assumptions \ref{as:exo}, \ref{as:pi} and \ref{as:pos}, the conditional probability functions $P_{c11}$, $P_{a11}$, $P_{c00}$, $P_{n00}$ can be represented as
\begin{align*}
(a) \quad P_{c11}(X)&=\rho_c(X)/\{\rho_c(X)+\rho_a(X)\},\\
(b) \quad P_{a11}(X)&=\rho_a(X)/\{\rho_c(X)+\rho_a(X)\},\\
(c) \quad P_{c00}(X)&=\rho_c(X)/\{\rho_c(X)+\rho_n(X)\},\\
(d) \quad P_{n00}(X)&=\rho_n(X)/\{\rho_c(X)+\rho_n(X)\}.
\end{align*}
\end{lemma}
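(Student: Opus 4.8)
The plan is to compute each conditional probability $P_{klm}(x)=\Prob(S_k=1\mid Z=l,T=m,X=x)$ directly by Bayes' rule, exploiting the fact that once $Z$ is fixed, the event $\{T=m\}$ reduces to membership in a disjoint pair of strata. Concretely, from the representation $T=S_cZ+S_a+S_d(1-Z)$ in Definition \ref{def_ttt} together with the one-hot structure of $\boldsymbol{S}$, I first record two deterministic facts: conditional on $Z=1$ we have $T=S_c+S_a$, so $\{T=1\}$ coincides with $\{S_c=1\}\cup\{S_a=1\}$; and conditional on $Z=0$ we have $T=S_a+S_d$, so $\{T=0\}$ coincides with $\{S_c=1\}\cup\{S_n=1\}$. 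These two set identities are the heart of the argument: they convert the outcome-free conditioning events into disjoint unions of strata events.

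For part (a), I would write
\[
P_{c11}(x)=\frac{\Prob(S_c=1,\,T=1\mid Z=1,X=x)}{\Prob(T=1\mid Z=1,X=x)}.
\]
Because $S_c=1$ forces $T=1$ under $Z=1$, the numerator simplifies to $\Prob(S_c=1\mid Z=1,X=x)$, while the disjoint decomposition $\{T=1\}=\{S_c=1\}\cup\{S_a=1\}$ makes the denominator $\Prob(S_c=1\mid Z=1,X=x)+\Prob(S_a=1\mid Z=1,X=x)$. Invoking Assumption \ref{as:pi}(i), namely $Z\indep\boldsymbol{S}\mid X$, each conditional probability given $(Z=1,X=x)$ equals the corresponding one given $X=x$ alone, i.e. $\rho_c(x)$ and $\rho_a(x)$, which yields (a). Part (b) is identical with $S_a$ in the numerator, and parts (c)--(d) are the mirror-image computation under $Z=0$ using the decomposition $\{T=0\}=\{S_c=1\}\cup\{S_n=1\}$.

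There is no genuine obstacle here beyond bookkeeping, but two points must be handled to keep the statement well-posed, and these are exactly where the remaining assumptions enter. First, Assumption \ref{as:exo} guarantees $0<\eta(x)<1$, so conditioning on $Z=l$ has positive probability and each $P_{klm}$ is defined. Second, I must verify the denominators are nonzero: since $\rho_c(x)+\rho_a(x)\ge\rho_c(x)$ and $\rho_c(x)+\rho_n(x)\ge\rho_c(x)$, the positivity of compliers in Assumption \ref{as:pos}, $\rho_c(x)>\epsilon_\rho>0$, bounds both denominators away from zero, so every ratio in (a)--(d) is well-defined uniformly over $\mathcal{X}$. A final consistency check, that $P_{c11}(x)+P_{a11}(x)=1$ and $P_{c00}(x)+P_{n00}(x)=1$, follows immediately and confirms the decompositions are exhaustive.
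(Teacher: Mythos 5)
Your proposal is correct and follows essentially the same route as the paper: both apply Bayes' rule to $\Prob(S_k=1\mid Z=l,T=m,X)$, use the deterministic relation $T=S_cZ+S_a+S_d(1-Z)$ from Definition~\ref{def_ttt} to reduce the numerator and to decompose the denominator into a sum over the two relevant strata, and then invoke Assumption~\ref{as:pi}\emph{(i)} to drop the conditioning on $Z$ and obtain the ratio of $\rho_k$'s. Your explicit verification that Assumptions~\ref{as:exo} and~\ref{as:pos} keep the conditioning events and the denominators well-defined matches the closing remark of the paper's proof.
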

A proof of this lemma is included in Appendix \ref{app:proofs}. Equation \emph{(i)} in Lemma \ref{lemma:mix} suggests that the conditional expectation $q_{11}(\cdot)$ can be viewed as a mixture of the unknown (expert) functions $Q_{c11}(\cdot)$ and $Q_{a11}(\cdot)$. In addition, in Lemma \ref{lemma:P}, equations \emph{(a)} and \emph{(b)} suggest that the proportions for this mixture, i.e., the gating network $\{P_{c11}(\cdot),P_{a11}(\cdot)\}$, are known, if $\rho_k(\cdot)_{k\in\{c,a,n,d\}}$ are known. Since $\rho_k(\cdot)_{k\in\{c,a,n,d\}}$ are estimated by the end of Step 1, we propose to estimate this gating network via 
\begin{align*}
\hat{P}_{c11}(X_i)&=\hat{\rho}_c(X_i)/\{\hat{\rho}_c(X_i)+\hat{\rho}_a(X_i)\}, \\
\hat{P}_{a11}(X_i)&=1-\hat{P}_{c11}(X_i).&
\end{align*}
In Theorem \ref{thm:identif_mix2_lm} (Appendix \ref{app:ident}), we show that, assuming $Q_{kll}(\cdot)$ are linear parametric functions, the mixtures of the form shown in Lemma \ref{lemma:mix} are identifiable. In Theorem \ref{thm:identif_mix2_expit} (Appendix \ref{app:ident}), we prove under regularity conditions that these mixtures are also identifiable, if we assume that $Q_{kll}(\cdot)$ are expit functions. Adapting the fitting algorithm of \citet{Xu1993}, we propose to jointly estimate $Q_{c11}(\cdot)$ and $Q_{a11}(\cdot)$ via the EM-procedure given in Algorithm \ref{algo:2}, when the outcome $Y$ is binary. This algorithm takes $(X_i,Y_i,\hat{P}_{c11}(X_i))_{i:Z_i=1,T_i=1}$ as input and intuitively, it distinguishes between the compliers and the always takers within the subset $\{Z=1,T=1\}$. Likewise, considering equation \emph{(ii)} from Lemma \ref{lemma:mix}, we propose to jointly estimate $Q_{c00}(\cdot)$ and $Q_{n00}(\cdot)$ by providing $(X_i,Y_i,\hat{P}_{c00}(X_i))_{i:Z_i=0,T_i=0}$ as input to Algorithm \ref{algo:2}. Intuitively, Algorithm \ref{algo:2} then attempts to distinguish between the compliers and the never takers within the subset $\{Z=0,T=0\}$. When the outcome $Y$ is continuous (rather than binary), we propose to use Algorithm \ref{algo:3} (rather than Algorithm \ref{algo:2}), where we fit conditional Gaussian distributions (rather than Binomial distributions) for the experts. Adaptation of these EM-procedures to fit nonparametric experts are given in Algorithm \ref{algo:5} and Algorithm \ref{algo:6} for binary and continuous outcomes respectively.

\subsection{Final step: plug-in estimation of the CACE}
We propose to plug the estimates of the functions $\rho_c,$ $Q_{c11},$ and $Q_{c00}$ into the expression from Lemma \ref{lemma:est} to obtain the following “plug-in/principal-ignorability” estimator for the CACE 
\begin{equation*}
    \widehat{\Delta}_{PI}=\frac{\sum_{i=1}^n \big\{ \hat{Q}_{c11}(X_i)-\hat{Q}_{c00}(X_i)\big\}\hat{\rho}_c(X_i)}{\sum_{i=1}^n\hat{\rho}_c(X_i)}.
\end{equation*}
Assuming generalized linear models for $\rho_k(\cdot)_{k\in\{c,a,n,d\}},$ $Q_{c11}(\cdot),\,Q_{a11}(\cdot)$ and $Q_{c00}(\cdot)\,Q_{n00}(\cdot),$ the estimator $\widehat{\Delta}_{PI}$ jointly solves a set of “stacked” estimating equations. Thus, $\widehat{\Delta}_{PI}$ is a partial M-estimator of $\psi-$type and it follows that under correct parametric model specification, it is $\sqrt{n}$-consistent and asymptotically normal \citep{stefanski2002}. This justifies the use of the bootstrap to estimate the finite sample variance of $\widehat{\Delta}_{PI}$. We provide more details on the stacked estimating equation method in Appendix \ref{app:asymp}.

\begin{algorithm}[H]
\caption{The procedure to estimate the CACE when \hyperref[as:exo]{random allocation}, \hyperref[as:consistency]{consistency}, \hyperref[as:pi]{principal ignorability}, and \hyperref[as:pos]{positivity} assumptions hold.}\label{metaalgo:1}
\begin{algorithmic}
\begin{footnotesize}
\Input Data $(X_i, Z_i, T_i, Y_i)_{1 \leq i \leq n}$.\\
\textbf{\underline{Step 1:}}
\Compute $\hat{\rho}_k(\cdot)_{k\in\{c,a,n,d\}}$ by providing the EM Algorithm \ref{algo:1} with input $(X_i, Z_i, T_i)_{1 \leq i \leq n}.$\\
\textbf{\underline{Step 2:}}
\Calculate estimates for $P_{c11}(X_i)$ and $P_{c00}(X_i)$ as
\begin{align*}
&\hat{P}_{c11}(X_i)=\hat{\rho}_c(X_i)/\{\hat{\rho}_c(X_i)+\hat{\rho}_a(X_i)\}, \\
&\hat{P}_{c00}(X_i)=\hat{\rho}_c(X_i)/\{\hat{\rho}_c(X_i)+\hat{\rho}_n(X_i)\}.&
\end{align*}

\Compute $\hat{Q}_{c11}(\cdot)$ by providing $(X_i, Y_i, \hat{P}_{c11}(X_i))_{i:Z_i=1,T_i=1}$ as input to EM Algorithm \ref{algo:2} if $Y$ is binary, or to EM Algorithm \ref{algo:3} if $Y$ is continuous.

\Compute $\hat{Q}_{c00}(\cdot)$ by providing $(X_i, Y_i, \hat{P}_{c00}(X_i))_{i:Z_i=0,T_i=0}$ as input to EM Algorithm \ref{algo:2} if $Y$ is binary, or to EM Algorithm \ref{algo:3} if $Y$ is continuous.\\
\textbf{\underline{Final step:}}
\Calculate an estimate of $\Delta$ as 
\begin{align*}
    \widehat{\Delta}_{PI}=\frac{\sum_{i=1}^n \big\{ \hat{Q}_{c11}(X_i)-\hat{Q}_{c00}(X_i)\big\}\hat{\rho}_c(X_i)}{\sum_{i=1}^n\hat{\rho}_c(X_i)}.
\end{align*}

\Endloop $\widehat{\Delta}_{PI}$
\end{footnotesize}
\end{algorithmic}
\end{algorithm}

\section{Particular cases where exclusion restriction and/or monotonicity hold}\label{sec:partcases}
In this section, we consider the particular situations where the exclusion restriction and/or monotonicity assumptions hold. We develop specific estimators for CACE that rely on exclusion restriction and/or monotonicity assumptions. For cases where these assumptions hold, our objective was to develop estimators that could enjoy lower (finite-sample) mean squared errors than the estimator $\widehat{\Delta}_{PI}$, and yet remain consistent. An overview of our estimation procedures can be found in Algorithms \ref{metaalgo:2}, \ref{metaalgo:3}, and \ref{metaalgo:4}. Below, we explain the key steps involved.

\subsection{Situations where exclusion restriction holds}
The exclusion restriction assumption can be formalized as follows.
\begin{assumption}[{\bf Exclusion restriction}]\label{as:er}
The allocated treatment is unrelated to potential outcomes for always-takers and never-takers, that is, 
\begin{align*}
Y^{s=a,z=0,t=1} &= Y^{s=a,z=1,t=1} \eqdef Y^{s=a},  \\
Y^{s=n,z=0,t=0} &= Y^{s=n,z=1,t=0} \eqdef Y^{s=n}.
\end{align*}
\end{assumption}
When the above assumption holds, the equations for $Y^{t=1}$ and $Y^{t=0}$ given in Definition \ref{def_po} reduce to:
\begin{align*}
&Y^{t=1} = S_cY^{s=c,z=1,t=1} + S_aY^{s=a} + S_dY^{s=d,z=0,t=1}, \\
&Y^{t=0} = S_cY^{s=c,z=0,t=0} + S_nY^{s=n} + S_dY^{s=d,z=1,t=0}. 
\end{align*}
Conditioning these equations with respect to $T$ and $X$ yields the following result.
\begin{lemma}\label{lemma:mix_er}
Suppose that assumptions \ref{as:exo}, \ref{as:consistency}, \ref{as:pi}, \ref{as:pos} and \ref{as:er} hold. Then,
\begin{align*}
    (i) \quad q_{\cdot 1}(X)&=P_{c\cdot 1}(X)Q_{c11}(X) + P_{a\cdot 1}(X)Q_{a}(X)+ P_{d\cdot1}(X)Q_{d01}(X), \\
    (ii) \quad q_{\cdot 0}(X)&=P_{c\cdot 0}(X)Q_{c00}(X) + P_{n\cdot 0}(X)Q_{n}(X)+ P_{d\cdot 0}(X)Q_{d10}(X)
\end{align*}
where $q_{\cdot m}(X)\eqdef \EX[Y|T=m,X],$ $P_{k\cdot m}(X)\eqdef \EX[S_k|T=m,X],$ and $Q_{k}(X)\eqdef \EX[Y^{s=k}|X].$
\end{lemma}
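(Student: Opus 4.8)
The plan is to mirror the derivation of Lemma \ref{lemma:mix}, but to condition on $T$ alone rather than on the pair $(Z,T)$. I would start from consistency (Assumption \ref{as:consistency}): on the event $\{T=1\}$ we have $Y=Y^{t=1}$, so $q_{\cdot 1}(X)=\EX[Y^{t=1}\mid T=1,X]$. Substituting the exclusion-restriction-reduced form $Y^{t=1}=S_cY^{s=c,z=1,t=1}+S_aY^{s=a}+S_dY^{s=d,z=0,t=1}$ and splitting the expectation into its three additive pieces leaves one term per stratum that can contribute to $\{T=1\}$, namely compliers, always-takers, and defiers. The companion identity (ii) follows by the same argument on $\{T=0\}$ with $Y^{t=0}=S_cY^{s=c,z=0,t=0}+S_nY^{s=n}+S_dY^{s=d,z=1,t=0}$, where the contributing strata are compliers, never-takers, and defiers.

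For each piece I would pass to the ratio form $\EX[S_k\,W\mid T=1,X]=\EX[S_k W\,\mathbbm{1}\{T=1\}\mid X]/\Prob(T=1\mid X)$ and exploit the key algebraic fact that multiplying a stratum indicator by $\mathbbm{1}\{T=1\}$ pins down the allocation: from Definition \ref{def_ttt} the one-hot structure of $\boldsymbol{S}$ gives $S_c\,\mathbbm{1}\{T=1\}=S_cZ$, $S_a\,\mathbbm{1}\{T=1\}=S_a$, and $S_d\,\mathbbm{1}\{T=1\}=S_d(1-Z)$. I would then factor the resulting numerators, for instance $\EX[S_cZ\,Y^{s=c,z=1,t=1}\mid X]$, by using randomization to detach $Z$ and principal ignorability (Assumption \ref{as:pi}(ii)) to detach $\boldsymbol{S}$ from the elementary potential outcome, yielding $\eta(X)\rho_c(X)Q_{c11}(X)$, and similarly $\rho_a(X)Q_a(X)$ and $(1-\eta(X))\rho_d(X)Q_{d01}(X)$ for the other two. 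The matching weights are obtained the same way, with denominator $e(X)=\Prob(T=1\mid X)$: using only Assumption \ref{as:pi}(i) one gets $P_{c\cdot 1}(X)=\eta(X)\rho_c(X)/e(X)$, $P_{a\cdot 1}(X)=\rho_a(X)/e(X)$, and $P_{d\cdot 1}(X)=(1-\eta(X))\rho_d(X)/e(X)$, where positivity of compliers (Assumption \ref{as:pos}) together with random allocation (Assumption \ref{as:exo}) keeps $e(X)$, and likewise $1-e(X)$, bounded away from $0$ so that all conditional expectations are well defined. Assembling the three products reproduces the stated mixture.

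I expect the one genuine subtlety to be the factoring step that removes $Z$ from the potential-outcome expectation in the complier and defier pieces: this requires $Z\indep Y^{s=k,z=l,t=m}\mid X$ jointly with $\boldsymbol{S}$, which is not literally one of the conditional-independence displays in Assumption \ref{as:pi} — those state independence of $\boldsymbol{S}$ from $Z$ and from the potential outcomes only separately. I would justify it by reading Assumption \ref{as:exo} in its natural form: since $Z$ is drawn from a $\textit{Bernoulli}(\eta(X))$ mechanism using $X$ and exogenous randomness only, it is independent of every pre-treatment quantity — the stratum and all elementary potential outcomes — given $X$. The always-taker and never-taker pieces need no such step, and this is exactly where Assumption \ref{as:er} does its work: the exclusion restriction collapses $Y^{s=a,z=0,t=1}$ and $Y^{s=a,z=1,t=1}$ into a single $Y^{s=a}$, so that after marginalizing over $Z$ (which for always-takers ranges over both values) the contribution is a clean $P_{a\cdot 1}(X)Q_a(X)$ rather than an $\eta$-weighted blend of two distinct outcome functions. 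This collapse is precisely what allows the decomposition conditioned on $T$ alone to close up.
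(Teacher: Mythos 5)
Your proof is correct, and it reaches the result by a genuinely different route from the paper's. The paper works entirely with expectations conditioned on $\{T=1,X\}$: it substitutes the exclusion-restriction-reduced form of $Y^{t=1}$, splits $\EX(S_k\,W\mid T=1,X)$ into $\EX(S_k\mid T=1,X)\,\EX(W\mid T=1,X)$ using Assumption \ref{as:pi}\emph{(ii)}, and then drops the conditioning on $T=1$ from the potential-outcome factors, justifying both steps by a d-separation argument on the graphical model of Appendix \ref{app:graph}. You instead pass to the ratio form $\EX[S_kW\mid T=1,X]=\EX[S_kW\,\mathbbm{1}\{T=1\}\mid X]/\Prob(T=1\mid X)$, use the indicator identities $S_c\mathbbm{1}\{T=1\}=S_cZ$, $S_a\mathbbm{1}\{T=1\}=S_a$, $S_d\mathbbm{1}\{T=1\}=S_d(1-Z)$ from Definition \ref{def_ttt}, and factor everything at the level of expectations conditioned on $X$ alone. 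This buys you two things: the explicit weights $P_{c\cdot 1}=\eta\rho_c/e$, $P_{a\cdot 1}=\rho_a/e$, $P_{d\cdot 1}=(1-\eta)\rho_d/e$ fall out as a byproduct (so you effectively prove Lemma \ref{lemma:P_er} in passing, which the paper keeps as a separate lemma with its own proof), and the argument is more elementary in that it avoids any appeal to d-separation. The subtlety you flag is real and well handled: detaching $Z$ from the complier and defier numerators requires $Z\indep(\boldsymbol{S},Y^{s=k,z=l,t=m})\mid X$ jointly, which Assumption \ref{as:pi} does not literally state; the paper obtains exactly this from the DAG in Appendix \ref{app:graph} (where $Z$ has $X$ as its only parent), and your generative reading of Assumption \ref{as:exo} is the same content made explicit. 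Your account of the role of Assumption \ref{as:er} — collapsing the two always-taker outcomes so the $T$-only conditioning closes up — matches the paper's use of it.
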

A proof of this lemma is included in Appendix \ref{app:proofs}. Further, we can use Bayes’ rule and the conditioning of Equation \eqref{eq:ttt} with respect to $X$ to verify the following.
\begin{lemma}\label{lemma:P_er}
Under assumptions \ref{as:exo}, \ref{as:pi}, and \ref{as:pos}, the conditional probabilities $P_{k\cdot m}(X)$ can be represented as 
\begin{align*}
 P_{c\cdot 1}(X) &=\frac{\eta(X)}{e(X)}\rho_c(X), &  P_{c\cdot 0}(X) &=\frac{1-\eta(X)}{1-e(X)}\rho_c(X), \\
    P_{a\cdot 1}(X) &=\frac{\rho_a(X)}{e(X)}, & P_{n\cdot 0}(X) &=\frac{\rho_n(X)}{1-e(X)}, \\
    P_{d\cdot 1}(X) &=\frac{1-\eta(X)}{e(X)}\rho_d(X), &  P_{d\cdot 0}(X) &=\frac{\eta(X)}{1-e(X)}\rho_d(X)
\end{align*}
\noindent where the standard propensity score can be expanded as $$e(X)=\rho_c(X)\eta(X) + \rho_a(X) + \rho_d(X)\{1-\eta(X)\}.$$
\end{lemma}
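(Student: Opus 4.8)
The plan is to derive the propensity-score expansion first and then obtain each $P_{k\cdot m}$ by a direct Bayes'-rule computation. Since every $S_k$ is a $\{0,1\}$-valued indicator, I will freely use $\EX[S_k\mid X]=\Prob(S_k=1\mid X)=\rho_k(X)$ and $P_{k\cdot m}(X)=\EX[S_k\mid T=m,X]=\Prob(S_k=1\mid T=m,X)$. First I would compute $e(X)=\EX[T\mid X]$ by taking the conditional expectation of the effectively-taken treatment in Equation~\eqref{eq:ttt}, namely $T=S_cZ+S_a+S_d(1-Z)$. The only nontrivial ingredient here is principal ignorability~(i), $Z\indep\boldsymbol{S}\mid X$, which lets me factor the cross terms as $\EX[S_cZ\mid X]=\rho_c(X)\eta(X)$ and $\EX[S_d(1-Z)\mid X]=\rho_d(X)\{1-\eta(X)\}$, while $\EX[S_a\mid X]=\rho_a(X)$. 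Summing gives the claimed expression $e(X)=\rho_c(X)\eta(X)+\rho_a(X)+\rho_d(X)\{1-\eta(X)\}$.

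Next, for each stratum $k$ and each treatment level $m$, I would apply Bayes' rule in the form
\[
P_{k\cdot m}(X)=\frac{\Prob(T=m\mid S_k=1,X)\,\rho_k(X)}{\Prob(T=m\mid X)},
\]
where the denominator is $e(X)$ when $m=1$ and $1-e(X)$ when $m=0$. The numerator factor $\Prob(T=m\mid S_k=1,X)$ is read off the deterministic form of the potential treatments: on $\{S_c=1\}$ we have $T=Z$, on $\{S_a=1\}$ we have $T\equiv1$, on $\{S_n=1\}$ we have $T\equiv0$, and on $\{S_d=1\}$ we have $T=1-Z$. Applying $Z\indep\boldsymbol{S}\mid X$ once more to replace $\Prob(Z=1\mid S_k=1,X)$ by $\eta(X)$ (and $\Prob(Z=0\mid S_k=1,X)$ by $1-\eta(X)$) then yields each of the six formulas directly: for instance $\Prob(T=1\mid S_c=1,X)=\eta(X)$ gives $P_{c\cdot1}(X)=\eta(X)\rho_c(X)/e(X)$, and $\Prob(T=1\mid S_a=1,X)=1$ gives $P_{a\cdot1}(X)=\rho_a(X)/e(X)$; the remaining cases are identical bookkeeping.

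There is no genuinely hard step here; the computation is short and mechanical once the two uses of $Z\indep\boldsymbol{S}\mid X$ are in place. The one point requiring care is well-definedness of the ratios: I would note that positivity of compliers (Assumption~\ref{as:pos}) together with random allocation (Assumption~\ref{as:exo}) guarantees $e(X)\geq\rho_c(X)\eta(X)>\epsilon_\rho\epsilon_\eta>0$ and, using $\rho_c+\rho_a+\rho_n+\rho_d=1$, that $1-e(X)=\rho_c(X)\{1-\eta(X)\}+\rho_n(X)+\rho_d(X)\eta(X)\geq\epsilon_\rho\epsilon_\eta>0$, so both Bayes denominators are bounded away from zero. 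The main thing to keep straight is simply the correct event identity for each of the six $(k,m)$ pairs---matching, e.g., $\{T=1,S_d=1\}$ with $\{Z=0\}$ and $\{T=0,S_d=1\}$ with $\{Z=1\}$---which is where a sign or indexing slip would most easily occur.
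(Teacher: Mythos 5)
Your proof is correct, and it takes a somewhat different route from the paper's. The paper first writes $P_{c\cdot 1}(X)=\EX(S_c\mid T=1,X)$, splits this over the two values of $Z$ (noting $\Prob(S_c=1\mid Z=0,T=1,X)=0$), applies Bayes' rule to $\Prob(Z=1\mid T=1,X)$, and then chains through the already-proven Lemma \ref{lemma:P} identity $P_{c11}(X)=\rho_c(X)/\{\rho_c(X)+\rho_a(X)\}$ before the factors cancel. You instead apply Bayes' rule once, directly between $S_k$ and $T$ given $X$, reading $\Prob(T=m\mid S_k=1,X)$ off the deterministic stratum-wise form of $T$ and converting $\Prob(Z=1\mid S_k=1,X)$ to $\eta(X)$ via principal ignorability (i). Your derivation of the expansion $e(X)=\rho_c(X)\eta(X)+\rho_a(X)+\rho_d(X)\{1-\eta(X)\}$ matches the paper's exactly. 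What your route buys is uniformity and self-containment: all six formulas follow from one template without invoking Lemma \ref{lemma:P}, and you additionally supply an explicit lower bound showing $e(X)$ and $1-e(X)$ are bounded away from zero, which the paper only asserts. What the paper's route buys is an explicit link between the $P_{k\cdot m}$ and the $P_{klm}$ quantities used elsewhere in the estimation pipeline. One minor caveat common to both arguments: conditioning on events such as $\{S_a=1,X=x\}$ (yours) or $\{Z=0,T=1,X=x\}$ (the paper's) can be degenerate when $\rho_a(x)=\rho_d(x)=0$; in that case both sides of the relevant identity vanish, so the formulas still hold, but a fully rigorous treatment would note this case separately.
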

A proof of this lemma is included in Appendix \ref{app:proofs}. In consequence of these results, when exclusion restriction holds, we propose an adaptation of the estimation procedure for step 2 (section \ref{subsec:step2}). 

Lemma \ref{lemma:P_er}, suggests straightforward plug-in estimators for $\{P_{k\cdot m}(X_i)\}_{1\leq i\leq n}.$  As such, equation \emph{(i)} from Lemma \ref{lemma:mix_er}, 
suggests to jointly estimate $Q_{c11}(\cdot),$ $Q_{a}(\cdot),$ and $Q_{d01}(\cdot),$ via a procedure able to fit a mixture of three experts where the proportions of the mixture are already known. We propose to do so via the EM-procedure given in Algorithm \ref{algo:7}, when the outcome $Y$ is binary. This algorithm takes $(X_i,Y_i,\hat{P}_{c\cdot1}(X_i),\hat{P}_{a\cdot1}(X_i),\hat{P}_{d\cdot1}(X_i))_{i:T_i=1}$ as input and intuitively, it distinguishes between the compliers, the always takers and the defiers within the subset $\{T=1\}$. 

Likewise, considering equation \emph{(ii)} from Lemma \ref{lemma:mix_er}, we propose to jointly estimate $Q_{c00}(\cdot),$ $Q_{n}(\cdot)$ and $Q_{d10}(\cdot)$ by providing $(X_i,Y_i,\hat{P}_{c\cdot0}(X_i),\hat{P}_{n\cdot0}(X_i), \hat{P}_{d\cdot0}(X_i))_{i:T_i=0}$ as input to Algorithm \ref{algo:7}. Intuitively, Algorithm \ref{algo:7} then attempts to distinguish between the compliers, the never takers and the defiers within the subset $\{T=0\}$. When the outcome $Y$ is continuous (rather than binary), we propose to use Algorithm \ref{algo:8} (rather than Algorithm \ref{algo:7}). Adaptation of these EM-procedures to fit nonparametric experts are given in Algorithm \ref{algo:9} and Algorithm \ref{algo:10} for binary and continuous outcomes respectively.
\begin{algorithm}
\caption{The procedure to estimate the CACE when \hyperref[as:exo]{random allocation}, \hyperref[as:consistency]{consistency}, \hyperref[as:pi]{principal ignorability}, \hyperref[as:pos]{positivity}, and \hyperref[as:er]{exclusion restriction} assumptions hold.}\label{metaalgo:2}
\begin{algorithmic}
\begin{footnotesize}
\Input Data $(X_i, Z_i, T_i, Y_i)_{1 \leq i \leq n}$.\\
\textbf{\underline{Step 1:}}
\Compute $\hat{\rho}_k(\cdot)_{k\in\{c,a,n,d\}}$ by providing the EM Algorithm \ref{algo:1} with input $(X_i, Z_i, T_i)_{1 \leq i \leq n}.$\\
\textbf{\underline{Step 2:}}
\Calculate estimates for $P_{c\cdot 1}(X_i)$, $P_{a\cdot 1}(X_i)$, $P_{d\cdot 1}(X_i)$, $P_{c\cdot 0}(X_i)$, $P_{n\cdot 0}(X_i)$, and  $P_{d\cdot 0}(X_i)$ as
\begin{align*}
\hat{P}_{c\cdot 1}(X_i) \! &= \! \frac{\hat{\eta}(X_i)}{\hat{e}(X_i)}\hat{\rho}_c(X_i), 
&\hat{P}_{c\cdot 0}(X_i) \!  &= \! \frac{1 \!- \! \hat{\eta}(X_i)}{ 1 \! - \! \hat{e}(X_i)}\hat{\rho}_c(X_i), \\
\hat{P}_{a\cdot 1}(X_i) &=\frac{\hat{\rho}_a(X_i)}{\hat{e}(X_i)},
&\hat{P}_{n\cdot 0}(X_i) &=\frac{\hat{\rho}_n(X_i)}{1-\hat{e}(X_i)}, \\
\hat{P}_{d\cdot 1}(X_i) &=\frac{1-\hat{\eta}(X_i)}{\hat{e}(X_i)}\hat{\rho}_d(X_i),
& \hat{P}_{d\cdot 0}(X_i) &=\frac{\hat{\eta}(X_i)}{1-\hat{e}(X_i)}\hat{\rho}_d(X_i)
\end{align*}
where 
\begin{align*}
\hat{e}(X_i)&=\hat{\rho}_c(X_i)\hat{\eta}(X_i) + \hat{\rho}_a(X_i) + \hat{\rho}_d(X_i)\{1-\hat{\eta}(X_i)\}\text{ and}&\\ \hat{\eta}(X_i)&=\hat{\EX}[Z|X_i].
\end{align*}

\Compute $\hat{Q}_{c11}(\cdot)$ by providing: 
\begin{equation*}
(X_i, Y_i, \hat{P}_{c\cdot 1}(X_i), \hat{P}_{a\cdot 1}(X_i), \hat{P}_{d\cdot 1}(X_i))_{i:T_i=1} 
\end{equation*}
as input to EM Algorithm \ref{algo:7} if $Y$ is binary, or to EM Algorithm \ref{algo:8} if $Y$ is continuous.

\Compute $\hat{Q}_{c00}(\cdot)$ by providing:
\begin{equation*}
(X_i, Y_i, \hat{P}_{c\cdot 0}(X_i), \hat{P}_{n\cdot 0}(X_i), \hat{P}_{d\cdot 0}(X_i))_{i:T_i=0}
\end{equation*}
as input to EM Algorithm \ref{algo:7} if $Y$ is binary, or to EM Algorithm \ref{algo:8} if $Y$ is continuous.\\
\textbf{\underline{Final step:}}
\Calculate an estimate of $\Delta$ as 
\begin{align*}
    \widehat{\Delta}_{PI}^{ER}=\frac{\sum_{i=1}^n \big\{ \hat{Q}_{c11}(X_i)-\hat{Q}_{c00}(X_i)\big\}\hat{\rho}_c(X_i)}{\sum_{i=1}^n\hat{\rho}_c(X_i)}.
\end{align*}

\Endloop $\widehat{\Delta}_{PI}^{ER}$
\end{footnotesize}
\end{algorithmic}
\end{algorithm}

\subsection{Situations where monotonicity holds}
The monotonicity assumption can be expressed as follows.
\begin{assumption}[{\bf Monotonicity}]\label{as:monotonicity} There are no defiers in the population, i.e.,
\begin{align*}
&\rho_d(\cdot)\equiv0.
\end{align*}
\end{assumption}
When the above assumption holds, the relation in Lemma \ref{lemma:mix4} simplifies as follows. 
\begin{lemma}\label{lemma:monotonicity}
Under assumptions \ref{as:exo}, \ref{as:pi}, and \ref{as:monotonicity}, the mechanism generating the treatment effectively taken is given by
\begin{align*}
    \pi(X,Z)= \sum_{k\in\{c,a,n\}}\rho_k(X)\mu_k(Z).
\end{align*}
\end{lemma}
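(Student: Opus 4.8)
The plan is to obtain this statement as an immediate specialization of Lemma~\ref{lemma:mix4}. Since the hypotheses of the present lemma (Assumptions~\ref{as:exo} and~\ref{as:pi}) are exactly the hypotheses of Lemma~\ref{lemma:mix4}, that lemma applies verbatim and yields
\begin{equation*}
\pi(X,Z)=\sum_{k\in\{c,a,n,d\}}\rho_k(X)\mu_k(Z).
\end{equation*}
It then remains only to invoke the additional monotonicity hypothesis (Assumption~\ref{as:monotonicity}), which asserts $\rho_d(\cdot)\equiv 0$. Consequently the defier term $\rho_d(X)\mu_d(Z)$ vanishes identically in $X$ and $Z$, and the four-term sum collapses to the three-term sum over $k\in\{c,a,n\}$, giving the claim.

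For a self-contained derivation that does not route through Lemma~\ref{lemma:mix4}, I would instead start from Definition~\ref{def_ttt}, write $T=S_cZ+S_a+S_d(1-Z)$, and take the conditional expectation given $(X,Z)$ to obtain $\pi(X,Z)=Z\,\EX(S_c|X,Z)+\EX(S_a|X,Z)+(1-Z)\,\EX(S_d|X,Z)$. Assumption~\ref{as:pi}(i), namely $Z\indep\boldsymbol{S}|X$, lets me replace each $\EX(S_k|X,Z)$ by $\rho_k(X)$, yielding $\pi(X,Z)=\rho_c(X)Z+\rho_a(X)+\rho_d(X)(1-Z)$. Recognizing $\mu_c(Z)=Z$, $\mu_a(Z)=1$, $\mu_n(Z)=0$, and $\mu_d(Z)=1-Z$ rewrites this as the four-term mixture (the $\rho_n\mu_n$ summand being zero because $\mu_n\equiv 0$), after which monotonicity removes the defier contribution exactly as above.

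The main obstacle is, frankly, that there is none: the substantive content (the mixture representation and its reliance on the conditional independence $Z\indep\boldsymbol{S}|X$) was already established in Lemma~\ref{lemma:mix4}, so the present statement is a one-line corollary obtained by setting $\rho_d\equiv 0$. The only point warranting a moment's care is confirming that the hypotheses genuinely coincide---here both invoke Assumptions~\ref{as:exo} and~\ref{as:pi}, with monotonicity merely added---so that Lemma~\ref{lemma:mix4} may be applied without any further work.
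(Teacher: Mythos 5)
Your proposal is correct and matches the paper's own proof, which simply states that the result follows directly from Lemma~\ref{lemma:mix4} and Assumption~\ref{as:monotonicity} (i.e., $\rho_d\equiv 0$ annihilates the defier term). Your optional self-contained derivation also faithfully reproduces the argument used in the paper's proof of Lemma~\ref{lemma:mix4}, so nothing further is needed.
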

\begin{proof}
    Follows directly from Lemma \ref{lemma:mix4} and Assumption \ref{as:monotonicity}.
\end{proof}
Building on Lemma \ref{lemma:monotonicity}, when the monotonicity assumption holds, we propose an adaptation of step 1 (section \ref{subsec:step1}) and propose to jointly estimate $\rho_k(\cdot)_{k\in\{c,a,n\}}$ via the procedure given in Algorithm \ref{algo:11}. In Algorithm \ref{algo:12}, we adapt this EM-procedure to fit nonparametric functions for $\rho_k(\cdot)_{k\in\{c,a,n\}}$. 
\begin{algorithm}[H]
\caption{The procedure to estimate the CACE when \hyperref[as:exo]{random allocation}, \hyperref[as:consistency]{consistency}, \hyperref[as:pi]{principal ignorability}, \hyperref[as:pos]{positivity}, and \hyperref[as:monotonicity]{monotonicity} assumptions hold.}\label{metaalgo:3}
\begin{algorithmic}
\begin{footnotesize}
\Input Data $(X_i, Z_i, T_i, Y_i)_{1 \leq i \leq n}$.\\
\textbf{\underline{Step 1:}}
\Compute $\hat{\rho}_k(\cdot)_{k\in\{c,a,n\}}$ by providing the EM Algorithm \ref{algo:11} with input $(X_i, Z_i, T_i)_{1 \leq i \leq n}.$\\
\textbf{\underline{Step 2:}}
\Calculate estimates for $P_{c11}(X_i)$ and $P_{c00}(X_i)$ as
\begin{align*}
\hat{P}_{c11}(X_i)&=\hat{\rho}_c(X_i)/\{\hat{\rho}_c(X_i)+\hat{\rho}_a(X_i)\},\\ \hat{P}_{c00}(X_i)&=\hat{\rho}_c(X_i)/\{\hat{\rho}_c(X_i)+\hat{\rho}_n(X_i)\}.&
\end{align*}

\Compute $\hat{Q}_{c11}(\cdot)$ by providing $(X_i, Y_i, \hat{P}_{c11}(X_i))_{i:Z_i=1,T_i=1}$ as input to EM Algorithm \ref{algo:2} if $Y$ is binary, or to EM Algorithm \ref{algo:3} if $Y$ is continuous.

\Compute $\hat{Q}_{c00}(\cdot)$ by providing $(X_i, Y_i, \hat{P}_{c00}(X_i))_{i:Z_i=0,T_i=0}$ as input to EM Algorithm \ref{algo:2} if $Y$ is binary, or to EM Algorithm \ref{algo:3} if $Y$ is continuous.\\
\textbf{\underline{Final step:}}
\Calculate an estimate of $\Delta$ as 
\begin{align*}
    \widehat{\Delta}_{PI}^{MO}=\frac{\sum_{i=1}^n \big\{ \hat{Q}_{c11}(X_i)-\hat{Q}_{c00}(X_i)\big\}\hat{\rho}_c(X_i)}{\sum_{i=1}^n\hat{\rho}_c(X_i)}.
\end{align*}

\Endloop $\widehat{\Delta}_{PI}^{MO}$
\end{footnotesize}
\end{algorithmic}
\end{algorithm}

\subsection{Situations where monotonicity and exclusion restriction hold}
When both monotonicity and exclusion restriction assumptions hold, Lemmas \ref{lemma:mix_er} and \ref{lemma:P_er} straightforwardly simplify as follows.
\begin{lemma}\label{lemma:P_er_mon}
Under assumption \ref{as:exo}, \ref{as:pi}, \ref{as:pos}, and \ref{as:monotonicity}, the conditional probabilities $P_{k\cdot m}(X)$ can be represented as
\begin{align*}
    P_{c\cdot 1}(X) &=\frac{\eta(X)}{e(X)}\rho_c(X), &P_{c\cdot 0}(X) &=\frac{1-\eta(X)}{1-e(X)}\rho_c(X), \\
    P_{a\cdot 1}(X) &=\frac{\rho_a(X)}{e(X)}, &P_{n\cdot 0}(X) &=\frac{\rho_n(X)}{1-e(X)}, \\
    P_{d\cdot 1}(X) &=0, &P_{d\cdot 0}(X) &=0
\end{align*}
where the standard propensity score can be expanded as $$e(X)=\rho_c(X)\eta(X) + \rho_a(X).$$
\begin{proof}
    Follows directly from Lemma \ref{lemma:P_er} and Assumption \ref{as:monotonicity}.
\end{proof}
\end{lemma}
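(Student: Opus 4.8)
The plan is to treat this statement as a direct specialization of Lemma~\ref{lemma:P_er}, whose hypotheses (Assumptions~\ref{as:exo}, \ref{as:pi}, and~\ref{as:pos}) are all included among the hypotheses assumed here. First I would emphasize a structural point: the quantities $P_{k\cdot m}(X)=\EX[S_k\mid T=m,X]$ describe only the compliance mechanism and make no reference to the potential outcomes. This is precisely why the exclusion restriction (Assumption~\ref{as:er}) does not appear in the present hypotheses, even though these probabilities were first introduced alongside it in Lemma~\ref{lemma:mix_er}; the formulas in Lemma~\ref{lemma:P_er} already hold under Assumptions~\ref{as:exo}, \ref{as:pi}, and~\ref{as:pos} alone, and I may invoke them verbatim.

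The substantive content is then the single substitution furnished by monotonicity. Assumption~\ref{as:monotonicity} asserts $\rho_d(\cdot)\equiv 0$, so I would insert this identity into the six expressions of Lemma~\ref{lemma:P_er}. Both defier probabilities carry $\rho_d(X)$ as a multiplicative factor, namely $P_{d\cdot 1}(X)=\frac{1-\eta(X)}{e(X)}\rho_d(X)$ and $P_{d\cdot 0}(X)=\frac{\eta(X)}{1-e(X)}\rho_d(X)$, and hence both collapse to $0$. In the expansion $e(X)=\rho_c(X)\eta(X)+\rho_a(X)+\rho_d(X)\{1-\eta(X)\}$ the final summand vanishes, leaving $e(X)=\rho_c(X)\eta(X)+\rho_a(X)$. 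The four remaining formulas for $P_{c\cdot 1}$, $P_{c\cdot 0}$, $P_{a\cdot 1}$, and $P_{n\cdot 0}$ are transcribed unchanged, now read against this simplified $e(X)$.

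I do not expect any genuine obstacle here, as the result is a corollary obtained by plugging the true value $\rho_d\equiv 0$ into an already-valid identity; in particular the value of $e(X)$ is not altered by the substitution, only its written form is simplified, so well-definedness of every ratio is inherited directly from Lemma~\ref{lemma:P_er}. For completeness I would record the elementary bound confirming the denominators stay away from the endpoints: Assumptions~\ref{as:exo} and~\ref{as:pos} give $e(X)\ge\rho_c(X)\eta(X)>\epsilon_\rho\epsilon_\eta>0$, while the one-hot structure $\rho_c+\rho_a+\rho_n+\rho_d=1$ with $\rho_d\equiv 0$ and $\eta(X)<1$ yields $e(X)=\rho_c\eta+\rho_a<\rho_c+\rho_a\le 1$, so that $1-e(X)>0$ as well. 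The argument therefore amounts to this substitution together with the bookkeeping just described.
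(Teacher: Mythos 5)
Your proposal is correct and matches the paper's proof, which simply notes that the result follows directly from Lemma \ref{lemma:P_er} by substituting $\rho_d(\cdot)\equiv 0$ from Assumption \ref{as:monotonicity}. Your additional observations—that the exclusion restriction is not needed because Lemma \ref{lemma:P_er} already holds without it, and that the denominators remain bounded away from $0$ and $1$—are accurate but go beyond what the paper records.
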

\begin{lemma}\label{lemma:mix_er_mon}
Under assumptions \ref{as:exo}, \ref{as:consistency}, \ref{as:pi}, \ref{as:pos}, \ref{as:er} and \ref{as:monotonicity}, the following relations hold
\begin{align*}
    (i) \quad q_{\cdot 1}(X)&=P_{c\cdot 1}(X)Q_{c11}(X) + P_{a\cdot 1}(X)Q_{a}(X), \\
    (ii) \quad q_{\cdot 0}(X)&=P_{c\cdot 0}(X)Q_{c00}(X) + P_{n\cdot 0}(X)Q_{n}(X).
\end{align*}
\begin{proof}
    Follows directly from Lemma \ref{lemma:mix_er} and Lemma \ref{lemma:P_er_mon}.
\end{proof}
\end{lemma}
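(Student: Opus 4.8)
The plan is to obtain this result as a direct specialization of Lemma \ref{lemma:mix_er}. Observe first that the hypotheses of the present lemma---assumptions \ref{as:exo}, \ref{as:consistency}, \ref{as:pi}, \ref{as:pos}, \ref{as:er}---already contain every assumption required for Lemma \ref{lemma:mix_er} to apply. I may therefore take as given the three-term mixture representations
\begin{align*}
    q_{\cdot 1}(X)&=P_{c\cdot 1}(X)Q_{c11}(X) + P_{a\cdot 1}(X)Q_{a}(X)+ P_{d\cdot1}(X)Q_{d01}(X), \\
    q_{\cdot 0}(X)&=P_{c\cdot 0}(X)Q_{c00}(X) + P_{n\cdot 0}(X)Q_{n}(X)+ P_{d\cdot 0}(X)Q_{d10}(X),
\end{align*}
in which each of $q_{\cdot 1}$ and $q_{\cdot 0}$ carries one extra term indexed by the defier stratum relative to the target identities (i) and (ii).

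The only remaining work is to show that these two defier terms vanish once monotonicity is imposed. For this I would invoke Lemma \ref{lemma:P_er_mon}, whose hypotheses (assumptions \ref{as:exo}, \ref{as:pi}, \ref{as:pos}, \ref{as:monotonicity}) are again a subset of the present ones. That lemma yields $P_{d\cdot 1}(X)=0$ and $P_{d\cdot 0}(X)=0$ as an immediate consequence of $\rho_d(\cdot)\equiv 0$. Substituting these identities into the two displays above annihilates the contributions $P_{d\cdot1}(X)Q_{d01}(X)$ and $P_{d\cdot0}(X)Q_{d10}(X)$, leaving precisely the two-term mixtures asserted in (i) and (ii).

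Since every step is the substitution of one previously established identity into another, I anticipate no genuine obstacle. The only point deserving a moment's care is bookkeeping: one must confirm that the assumption set of this lemma dominates the assumption sets of both Lemma \ref{lemma:mix_er} and Lemma \ref{lemma:P_er_mon}, so that both may be applied without additional hypotheses. One should also note that the vanishing products are genuinely zero rather than indeterminate---the elementary potential-outcome expectations $Q_{d01}(X)$ and $Q_{d10}(X)$ are defined through principal ignorability (Assumption \ref{as:pi}(ii)) as expectations conditional on $X$ alone, hence remain finite irrespective of the defier stratum having zero mass, so multiplication by the vanishing probabilities yields exactly zero.
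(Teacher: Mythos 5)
Your proposal is correct and follows exactly the route the paper takes: apply Lemma \ref{lemma:mix_er} to obtain the three-term mixtures, then use $P_{d\cdot 1}(X)=P_{d\cdot 0}(X)=0$ from Lemma \ref{lemma:P_er_mon} to eliminate the defier terms. The paper's proof is the one-line "follows directly from" version of the same argument, so your write-up merely makes the bookkeeping explicit.
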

In Lemma \ref{lemma:mix_er_mon}, Equation \emph{(i)} suggests to jointly estimate $Q_{c11}(\cdot)$ and $Q_{a}(\cdot)$ via a procedure able to fit a mixture of two experts where the proportions of the mixture are already known. Accordingly, when the outcome $Y$ is binary, we propose to provide the following inputs $(X_i,Y_i,\hat{P}_{c\cdot1}(X_i))_{i:T_i=1}$ to an EM Algorithm such as \ref{algo:2}. Likewise, to jointly estimate $Q_{c00}(\cdot)$ and $Q_{n}(\cdot),$ inputs $(X_i,Y_i,\hat{P}_{c\cdot0}(X_i))_{i:T_i=0}$ should be provided. When the outcome $Y$ is continuous, an EM-procedure such as Algorithm \ref{algo:3} can be used. Nonparametric alternatives to these procedures are available as Algorithm \ref{algo:5} and \ref{algo:6} respectively.

\begin{algorithm}
\caption{The procedure to estimate the CACE when \hyperref[as:exo]{random allocation}, \hyperref[as:consistency]{consistency}, \hyperref[as:pi]{principal ignorability}, \hyperref[as:pos]{positivity}, \hyperref[as:er]{exclusion restriction}, and \hyperref[as:monotonicity]{monotonicity} assumptions hold.}\label{metaalgo:4}
\begin{algorithmic}
\begin{footnotesize}
\Input Data $(X_i, Z_i, T_i, Y_i)_{1 \leq i \leq n}$.\\
\textbf{\underline{Step 1:}}
\Compute $\hat{\rho}_k(\cdot)_{k\in\{c,a,n\}}$ by providing the EM Algorithm \ref{algo:11} with input $(X_i, Z_i, T_i)_{1 \leq i \leq n}.$\\
\textbf{\underline{Step 2:}}
\Calculate estimates for $P_{c\cdot 1}(X_i)$ and $P_{c\cdot 0}(X_i)$ as
\begin{align*}
\hat{P}_{c\cdot 1}(X_i) &=\frac{\hat{\eta}(X_i)}{\hat{e}(X_i)}\hat{\rho}_c(X_i), 
&\hat{P}_{c\cdot 0}(X_i) &=\frac{1-\hat{\eta}(X_i)}{1-\hat{e}(X_i)}\hat{\rho}_c(X_i)
\end{align*}
where 
\begin{align*}
&\hat{e}(X_i)=\hat{\rho}_c(X_i)\hat{\eta}(X_i) + \hat{\rho}_a(X_i)&\text{and}&&\hat{\eta}(X_i)=\hat{\EX}[Z|X_i].&
\end{align*}

\Compute $\hat{Q}_{c11}(\cdot)$ by providing $(X_i, Y_i, \hat{P}_{c\cdot 1}(X_i))_{i:T_i=1}$ as input to an EM Algorithm such as \ref{algo:2} if $Y$ is binary, or \ref{algo:3} if $Y$ is continuous.

\Compute $\hat{Q}_{c00}(\cdot)$ by providing $(X_i, Y_i, \hat{P}_{c\cdot 0}(X_i))_{i:T_i=0}$ as input to an EM Algorithm such as \ref{algo:2} if $Y$ is binary, or \ref{algo:3} if $Y$ is continuous.\\
\textbf{\underline{Final step:}}
\Calculate an estimate of $\Delta$ as 
\begin{align*}
    \widehat{\Delta}_{PI,MO}^{ER}=\frac{\sum_{i=1}^n \big\{ \hat{Q}_{c11}(X_i)-\hat{Q}_{c00}(X_i)\big\}\hat{\rho}_c(X_i)}{\sum_{i=1}^n\hat{\rho}_c(X_i)}.
\end{align*}

\Endloop $\widehat{\Delta}_{PI,MO}^{ER}$
\end{footnotesize}
\end{algorithmic}
\end{algorithm}

\section{Simulations}\label{sec:sim}
\subsection{Description}
We conduct a simulation study to evaluate the finite sample properties of the CACE estimation methodologies detailed in Algorithms \ref{metaalgo:1}, \ref{metaalgo:2}, \ref{metaalgo:3} and \ref{metaalgo:4}. To this end, we generate a target population of 10 million individuals from which we draw 1000 random samples of size $n = 2000,4000$ and $10000.$ The datasets comprise $14$ correlated covariates: $7$ Bernoulli distributed and $7$ log-normally distributed. We vary the data-generating mechanism to consider all four situations where exclusion restriction and monotonicity assumptions either do or do not hold. We also consider scenarios where the required parametric models are either well specified or misspecified as a result of two relevant variables being omitted. The full description of our data-generating mechanism is provided in Appendix \ref{app:sim_desc}. We compare our estimation methodologies to two estimators from the instrumental variable literature i.e., the standard Wald estimator \citep{angrist1996, wald1940} and the IV matching estimator \citep[equation 14]{frolich2007} (\emph{see} Appendix \ref{app:sim_iv}). In total, we examine 144 different combinations of data generating scenario $\times$ estimator $\times$ specification choice $\times$ sample size. 

\subsection{Results}
\begin{table*}
\caption{Results of the simulation study for misspecified parametric models. \label{tab:sim_res}}
\centering
\scalebox{.55}{
\begin{tabular}{lcccccccccccc}
\hline
 Assumptions  &   \multicolumn{3}{c}{Scenario 1}  &  \multicolumn{3}{c}{Scenario 2}  &  \multicolumn{3}{c}{Scenario 3}  & \multicolumn{3}{c}{Scenario 4} \\ 
Principal ignorability &  \multicolumn{3}{c}{$+$}  &  \multicolumn{3}{c}{$+$}  &  \multicolumn{3}{c}{$+$}  &  \multicolumn{3}{c}{$+$}  \\
Exclusion restriction &  \multicolumn{3}{c}{$-$}  &  \multicolumn{3}{c}{$+$}  &  \multicolumn{3}{c}{$-$}  &  \multicolumn{3}{c}{$+$}  \\
Monotonicity &  \multicolumn{3}{c}{$-$}  &  \multicolumn{3}{c}{$-$}  &  \multicolumn{3}{c}{$+$}  &  \multicolumn{3}{c}{$+$}  \\ \hline
Estimator & Bias (\%) & SE (\%) & RMSE  (\%) & Bias (\%) & SE (\%) & RMSE  (\%) & Bias (\%) & SE (\%) & RMSE  (\%) & Bias (\%) & SE (\%) & RMSE  (\%) \\ \hline 
$n=2\,000$ & & & & & & & & & & & &  \\ 
$\widehat{\Delta}_{PI}$ & $-$3.27 & {\bf 4.88} & {\bf 5.88} & $-$3.97 & {\bf 4.01} & {\bf 5.64} & $-$3.46 &  4.13 &  5.39 & $-$4.06 & 3.19 & 5.16 \\ 
$\widehat{\Delta}_{PI}^{ER}$ & $-$7.63 & 6.51 & 10.03 & $-$7.93 & 6.12 & 10.02 & $-$3.62 & 4.88 & 5.39 & $-$4.06 & 3.19 & 5.16 \\ 
$\widehat{\Delta}_{PI,MO}$ & {\bf $-$1.17} & 7.04 & 7.14 & {\bf $-$3.09} & 5.86 & 6.63 & $-$2.67 & 4.35 & {\bf 5.10} & $-$3.13 & 3.38 & 4.60 \\ 
$\widehat{\Delta}_{PI,MO}^{ER}$ & $-$4.97 & 8.94 & 10.23 & $-$3.37 & 8.12 & 8.80 & {\bf $-$2.12} & 4.92 & 5.36 & $-$2.25 & 4.01 & 4.59 \\ 
$\widehat{\Delta}_{IV\,\text{matching}}$ & 57.57 &  7.17 & 58.01 & 39.55 & 6.56 & 40.09 & 10.78 & 3.46 & 11.32 & 0.13 & 3.10 & 3.10 \\ 
$\widehat{\Delta}_{IV\,\text{Wald}}$ & 57.52 & 7.09 & 57.96 & 39.53 & 6.53 & 40.07 & 10.76 & {\bf 3.41} & 11.29 & {\bf 0.12} & {\bf 3.08} & {\bf 3.09} \\ 

$n=5\,000$ & & & & & & & & & &  \\ 
$\widehat{\Delta}_{PI}$ & -2.18 & {\bf 3.38} & {\bf 4.02} & $-$2.64 & 2.58 & 3.69 & $-$2.68 & 2.78 & 3.86 & -3.12 & 2.14 & 3.78\\ 
$\widehat{\Delta}_{PI}^{ER}$ & $-$5.71 & 4.14 & 7.05 & $-$5.00 & 3.82 & 6.29 & $-$2.89 & 2.97 & 4.14 & $-$3.00 & 2.38 & 3.83 \\ 
$\widehat{\Delta}_{PI,MO}$ & {\bf $-$0.73} & 4.60 & 4.65 & $-$1.82  & 3.79 & 4.21 & {\bf $-$1.71} & 2.88 & {\bf 3.35} & $-$2.25 &  2.21 & 3.16 \\ 
$\widehat{\Delta}_{PI,MO}^{ER}$ & $-$3.34 & 5.30 & 6.27 & {\bf $-$1.64} & 4.47 & 4.77 & $-$2.00 & 3.02 & 3.62 & $-$1.76 & 2.38 & 2.97 \\ 
$\widehat{\Delta}_{IV\,\text{matching}}$ & 57.29 & 4.59 & 57.47 & 39.36 & 4.11 & 39.57 & 10.64 & 2.12 & 10.84 & {\bf 0.02} & 1.98 & 1.98 \\ 
$\widehat{\Delta}_{IV\,\text{Wald}}$ & 57.28 & 4.56 & 57.46 & 39.36 & 4.08 & 39.57 & 10.64 & {\bf 2.11} & 10.85 & 0.03 & {\bf 1.97} &  {\bf 1.97}  \\ 

$n=10\,000$ & & & & & & & & & & & &  \\ 
$\widehat{\Delta}_{PI}$ & $-$1.78 & {\bf 2.51} & {\bf 3.08} & $-$2.42 & {\bf 1.97} & {\bf 3.12} & $-$2.58 &  1.99 &  3.25 & $-$3.06 & 1.55 & 3.43 \\ 
$\widehat{\Delta}_{PI}^{ER}$ & $-$5.31 & 3.09 & 6.14 & $-$4.52 & 2.59 & 5.21 & $-$2.92 & 2.08 & 3.59 & $-$3.05 & 1.77 & 3.52 \\ 
$\widehat{\Delta}_{PI,MO}$ & {\bf $-$0.91} & 3.32 & 3.44 & {\bf $-$1.92} & 2.49 & 3.15 & {\bf  $-$1.60} & 2.05 & {\bf 2.61} & $-$2.15 & 1.61 & 2.68 \\ 
$\widehat{\Delta}_{PI,MO}^{ER}$ & $-$2.77 & 3.57 & 4.52 & $-$1.27 & 3.04 & 3.30 & $-$1.96 & 2.07 & 2.85 & $-$1.61 & 1.65 & 2.30 \\ 
$\widehat{\Delta}_{IV\,\text{matching}}$ & 57.21 &  3.27 & 57.31 & 39.25 & 2.94 & 39.36 & 10.58 & 1.54 & 10.70  & $-$0.02 & 1.40 & 1.40 \\ 
$\widehat{\Delta}_{IV\,\text{Wald}}$ & 57.22 & 3.26 & 57.32 & 39.26 & 2.94 & 39.37 & 10.59& {\bf 1.53} & 10.70 & {\bf $-$0.02} & {\bf 1.40} & {\bf 1.40} \\ 
\hline
\end{tabular}
}
{\small {\raggedright Scenario 1, $\EX[Y^{t=1}-Y^{t=0}]= -2.13\% $,  $\Delta= 20.31\% $, $\EX[(Y^{s=a,z=0,t=1}-Y^{s=a,z=1,t=1})^2]= 60.74\% $, $\EX[(Y^{s=n,z=1,t=0}-Y^{s=a,z=0,t=0})^2]= 54.90\% $, $\EX[\rho_c(X)]= 55.08\% $, and $\EX[\rho_d(X)]= 14.36\% $.  \par}
{\raggedright Scenario 2, $\EX[Y^{t=1}-Y^{t=0}]= -5.79\% $,  $\Delta= 20.31\% $, $\EX[(Y^{s=a,z=0,t=1}-Y^{s=a,z=1,t=1})^2]= 0 $, $\EX[(Y^{s=n,z=1,t=0}-Y^{s=a,z=0,t=0})^2]= 0 $, $\EX[\rho_c(X)]= 55.08\% $, and $\EX[\rho_d(X)]= 14.36\% $.  \par}
{\raggedright Scenario 3, $\EX[Y^{t=1}-Y^{t=0}]= 13.54\% $,  $\Delta= 20.31\% $, $\EX[(Y^{s=a,z=0,t=1}-Y^{s=a,z=1,t=1})^2]= 60.74\% $, $\EX[(Y^{s=n,z=1,t=0}-Y^{s=a,z=0,t=0})^2]= 54.90\% $, $\EX[\rho_c(X)]= 65.30\% $, and $\EX[\rho_d(X)]=0 $. \par}
{\raggedright  Scenario 4, $\EX[Y^{t=1}-Y^{t=0}]= 10.08\% $,  $\Delta= 20.31\% $, $\EX[(Y^{s=a,z=0,t=1}-Y^{s=a,z=1,t=1})^2]= 0$, $\EX[(Y^{s=n,z=1,t=0}-Y^{s=a,z=0,t=0})^2]= 0 $, $\EX[\rho_c(X)]= 56.30\% $, and $\EX[\rho_d(X)]= 0 $.  \par}}

\end{table*}

Results across multiple data-generating scenarios highlight the estimators’ relative strengths and disadvantages (Figure \ref{fig:sim_res}). In terms of Root Mean Squared Error (RMSE), under misspecified parametric models, no estimator achieved the best performance across all data-generating scenarios (Table \ref{tab:sim_res}).
\begin{figure}
\centering
\scalebox{.55}{
\includegraphics{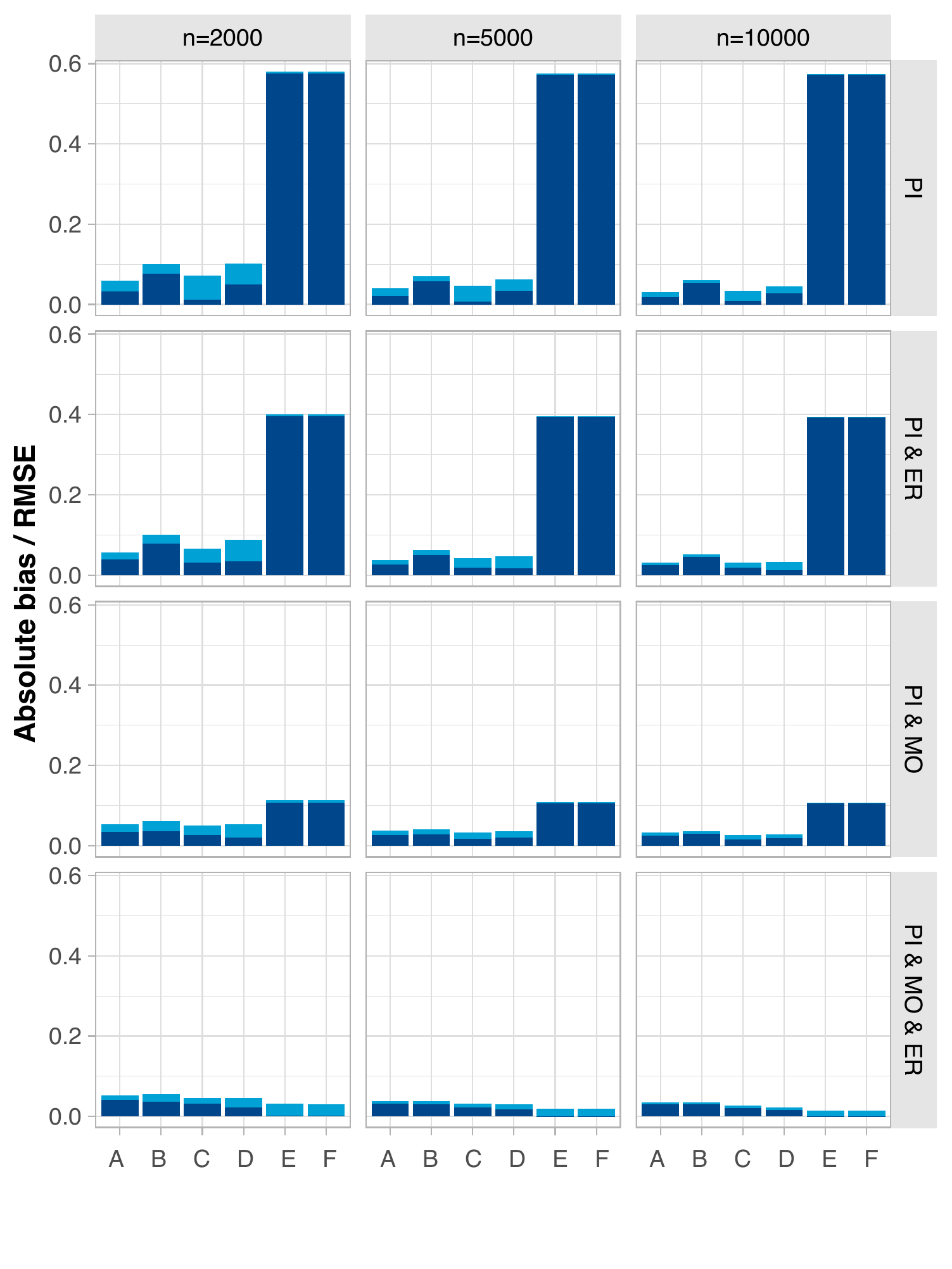}}
\caption{Estimators' absolute bias and Root Mean Squared Error (RMSE) under misspecified parametric models across twelve scenario/sample size combinations.}
\label{fig:sim_res}
\scriptsize  Absolute bias is the darker portion of each bar; RMSE corresponds to the total bar size. Letters A, B, C, D, E and F indicate the estimators $\widehat{\Delta}_{PI}$, $\widehat{\Delta}_{PI}^{ER}$, $\widehat{\Delta}_{PI,MO}$, $\widehat{\Delta}_{PI,MO}^{ER}$, $\widehat{\Delta}_{IV matching}$, and $\widehat{\Delta}_{IV wald}$ respectively. Abbreviations: PI = Principal Ignorability (Scenario 1), PI \& ER = Principal Ignorability and Exclusion Restriction (Scenario 2), PI \& MO= Principal Ignorability and Monotonicity (Scenario 3), PI \& ER \& MO = Principal Ignorability, Exclusion Restriction and Monotonicity (Scenario 4).
\end{figure}
However, as anticipated, the estimator $\widehat{\Delta}_{PI}$ achieved the lowest RMSE in the scenario where neither monotonicity nor exclusion restriction holds (Scenario 1), while in the scenario where monotonicity holds but exclusion restriction does not (Scenario 3), the estimator $\widehat{\Delta}_{PI,MO}$ did. More surprisingly, in the scenario where exclusion restriction holds but monotonicity does not, the estimator $\widehat{\Delta}_{PI}^{ER}$ achieved the worst RMSE among our proposed estimation methodologies. This may be due to the fact that the requirement for $\widehat{\Delta}_{PI}^{ER}$ to fit mixtures of three experts (as in Lemma \ref{lemma:mix_er}) rather than two (as in Lemma \ref{lemma:mix}) was not compensated by the use of more data.\footnote{Note that in Lemma \ref{lemma:mix_er}, the conditioning is on $\{T=1\}$ or $\{T=0\}$ whereas in Lemma \ref{lemma:mix} the conditioning is on $\{Z=0,T=0\}$ or $\{Z=1,T=1\}$. In practice, this means that in step 2, the $\widehat{\Delta}_{PI}^{ER}$ (and $\widehat{\Delta}_{PI,MO}^{ER}$) estimator(s) uses more data than $\widehat{\Delta}_{PI}$ (and $\widehat{\Delta}_{PI,MO}$) to fit the mixture of experts.} In the scenario where both monotonicity and exclusion restriction hold (Scenario 4), the estimator $\widehat{\Delta}_{PI,MO}^{ER}$ achieved the lowest RMSE, close to the performance of instrumental variable estimators $\widehat{\Delta}_{IV\,matching}$ and $\widehat{\Delta}_{IV wald},$ which are specifically designed for that setting. All of our four proposed estimation methodologies outperformed instrumental variable estimators in the scenarios where either monotonicity, exclusion restriction, or both assumptions were violated. In these situations (Scenarios 2, 3, and 1 respectively), the $\widehat{\Delta}_{IV\,matching}$ and $\widehat{\Delta}_{IV wald}$ estimators exhibited high biases; while in comparison, our proposed estimators had much lower finite sample biases—despite model misspecification. For finite samples, the rate of convergence of our proposed estimators appeared close to $\sqrt{n}$ (Figure \ref{fig:sim_conv}). 
\begin{figure}
\centering
\scalebox{.55}{
\includegraphics{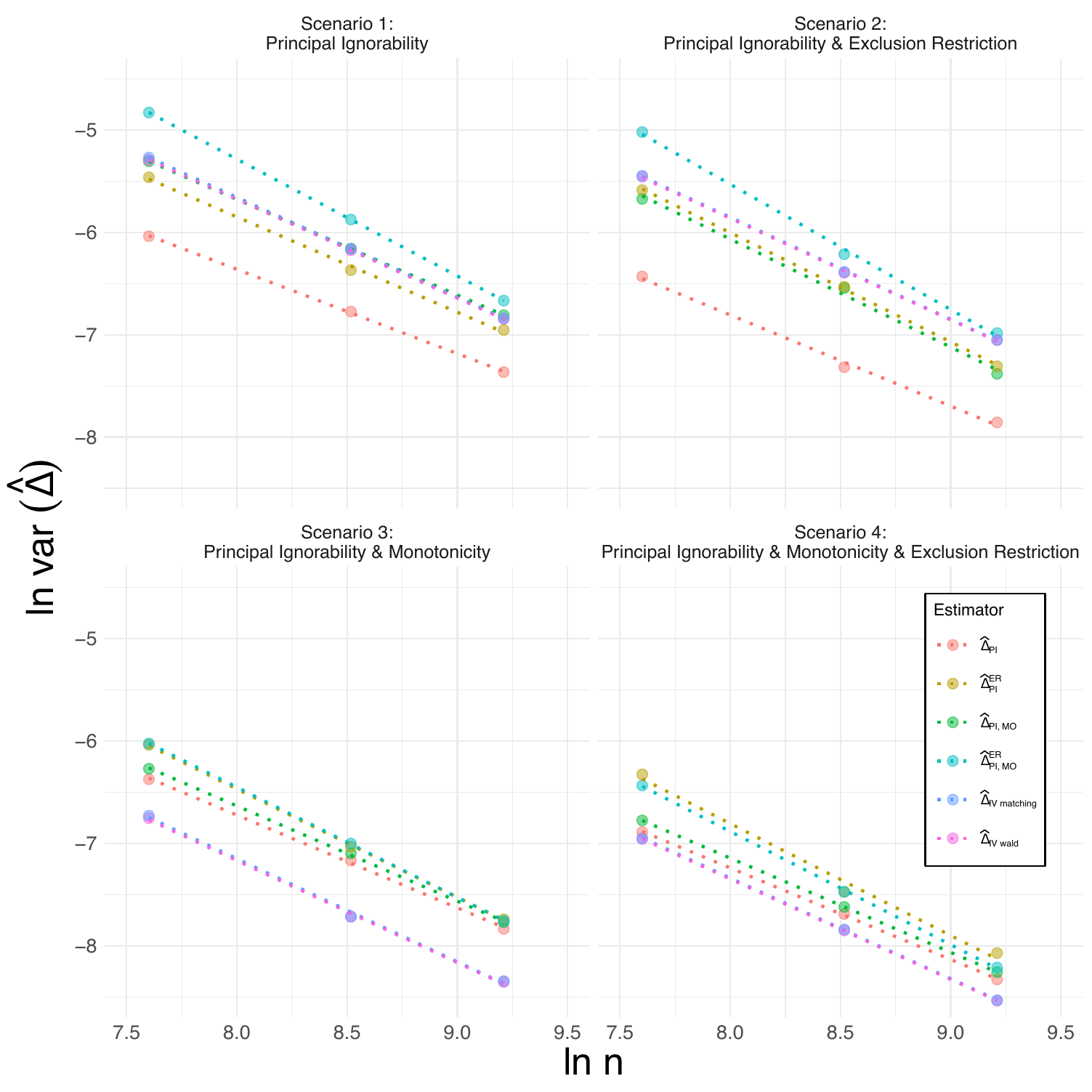}}
\caption{Estimators' variance and rate of convergence under misspecified parametric models.}
\label{fig:sim_conv}
\scriptsize For each estimator/scenario combination, slopes describe rates of convergence (e.g., a slope of $-1/2$ points to a convergence speed of $\sqrt{n}$), while intercepts approximate the logarithm of asymptotic variances.
\end{figure}
In most scenarios and sample sizes, the $\widehat{\Delta}_{PI}$ estimator exhibited the lowest variance (Table \ref{tab:sim_res} and Figure \ref{fig:sim_conv}). Our simulations with misspecified models suggest that this estimator might be a reasonable choice for CACE estimation when monotonicity or exclusion restriction assumptions cannot be confidently made. Overall, similar patterns were found for our estimation methodologies under correct model specification, though in that case, finite sample biases were lower and more substantially decreasing with sample size (Appendix \ref{app:sim_supp_res}). 

\section{Application on the Promotion of Breastfeeding Intervention Trial}\label{sec:app}
\subsection{Description}
The Promotion of Breastfeeding Intervention Trial (PROBIT) was conducted to assess the effects of a breastfeeding promotion program on infant weight at three months \citep{kramer2001}. The trial recruited mother-infant pairs from 31 Belarusian maternity hospitals and randomly assigned them to either the breastfeeding promotion program or standard care. For our experiments, we used the PROBITsim simulation learner \citep{goetghebeur2020}, which is a publicly available, anonymized database that replicates data from the original trial. In our setting, the allocated treatment corresponds to an allocation to the breastfeeding promotion program (i.e., if so, we have $Z=1$), and the treatment effectively taken describes whether a participant attended that program (i.e., in that case, we have $T=1$). Our main outcome of interest is infant weight at three months, discretized at 6000g (i.e., $Y=1$ for weights greater than 6000g). Participants' pre-randomization covariates (i.e, the variable $X$) comprise two categorical variables (location, education), four binary variables (maternal allergy, smoking status, child born by caesarian, sex of the child), and two continuous variables (mother's age at randomization, birth weight). Our objective is to estimate the CACE, which, in this context, represents the effect among those individuals who will attend the breastfeeding program when invited but not otherwise. We estimate the CACE using four estimators introduced in this paper: $\widehat{\Delta}_{PI}$, $\widehat{\Delta}_{PI}^{ER}$, $\widehat{\Delta}_{PI,MO}$, $\widehat{\Delta}_{PI,MO}^{ER}$ as well as two instrumental variable estimators: $\widehat{\Delta}_{IV wald}$ and $\widehat{\Delta}_{IV\,matching}$. We calculate 95\% confidence intervals (95\% CI) using the bootstrap with 999 replicates.

\subsection{Results}
\begin{figure}[!ht]
\centering
\includegraphics{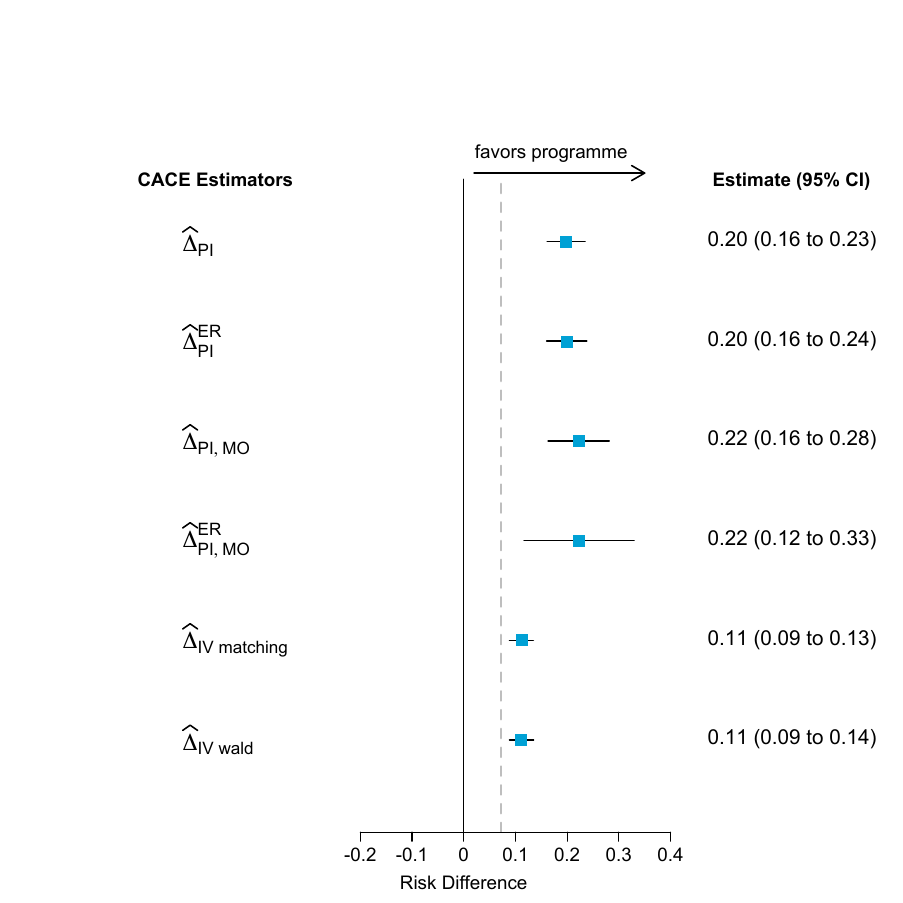}
\caption{Estimation of the CACE for the Promotion of Breastfeeding Intervention Trial.}
\label{fig:probit}
\scriptsize The gray dotted line indicates the estimated average treatment effect. Abbreviations: CACE = Complier Average Causal Effect.
\end{figure}
The estimated average treatment effect of the allocation to the breastfeeding promotion program vs the allocation to standard care was 0.07; 95\% CI [0.06 to 0.09]. Because 36\% of participants allocated to the program did not attend it, this so-called intention to treat analysis may lead to underemphasizing the intrinsic effect of the breastfeeding promotion program on infant weight at three months. In fact, the calculation of the CACE with all six estimators showed estimates greater than the estimated average treatment effect (Figure \ref{fig:probit}). The instrumental variable estimators $\widehat{\Delta}_{IV wald}$ and $\widehat{\Delta}_{IV\,matching},$ which rely on the exclusion restriction and monotonicity assumptions, both yield CACE estimates of 0.11 with tight confidence intervals (95\% CI [0.09 to 0.14] and [0.09 to 0.13] respectively). On the other hand, the estimators $\widehat{\Delta}_{PI}$, $\widehat{\Delta}_{PI}^{ER}$, $\widehat{\Delta}_{PI,MO}$, and $\widehat{\Delta}_{PI,MO}^{ER}$ leverage the principal ignorability assumption, and compared to instrumental variable estimators, yield greater CACE estimates with larger confidence intervals. The validity of the IV estimates, however, depends on assumptions that are questionable in this setting. The exclusion restriction could be violated if the encouragement from the program itself creates a placebo-like effect, leading mothers to adopt other healthy behaviors that affect infant weight, regardless of their final breastfeeding decision. Furthermore, the monotonicity assumption would fail if some mothers who would have breastfed on their own react negatively to the institutional encouragement and choose not to, acting as ``contrarians" or ``defiers." Given these plausible violations, the estimate from the $\widehat{\Delta}_{PI}$ estimator (0.20; 95\% CI [0.16 to 0.23]), which does not require these assumptions, may provide a more reliable evaluation of the intrinsic effect of the program. Furthermore, because all six estimators point to clinically meaningful and statistically significant effects, we can reasonably conclude that, in compliers, the breastfeeding promotion program causes greater infant weight at three months.

\section{Discussion}\label{sec:diss}
We have introduced a causal inference framework based on a mixture of experts architecture to estimate the CACE. Our approach provides a flexible alternative to classical instrumental variable methods by replacing the behavioral assumptions of monotonicity and exclusion restriction with a parametric model grounded in principal ignorability. The framework yields four distinct estimators, applicable to both experimental and observational data, and is particularly useful in settings where traditional IV assumptions are likely to be violated.

A central element of our methodology is the trade-off it presents. The principal ignorability assumption, while allowing us to achieve point identification in a more general setting, is a strong condition. The identifiability of our model's components, which we prove under parametric assumptions, is crucial for ensuring the learning procedure is well-posed. We also note that our two-step estimation process, which separates the training of the gating and expert networks, is essential for identifiability; a joint estimation would fail.

This work opens several avenues for future research. A natural next step is to relax the parametric assumptions of our model. While we provide initial non-parametric implementations in Appendix \ref{app:algos}, establishing formal nonparametric point identification would require strong regularity conditions on the expert and gating networks. Deriving a set of sufficient regularity conditions is mathematically challenging and represents an important direction for theoretical work. A more immediately practical direction is to integrate our approach with the double/debiased machine learning framework \citep{chernozhukov2018}. Using cross-fitting with flexible machine learning algorithms to estimate the nuisance functions could yield semi-parametrically efficient CACE estimators that are robust to own-observation bias. Finally, the mixture of experts structure is highly adaptable and could be extended to handle other complexities, such as censored outcomes through mixtures of survival models \citep{kuo2000}.

\acks{We are grateful for enlightening conversations with Antoine Chambaz, Julie Josse, Bénédicte Colnet, and Alex Fernandes.

FG conceived the study. FG and CB wrote the codes and did the simulation and application analyses. FG and FP worked on the mathematical proofs. FG drafted the manuscript with inputs from CB, FP and RP. All the authors read the paper and suggested edits. RP and FP supervised the project. FG and CB accessed and verified the data. 

RP acknowledges the support of the French Agence Nationale de la Recherche as part of the “Investissements d'avenir” program, reference ANR-19- P3IA-0001 (PRAIRIE 3IA Institute). FP acknowledges support from the French Agence Nationale de la Recherche through the project reference ANR-22-CPJ1-0047-01.

The authors have disclosed that they do not have any conflicts of interest related to this article.}

\clearpage
\appendix
\section{}\label{app:notations}

\begin{table}[h!]
\caption{Summary of notations}
\centering
\label{tab:notations}
\scalebox{0.807}{
\begin{tabular}{lp{0.75\linewidth}}
\hline\hline
\textbf{Notation}                & \textbf{Meaning}                                       \\ \hline
$X$                     & Baseline (i.e., pre-randomization) covariates \\
$Z$                     & Allocated binary treatment                    \\
$S_c$                   & Binary indicator whether an individual is a complier\\
$S_a$                   & Binary indicator whether an individual is an always-taker\\
$S_n$                   & Binary indicator whether an individual is a never-taker\\
$S_d$                   & Binary indicator whether an individual is a defier\\
$\boldsymbol{S}=(S_c,S_a,S_n,S_d)^T$ & One-hot-encoded latent stratum of an individual    \\
$\{T^{s=k}\}_{k\in\{c,a,n,d\}}$ & Potential treatment under the stratum $k$\\
$T$                     & Treatment effectively taken                   \\
$\{Y^{s=k,z=l,t=m}\}$ & Elementary potential outcome                  \\
$\{Y^{t=m}\}_{m\in\{0,1\}}$ & Potential outcome                             \\
$Y$                     & Observed outcome                              \\ 
$\Delta=\EX(Y^{t=1}-Y^{t=0}|S_c=1)$ & Complier Average Causal Effect (CACE)    \\
$\eta(x)=\EX(Z|X=x)$      & Allocation ratio function \\
$e(x)=\EX(T|X=x)$         & Standard propensity score\\
$\pi(x,z)=\EX(T|X=x,Z=z)$   & Propensity score adaptation in the context of imperfect compliance\\
$\rho_k(x)= \EX(S_k|X=x)$ &  Probability functions that an individual with covariates $x$ is in the stratum $k$\\
$\mu_k(z)=\EX(T^{s=k}|Z=z)$ & Known expert from a mixture of expert model\\
$P_{klm}(x)=\EX(S_k|Z=l, T=m, X=x)$ & Probability functions that an individual with allocated treatment $l$, treatment taken $m$, and covariates $x$ is in the stratum $k$\\
$q_{lm}(x)=\EX(Y|Z=l,T=m,X=x)$ & Conditional observed outcome functions\\
$Q_{klm}(x)=\EX(Y^{s=k,z=l,t=m}|X=x)$ & Conditional elementary potential outcome functions\\
$P_{k\cdot m}(x)= \EX(S_k|T=m,X=x)$ & Probability functions that an individual with treatment taken $m$, and covariates $x$ is in the stratum $k$\\
$q_{\cdot m}(x)= \EX(Y|T=m,X=x)$ & Conditional observed outcome functions, irrespective of $Z$\\
$Q_{k}(x)= \EX(Y^{s=k}|X=x)$ & Conditional potential outcome functions under exclusion restriction\\
$\mathcal{X}=\R^d$      & Space of covariates\\
$\mathcal{Z}=\{0,1\}$   & Space of allocated treatments\\
$\mathcal{P}$           & Probability distribution of the experiment\\
$k\in\{c,a,n,d\}$   & Index for a particular stratum\\
$l\in\{0,1\}$       & Index for a particular allocated treatment\\
$m\in\{0,1\}$       & Index for a particular treatment taken\\
$\widehat{\Delta}_{PI}$ & CACE estimator assuming principal ignorability\\
$\widehat{\Delta}_{PI}^{ER}$ & CACE estimator assuming principal ignorability and exclusion restriction\\
$\widehat{\Delta}_{PI,MO}$ & CACE estimator assuming principal ignorability and monotonicity\\
$\widehat{\Delta}_{PI,MO}^{ER}$ & CACE estimator assuming principal ignorability, exclusion restriction, and monotonicity\\
$\widehat{\Delta}_{IV\,matching}$ & CACE instrumental variable estimator using matching\\
$\widehat{\Delta}_{IV\,wald}$ & Standard CACE instrumental variable estimator\\
\hline\hline
\end{tabular}
}
\end{table}

\section{Proofs}\label{app:proofs}
\subsection{Proof of Lemma \ref{lemma:est}}
\begin{proof}
\begin{align*}
\Delta&\eqdef \EX[Y^{t=1}-Y^{t=0}|S_c=1] \\
&=\EX[Y^{s=c,z=1,t=1}-Y^{s=c,z=0,t=0}|S_c=1] \\
&=\EX \! \Big[\EX\big(Y^{s=c,z=1,t=1}-Y^{s=c,z=0,t=0}|X,S_c\big)|S_c=1\Big] \\
&\mathrel{\hspace{-0.38cm} \stackrel{\mathrm{Asm.} \ref{as:pi}\emph{(ii)}}{=}}\EX \! \Big[\EX\big(Y^{s=c,z=1,t=1}-Y^{s=c,z=0,t=0}|X\big)|S_c=1\Big]\\ \
&=\EX \! \Big[ \!\EX(Y^{s=c,z=1,t=1}|X)\!- \!\EX(Y^{s=c,z=0,t=0}|X)|S_c=1\Big] \\
&=\EX\Big[ Q_{c11}(X)-Q_{c00}(X)|S_c=1\Big] \\
&=\int_{\mathcal{X}}\big\{ Q_{c11}(x)-Q_{c00}(x)\big\}p_{X|S_c}(x|1)dx \\
&=\int_{\mathcal{X}}\big\{ Q_{c11}(x)-Q_{c00}(x)\big\}\frac{p_{S_c|X}(1|x)}{p_{S_c}(1)}p_X(x)dx \\
&=\EX\Bigg[\big\{ Q_{c11}(X)-Q_{c00}(X)\big\}\frac{\EX(S_c|X)}{\EX(S_c)}\Bigg]  \\
&=\frac{\EX\Big[\big\{ Q_{c11}(X)-Q_{c00}(X)\big\}\EX(S_c|X)\Big]}{\EX(S_c)} \\
&=\frac{\EX\Big[\big\{Q_{c11}(X)-Q_{c00}(X)\big\}\EX(S_c|X)\Big]}{ \EX[\EX(S_c|X)]} \\
&=\frac{\EX\Big[\big\{ Q_{c11}(X)-Q_{c00}(X)\big\}\rho_c(X)\Big]}{ \EX[\rho_c(X)]}
\end{align*}
Assumption \ref{as:pos} guarantees that the conditioning on $\{S_c=1\}$ is well-defined, and that $\EX[p_c(X)]\neq0$.
\end{proof}

\subsection{Proof of Lemma \ref{lemma:mix4}}
\begin{proof}
\begin{align*}
\pi(X,Z)&\eqdef \EX(T|X,Z) \notag \\
&\hspace{-0.25cm}\stackrel{ \text{Def. \ref{def_ttt}}}{=}\EX(S_cT^{s=c} + S_aT^{s=a}  + S_nT^{s=n}+ S_dT^{s=d}|X,Z)\\
&=\EX(S_cZ + S_a\times1 + S_n\times0 + S_d(1-Z)|X,Z) \notag\\
&=Z\EX(S_c|X,Z) + 1\times \EX(S_a|X,Z) + 0\times \EX(S_n|X,Z)+ (1-Z)\EX(S_d|X,Z) \notag\\
&\hspace{-0.38cm} \stackrel{\text{Asm. \ref{as:pi}\emph{(i)}}}{=}Z\EX(S_c|X) + 1\times \EX(S_a|X) + 0\times \EX(S_n|X)+ (1-Z)\EX(S_d|X) \\
&=\EX(Z|Z)\EX(S_c|X) + \EX(1|Z) \EX(S_a|X) + \EX(0|Z) \EX(S_n|X)+ \EX(1-Z|Z)\EX(S_d|X) \notag\\
&=\EX(S_c|X)\EX(T^{s=c}|Z) + \EX(S_a|X)\EX(T^{s=a}|Z)  + \EX(S_n|X)\EX(T^{s=n}|Z)+ \EX(S_d|X)\EX(T^{s=d}|Z) \notag\\
&=\sum_{k\in\{c,a,n,d\}}\rho_k(X)\mu_k(Z) \notag
\end{align*}
Assumption \ref{as:exo} guarantees that the conditioning on $\{X,Z\}$ is well-defined.
\end{proof}

\subsection{Proof of Lemma \ref{lemma:mix}}
\begin{proof}
\begin{align}
q_{11}(X) &\eqdef \EX(Y|Z=1, T=1, X) \notag \\
    &\hspace{-0.26cm}\stackrel{\text{Asm. \ref{as:consistency}}}{=}\EX(TY^{t=1}+(1-T)Y^{t=0}|Z=1, T=1, X)\notag\\
    &=\EX(Y^{t=1}|Z=1, T=1, X) \notag \\
    &\hspace{-0.2cm}\stackrel{\text{Def. \ref{def_po}}}{=} \EX(S_cY^{s=c,z=1,t=1}|Z=1, T=1, X) + \EX(S_aY^{s=a,z=1,t=1}|Z=1, T=1, X) \notag \\
    &=\EX(S_c|Z=1, T=1, X) \times \EX(Y^{s=c,z=1,t=1}|Z=1, T=1, X) \notag\\ 
    &\quad +\EX(S_a|Z=1, T=1, X) \times \EX(Y^{s=a,z=1,t=1}|Z=1, T=1, X) \label{eq:proof:mix}\\
    &=\EX(S_c|Z=1, T=1, X)\EX(Y^{s=c,z=1,t=1}|X) \notag\\
    &\quad +\EX(S_a|Z=1, T=1, X)\EX(Y^{s=a,z=1,t=1}|X) \label{eq:proof_bis:mix}\\
    &=P_{c11}(X)Q_{c11}(X) + P_{a11}(X)Q_{a11}(X) \notag
\end{align}
Assumptions \ref{as:exo}, \ref{as:pi}\emph{(i)}, and \ref{as:pos} guarantees that the conditioning on $\{Z=1,T=1,X\}$ is well-defined. The equality in \eqref{eq:proof:mix} rely on Assumption \ref{as:pi}\emph{(ii)}. The equalities in \eqref{eq:proof:mix} and \eqref{eq:proof_bis:mix} rely on the fact that the extra conditioning on $\{Z=1,T=1\}$ does not open a backdoor path (this can be verified by appealing to a d-separation argument on the graph given in Appendix \ref{app:graph}). The proof for the $q_{00}(X)$ formula follows a similar argument.
\end{proof}

\subsection{Proof of Lemma \ref{lemma:P}}
\begin{proof}
\begin{align*}
P_{c11}(X)&\eqdef \EX(S_c|Z=1, T=1, X)\\
&=\Prob(S_c=1|Z=1, T=1, X)\\
&=\Prob(T=1|S_c=1, Z=1, X) \times \frac{\Prob(S_c=1| Z=1, X)}{\Prob(T=1| Z=1, X)}\\
&\mathrel{\hspace{-0.181cm}\stackrel{\mathrm{Def.} \ref{def_ttt}}{=}}\EX(S_cZ \! + \! S_a \! + \! S_d(1-Z)|S_c=1, Z=1, X) \times \frac{\Prob(S_c=1| Z=1, X)}{\Prob(T=1| Z=1, X)}\notag\\
&=\frac{\Prob(S_c=1| Z=1, X)}{\Prob(T=1| Z=1, X)}\\
& \mathrel{\hspace{-0.38cm}\stackrel{\mathrm{Asm. \ref{as:pi}}\emph{(i)}}{=}}\frac{\Prob(S_c=1| X)}{\EX(T| Z=1, X)} \notag\\
&\mathrel{\hspace{-0.181cm}\stackrel{\mathrm{Def.} \ref{def_ttt}}{=}}\frac{\EX(S_c| X)}{\EX(S_c| Z=1, X)+\EX(S_a| Z=1, X)} \notag\\
&\mathrel{\hspace{-0.38cm}\stackrel{\mathrm{Asm. \ref{as:pi}}\emph{(i)}}{=}}\frac{\EX(S_c| X)}{\EX(S_c| X)+\EX(S_a| X)} \notag\\
&=\rho_c(X)/\{\rho_c(X)+\rho_a(X)\} .
\end{align*}
Assumptions \ref{as:exo}, \ref{as:pi}\emph{(i)}, and \ref{as:pos} guarantees that the conditioning on $\{Z=1,T=1,X\}$ is well-defined. The proof for the $P_{a11}$, $P_{c00}$, and $P_{n00}$ formulas follows a similar argument.
\end{proof}

\subsection{Proof of Lemma \ref{lemma:mix_er}}
\begin{proof}
\begin{align}
q_{\cdot1}(X)&\eqdef \EX(Y| T=1, X) \notag \\
    &\mathrel{\hspace{-0.29cm}\stackrel{\mathrm{Asm.} \, \ref{as:consistency}}{=}}\EX(TY^{t=1}+(1-T)Y^{t=0}|T=1, X) \notag \\
    &=\EX(Y^{t=1}|T=1, X) \notag \\
    &\mathrel{\hspace{-0.29cm}\stackrel{\mathrm{Asm.} \, \ref{as:er}}{=}}\EX(S_cY^{s=c,z=1,t=1} + S_aY^{s=a} + S_dY^{s=c,z=0,t=1}|T=1, X) \notag \\
    &=\EX(S_c|T=1, X)\EX(Y^{s=c,z=1,t=1}|T=1, X) \label{eq:proof:eq_mix} \\ 
    &\quad +\EX(S_a| T=1, X)\EX(Y^{s=a}|T=1, X) \notag \\ 
    &\quad +\EX(S_d| T=1, X)\EX(Y^{s=c,z=0,t=1}|T=1, X) \notag \\
    &=\EX(S_c|T=1, X)\EX(Y^{s=c,z=1,t=1}|X) \label{eq:proof_bis:eq_mix}\\ 
    &\quad+\EX(S_a| T=1, X)\EX(Y^{s=a}|X) \notag\\ 
    &\quad +\EX(S_d| T=1, X)\EX(Y^{s=c,z=0,t=1}|X) \notag\\
    &=P_{c\cdot 1}(X)Q_{c11}(X) + P_{a\cdot 1}(X)Q_{a}(X)+ P_{d\cdot1}(X)Q_{d01}(X) \notag
\end{align}
Assumptions \ref{as:exo} and \ref{as:pos} guarantees that the conditioning on $\{T=1,X\}$ is well-defined. The equality in \eqref{eq:proof:eq_mix} rely on Assumption \ref{as:pi} \emph{(ii)}. The equalities in \eqref{eq:proof:eq_mix} and \eqref{eq:proof_bis:eq_mix} rely on the fact that the extra conditioning on $T=1$ does not open a backdoor path (this can be verified by appealing to a d-separation argument on the graph given in subsection \ref{app:graph}). The proof for the $q_{\cdot0}(X)$ formula follows a similar argument.
\end{proof}

\subsection{Proof of Lemma \ref{lemma:P_er}}
\begin{proof}
\begin{align}
P_{c\cdot 1}(X) &\eqdef \EX(S_c|T=1,X) \notag \\
&=\Prob(Z=0|T= \! 1,X)\overbrace{\Prob(S_c=1|Z=0,T= \!1,X)}^{=0}  \notag \\
&+ \Prob(Z=1|T= \! 1,X)\Prob(S_c=1|Z=1,T= \!1,X)\notag \\
&=\Prob(Z=1|T=1,X)P_{c11}(X)\notag \\
&=\frac{\Prob(Z=1|X)}{\Prob(T=1|X)}\Prob(T=1|Z=1,X)P_{c11}(X) \notag\\
&\mathrel{ \hspace{-0.17cm}\stackrel{\mathrm{def}. \ref{def_ttt}}{=}} \, \frac{\eta(X)}{e(X)}\EX(S_cZ+S_a+S_d(1-Z)|Z=1,X) P_{c11}(X) \notag\\
&=\frac{\eta(X)}{e(X)}\big\{\EX(S_c|Z=1,X)+\EX(S_a|Z=1,X)\big\}  P_{c11}(X) \notag\\
&\mathrel{\hspace{-0.44cm} \stackrel{\mathrm{Asm.} \, \ref{as:pi} \, \emph{(i)}}{=}} \,\frac{\eta(X)}{e(X)}\big\{\EX(S_c|X)+\EX(S_a|X)\big\}P_{c11}(X) \notag\\
&\mathrel{\hspace{-0.28cm}\stackrel{\mathrm{Lem.} \, \ref{lemma:P}}{=}} \,\frac{\eta(X)}{e(X)}\big\{\rho_c(X)+\rho_a(X)\big\}
\frac{\rho_c(X)}{\rho_c(X)+\rho_a(X)}\notag\\
&=\frac{\eta(X)}{e(X)}\rho_c(X) \notag
\end{align}
Assumptions \ref{as:exo} and \ref{as:pos} guarantees that the conditioning on $\{T=1,X\}$ is well-defined. The proofs for the $P_{a\cdot 1}(X)$, $P_{d\cdot 1}(X)$, $P_{c\cdot 0}(X)$, $P_{n\cdot 0}(X)$, and $P_{d\cdot 0}(X)$ formulas follow similar arguments. The standard propensity score can be further represented as
\begin{align}
    e(X)\eqdef&\EX(T|X) \notag\\
    =&\EX(S_cZ+S_a+S_d(1-Z)|X) \notag\\
    \stackrel{\mathrm{Asm.} \, \ref{as:pi} \, \emph{(i)}}{=}&\EX(S_c|X)\EX(Z|X) + \EX(S_a|X) + \EX(S_d|X)\EX(1-Z|X) \notag\\
    =& \rho_c(X)\eta(X) + \rho_a(X) + \rho_d(X)\{1-\eta(X)\}. \notag
\end{align}
\end{proof}

\subsection{Probabilistic graphical model}\label{app:graph}
Below, we provide the probabilistic graphical model corresponding to the data generating mechanism described in section \ref{sec:setup}. Conditional independences between variables can be read from this diagram using the rules of d-separation. For clarity we use random vector notation 
\begin{align*}
\boldsymbol{Y^{s,z,t}}=(&Y^{s=c,z=1,t=1},Y^{s=a,z=1,t=1},Y^{s=a,z=0,t=1},
Y^{s=d,z=0,t=1}, Y^{s=c,z=0,t=0},Y^{s=n,z=1,t=0},\\
&Y^{s=n,z=0,t=0},Y^{s=d,z=1,t=0})^T    
\end{align*}
and  $\boldsymbol{Y^t}=(Y^{t=1},Y^{t=0})^T.$
\begin{center}
\begin{tikzpicture}
  \node (X) at (0, 8) {X};
  \node (S) at (0, 6) {$\boldsymbol{S}$};   \node (Y^{s,z,t}) at (2, 6) {$\boldsymbol{Y^{s,z,t}}$};
  \node (Z) at (0, 4) {Z};
  \node (T) at (0, 2) {T};   \node (Y^t) at (2, 2) {$\boldsymbol{Y^t}$};
  \node (Y) at (0, 0) {Y};

  \draw[->] (X) -- (S);      
  \draw[->] (X) -- (Y^{s,z,t});
  \draw[->, bend right] (X) to node[midway, left] {} (Z);
  \draw[->, bend right] (S) to node[midway, left] {} (T);
  \draw[->] (Y^{s,z,t}) -- (Y^t);
  \draw[->] (S) -- (Y^t);
  \draw[->] (Z) -- (Y^t);
  \draw[->] (Z) -- (T);
  \draw[->] (Y^t) -- (Y);
  \draw[->] (T) -- (Y);
\end{tikzpicture}
\end{center}

\section{EM algorithms}\label{app:algos}
All EM algorithms in this section are adaptations of the algorithm provided in \citet{jordan1994}, based on the description from \citet{Xu1993}. Algorithms \ref{algo:2}, \ref{algo:3}, \ref{algo:5}, \ref{algo:6}, \ref{algo:7}, \ref{algo:8}, and \ref{algo:10} take a data subset as input. To avoid clutter, we drop subscripts of the form $(-)_{i:Z_i=l,T_i=l}$ in these algorithms. However, as a reminder of data subsetting, we note sums over $n^\prime$ elements rather than $n$.

\begin{algorithm}[H]
\caption{The EM procedure for estimating $\rho_s(\cdot;\delta_s)$ where $s \in \{c,a,n,d \}$. }\label{algo:1}
\begin{algorithmic}
\begin{footnotesize}
\Input Data $(X_i^\rho, Z_i, T_i)_{1 \leq i \leq n}$ where $X_i^{\rho}$ is a relevant subset of the variables contained in $X_i$. 

\Initialize the prior probabilities associated with the nodes of the tree as
\begin{equation*}
g_{c,i} \gets 1/4, \quad g_{a,i} \gets 1/4, \quad g_{n,i} \gets 1/4, \quad \text{and} \quad g_{d,i} \gets 1/4.    
\end{equation*}

\Compute individual contributions to each expert's likelihood as
\begin{align*}
    &L_{c,i} \gets Z_iT_i + (1-Z_i)(1-T_i)\\
    &L_{a,i} \gets T_i\\
    &L_{n,i} \gets 1-T_i\\
    &L_{d,i} \gets Z_i(1-T_i) + (1-Z_i)T_i
\end{align*}

\Iterate until convergence on the parameters $\delta=(\delta_c^T,\delta_a^T,\delta_n^T,\delta_d^T)^T$: \\
Compute the posterior probabilities associated with the nodes of the tree as \Comment{E-step}
\begin{align*}
    &h_{c,i} \gets g_{c,i}L_{c,i} / \sum_{s\in\{c,a,n,d\}} g_{s,i}L_{s,i} \\
    &h_{a,i} \gets g_{a,i}L_{a,i} / \sum_{s\in\{c,a,n,d\}} g_{s,i}L_{s,i} \\
    &h_{n,i} \gets g_{n,i}L_{n,i} / \sum_{s\in\{c,a,n,d\}} g_{s,i}L_{s,i} \\
    &h_{d,i} \gets g_{d,i}L_{d,i} / \sum_{s\in\{c,a,n,d\}} g_{s,i}L_{s,i} 
\end{align*}
For the gating network $\rho(\cdot)$ estimate parameters $\delta$ by solving the IRLS problem\Comment{M-step}\\ 
\begin{equation*}
    \delta \gets \argmax\limits_{\delta} \sum_{i=1}^n \sum_{s\in\{c,a,n,d\}} h_{s,i} \ln{ \Big (\frac{\exp{\delta_s^TX_i^\rho} }{ \sum_{k\in\{c,a,n,d\}}\exp{\delta_k^TX_i^\rho}} \Big)} 
\end{equation*}
$\quad$ as a multinomial logistic regression with features $(X_i^{\rho})_{1 \leq i \leq n}$,\\
$\quad$ and targets $(h_{c,i}, h_{a,i}, h_{n,i}, h_{d,i})_{1 \leq i \leq n}.$ \\
Update the prior probabilities associated with the nodes of the tree as 
\begin{align*}
    g_{c,i} \gets \exp{\delta_c^TX_i^\rho} / \sum_{k\in\{c,a,n,d\}}\exp{\delta_k^TX_i^\rho} \\
    g_{a,i} \gets \exp{\delta_a^TX_i^\rho} / \sum_{k\in\{c,a,n,d\}}\exp{\delta_k^TX_i^\rho} \\
    g_{n,i} \gets \exp{\delta_n^TX_i^\rho} / \sum_{k\in\{c,a,n,d\}}\exp{\delta_k^TX_i^\rho} \\
    g_{d,i} \gets \exp{\delta_d^TX_i^\rho} / \sum_{k\in\{c,a,n,d\}}\exp{\delta_k^TX_i^\rho}
\end{align*}

\Endloop $\rho_s(x;\hat{\delta})=\exp{\delta_s^Tx} / \sum_{k\in\{c,a,n,d\}}\exp{\delta_k^Tx}.$
\end{footnotesize}
\end{algorithmic}
\end{algorithm}

\begin{algorithm}[H]
\caption{The EM procedure for estimating $Q_c(\cdot;\zeta)$ when $Y$ is binary. Here, $Q_c$ denotes either $Q_{c11}$ or $Q_{c00}$ depending on the data subset considered.}\label{algo:2}
\begin{algorithmic}
\begin{footnotesize}
\Input Data $(X_i^{\zeta}, Y_i,P_{cll}(X_i))_{i:Z_i=l,T_i=l}$ where $l\in\{0,1\}$ and $X_i^{\zeta}$ is a relevant subset of the variables contained in $X_i.$

\Initialize the prior probabilities associated with the nodes of the tree as
\begin{equation*}
g_{c,i} \gets P_{cll}(X_i), \quad
g_{nc,i} \gets 1 - P_{cll}(X_i).    
\end{equation*}

\Initialize the parameters $(\zeta_c,\zeta_{nc})$ of the experts $\big(Q_c(\cdot),Q_{nc}(\cdot)\big)$ at random e.g., 
\begin{equation*}
\zeta_c \sim \mathcal{N}(0, D) \quad
\zeta_{nc} \sim \mathcal{N}(0, D)
\end{equation*}
with $D$ a diagonal matrix.

\Compute individual predictions from the initiated expert networks $\big(Q_c(\cdot),Q_{nc}(\cdot)\big)$ as 
\begin{equation*}
Q_{c,i} \gets \expit(\zeta_c^TX_i^\zeta) \quad
Q_{nc,i} \gets \expit(\zeta_{nc}^TX_i^\zeta)
\end{equation*}

\Iterate until convergence on the parameters $\zeta=(\zeta_c^T,\zeta_{nc}^T)^T$: \\
Compute individual contributions to each expert's likelihood as
\begin{align*}
    &L_{c,i} \gets Q_{c,i}^{Y_i} (1-Q_{c,i})^{1-Y_i}\\
    &L_{nc,i} \gets Q_{nc,i}^{Y_i} (1-Q_{nc,i})^{1-Y_i}
\end{align*}
Compute the posterior probabilities associated with the nodes of the tree as \Comment{E-step}
\begin{align*}
    &h_{c,i} \gets g_{c,i}L_{c,i} / \Big(g_{c,i}L_{c,i} + g_{nc,i}L_{nc,i} \Big) \\
    &h_{nc,i} \gets 1 - h_{nc,i}
\end{align*}
For the expert network $Q_c(\cdot)$ estimate parameters $\zeta_c$ by solving the IRLS problem \Comment{M-step}\\   
\begin{align*}
    \zeta_c \gets &\argmax\limits_{\zeta_c} \sum_{i=1}^{n'} h_{c,i} \Big[ Y_i\ln{\{ \expit(\zeta_c^TX_i^\zeta) \}} + (1-Y_i)\ln{ \{1-\expit(\zeta_c^TX_i^\zeta) \} }\Big] 
\end{align*}
$\quad$ as a weighted logistic regression with features $X_i^{\zeta}$, targets $Y_i$ and weights $h_{c,i}$. \\
For the expert network $Q_{nc}(\cdot)$ estimate parameters $\zeta_{nc}$ by solving the IRLS problem
\begin{align*}
    \zeta_{nc} \gets &\argmax\limits_{\zeta_{nc}} \sum_{i=1}^{n'} h_{nc,i} \Big[ Y_i\ln{\{ \expit(\zeta_{nc}^TX_i^\zeta) \}}  + (1-Y_i)\ln{ \{1-\expit(\zeta_{nc}^TX_i^\zeta) \} }\Big] 
\end{align*}
$\quad$ as a weighted logistic regression with features $X_i^{\zeta}$, targets $Y_i$ and weights $h_{nc,i}$. \\
Update the predictions from the expert networks $\big(Q_c(\cdot),Q_{nc}(\cdot)\big)$ as 
\begin{equation*}
Q_{c,i} \gets \expit(\zeta_c^TX_i^\zeta) \quad
Q_{nc,i} \gets \expit(\zeta_{nc}^TX_i^\zeta)
\end{equation*}
\Endloop $Q_c(x;\hat{\zeta})=\expit{(\zeta_c^Tx)}$
\end{footnotesize}
\end{algorithmic}
\end{algorithm}

\begin{algorithm}[H]
\caption{The EM procedure for estimating $Q_c(\cdot;\zeta)$ when $Y$ is continuous. Here, $Q_c$ denotes either $Q_{c11}$ or $Q_{c00}$ depending on the data subset considered.}\label{algo:3}
\begin{algorithmic}
\begin{footnotesize}
\Input Data $(X_i^{\zeta }, Y_i,P_{cll}(X_i))_{i:Z_i=l,T_i=l}$ where $l\in\{0,1\}$ and $d$-dimensional $X_i^{\zeta}$ is a relevant subset of the variables contained in $X_i.$

\Initialize the prior probabilities associated with the nodes of the tree as
\begin{equation*}
g_{c,i} \gets P_{cll}(X_i), \quad
g_{nc,i} \gets 1 - P_{cll}(X_i).     
\end{equation*}

\Initialize the parameters $\Big((\zeta_c,\sigma_c^2),(\zeta_{nc},\sigma_{nc}^2)\Big)$ of the experts $\big(Q_c(\cdot),Q_{nc}(\cdot)\big)$ at random e.g., 
\begin{align*}
\zeta_c \sim \mathcal{N}(0, D) \quad  \sigma_c^2 \gets 1 \\
\zeta_{nc} \sim \mathcal{N}(0, D)
\quad \sigma_{nc}^2 \gets 1 
\end{align*}
with $D$ a diagonal matrix.

\Compute individual predictions from the initiated expert networks $\big(Q_c(\cdot),Q_{nc}(\cdot)\big)$ as 
\begin{equation*}
Q_{c,i} \gets \zeta_c^TX_i^\zeta \quad
Q_{nc,i} \gets \zeta_{nc}^TX_i^\zeta
\end{equation*}
\Iterate until convergence on the parameters $\zeta=(\zeta_c^T,\zeta_{nc}^T)^T$: \\
Compute individual contributions to each expert's likelihood as
\begin{align*}
    L_{c,i} &\gets \mathcal{N}_\mathcal{L}(Y_i|\mu=Q_{c,i},\sigma^2=\sigma_c^2)\\
    L_{nc,i} &\gets \mathcal{N}_\mathcal{L}(Y_i|\mu=Q_{nc,i},\sigma^2=\sigma_{nc}^2)
\end{align*}
Compute the posterior probabilities associated with the nodes of the tree as \Comment{E-step}
\begin{align*}
    &h_{c,i} \gets g_{c,i}L_{c,i} / \Big(g_{c,i}L_{c,i} + g_{nc,i}L_{nc,i} \Big)
    &h_{nc,i} \gets 1 - h_{nc,i}
\end{align*}
For the expert network $Q_c(\cdot)$ estimate parameters $\zeta_c$ by solving the\\ 
$\quad$ weighted least-squares problem \Comment{M-step}   
\begin{equation*}
    \zeta_c \gets \argmin\limits_{\zeta_c} \sum_{i=1}^{n'} h_{c,i} \big( Y_i - \zeta_c^TX_i^{\zeta}\big)^2 
\end{equation*}
$\quad$ as a weighted linear regression with features $X_i^{\zeta}$, targets $Y_i$ and weights $h_{c,i}$. \\
For the expert network $Q_{nc}(\cdot)$ estimate parameters $\zeta_{nc}$ by solving the\\
$\quad$ weighted least-squares problem \\   
\begin{equation*}
    \zeta_{nc} \gets \argmin\limits_{\zeta_{nc}} \sum_{i=1}^{n'} h_{nc,i} \big( Y_i - \zeta_{nc}^TX_i^{\zeta}\big)^2 
\end{equation*}
$\quad$ as a weighted linear regression with features $X_i^{\zeta}$, targets $Y_i$ and weights $h_{nc,i}$. \\
Update the variance parameter for each expert $\big(Q_c(\cdot),Q_{nc}(\cdot)\big)$ as
\begin{align*}
\sigma_c^2 &\gets  \frac{1}{n'-d}\sum_{i=1}^{n'} h_{c,i}\big( Y_i - \zeta_c^TX_i^{\zeta}\big)^2, \\
\sigma_{nc}^2 &\gets  \frac{1}{n'-d}\sum_{i=1}^{n'} h_{nc,i}\big( Y_i - \zeta_{nc}^TX_i^{\zeta}\big)^2
\end{align*}
Update the predictions from the expert networks $\big(Q_c(\cdot),Q_{nc}(\cdot)\big)$ as 
\begin{equation*}
Q_{c,i} \gets \zeta_c^TX_i^\zeta \quad
Q_{nc,i} \gets \zeta_{nc}^TX_i^\zeta
\end{equation*}
\Endloop $Q_c(x;\hat{\zeta})=\zeta_c^Tx$
\end{footnotesize}
\end{algorithmic}
\end{algorithm}

\begin{algorithm}[H]
\caption{The nonparametric EM-like procedure for estimating $\rho_s(\cdot)$ where $s \in \{c,a,n,d \}$. }\label{algo:4}
\begin{algorithmic}
\begin{footnotesize}
\Input Data $(X_i^{\rho}, Z_i, T_i)_{1 \leq i \leq n}$ where $X_i^{\rho}$ is a relevant subset of the variables contained in $X_i$. 

\Initialize the prior probabilities associated with the nodes of the tree as
\begin{equation*}
g_{c,i} \gets 1/4, \quad g_{a,i} \gets 1/4, \quad g_{n,i} \gets 1/4, \quad \text{and} \quad g_{d,i} \gets 1/4.    
\end{equation*}

\Compute individual contributions to each expert's likelihood as
\begin{align*}
    &L_{c,i} \gets Z_iT_i + (1-Z_i)(1-T_i)\\
    &L_{a,i} \gets T_i\\
    &L_{n,i} \gets 1-T_i\\
    &L_{d,i} \gets Z_i(1-T_i) + (1-Z_i)T_i
\end{align*}

\Iterate until convergence: \\
Compute the posterior probabilities associated with the nodes of the tree as \Comment{E-step}
\begin{align*}
    &h_{c,i} \gets g_{c,i}L_{c,i} / \sum_{s\in\{c,a,n,d\}} g_{s,i}L_{s,i} \\
    &h_{a,i} \gets g_{a,i}L_{a,i} / \sum_{s\in\{c,a,n,d\}} g_{s,i}L_{s,i} \\
    &h_{n,i} \gets g_{n,i}L_{n,i} / \sum_{s\in\{c,a,n,d\}} g_{s,i}L_{s,i} \\
    &h_{d,i} \gets g_{d,i}L_{d,i} / \sum_{s\in\{c,a,n,d\}} g_{s,i}L_{s,i} 
\end{align*}
For the gating network fit $\hat{\rho}_s(\cdot), s\in\{c,a,n,d\}$ as a multiclass classification \Comment{M-step}\\ 
$\quad$ problem with features $(X_i^{\hat{\rho}})_{1 \leq i \leq n}$, and targets $(h_{c,i}, h_{a,i}, h_{n,i}, h_{d,i})_{1 \leq i \leq n}.$ \\
Update the prior probabilities associated with the nodes of the tree as 
\begin{align*}
    g_{c,i} \gets \hat{\rho}_c(X_i^\rho), \quad 
    g_{a,i} \gets \hat{\rho}_a(X_i^\rho), \\
    g_{n,i} \gets \hat{\rho}_n(X_i^\rho), \quad 
    g_{d,i} \gets \hat{\rho}_d(X_i^\rho)
\end{align*}

\Endloop $\hat{\rho}_c(\cdot)$
\end{footnotesize}
\end{algorithmic}
\end{algorithm}

\begin{algorithm}[H]
\caption{The nonparametric EM-like procedure for estimating $Q_c(\cdot;\zeta)$ when $Y$ is binary. Here, $Q_c$ denotes either $Q_{c11}$ or $Q_{c00}$ depending on the data subset considered.}\label{algo:5}
\begin{algorithmic}
\begin{footnotesize}
\Input Data $(X_i^{\zeta}, Y_i,P_{cll}(X_i))_{i:Z_i=l,T_i=l}$ where $l\in\{0,1\}$ and $X_i^{\zeta}$ is a relevant subset of the variables contained in $X_i.$

\Initialize the prior probabilities associated with the nodes of the tree as
\begin{equation*}
g_{c,i} \gets P_{cll}(X_i), \quad
g_{nc,i} \gets 1 - P_{cll}(X_i).     
\end{equation*}

\Initialize the individual predictions from the expert networks $\big(Q_c(\cdot),Q_{nc}(\cdot)\big)$ as 
\begin{equation*}
Q_{c,i} \sim \mathcal{U}_{[0,1]} \quad
Q_{nc,i} \sim \mathcal{U}_{[0,1]}
\end{equation*}

\Iterate until convergence: \\
Compute individual contributions to each expert's likelihood as
\begin{align*}
    &L_{c,i} \gets Q_{c,i}^{Y_i} (1-Q_{c,i})^{1-Y_i}\\
    &L_{nc,i} \gets Q_{nc,i}^{Y_i} (1-Q_{nc,i})^{1-Y_i}
\end{align*}
Compute the posterior probabilities associated with the nodes of the tree as \Comment{E-step}
\begin{align*}
    &h_{c,i} \gets g_{c,i}L_{c,i} / \Big(g_{c,i}L_{c,i} + g_{nc,i}L_{nc,i} \Big) \\
    &h_{nc,i} \gets 1 - h_{nc,i}
\end{align*}
For the expert network $Q_c(\cdot)$ fit $\hat{Q}_c(\cdot)$ via a weighted nonparametric classifier \Comment{M-step}\\
$\quad$ with features $X_i^{\zeta}$, targets $Y_i$ and weights $h_{c,i}$. \\
For the expert network $Q_{nc}(\cdot)$ fit $\hat{Q}_{nc}(\cdot)$ via a weighted nonparametric classifier \\
$\quad$ with features $X_i^{\zeta}$, targets $Y_i$ and weights $h_{nc,i}$. \\
Update the predictions from the expert networks $\big(Q_c(\cdot),Q_{nc}(\cdot)\big)$ as 
\begin{equation*}
Q_{c,i} \gets \hat{Q}_c(X_i^\zeta)\quad
Q_{nc,i} \gets \hat{Q}_{nc}(X_i^\zeta)
\end{equation*}
\Endloop $\hat{Q}_{c}(\cdot)$
\end{footnotesize}
\end{algorithmic}
\end{algorithm}

\begin{algorithm}[H]
\caption{The nonparametric EM-like procedure for estimating $Q_c(\cdot)$ when Y is continuous. Here, $Q_c$ denotes either $Q_{c11}$ or $Q_{c00}$ depending on the data subset considered.}\label{algo:6}

\begin{algorithmic}
\begin{footnotesize}
\Input Data $(X_i^{\zeta}, Y_i,P_{cll}(X_i))_{i:Z_i=l,T_i=l}$ where $l\in\{0,1\}$ and $X_i^{\zeta}$ is a relevant subset of the variables contained in $X_i.$

\Initialize the prior probabilities associated with the nodes of the tree as
\begin{equation*}
g_{c,i} \gets P_{cll}(X_i), \quad
g_{nc,i} \gets 1 - P_{cll}(X_i).     
\end{equation*}

\Initialize the variance parameters $(\sigma_{c}^2,\sigma_{nc}^2)$
\begin{align*}
\sigma_c^2 \gets 1 \quad \sigma_{nc}^2 \gets 1 
\end{align*}

\Initialize the individual predictions from the expert networks $\big(Q_c(\cdot),Q_{nc}(\cdot)\big)$ as 
\begin{equation*}
Q_{c,i} \sim \mathcal{N}(0, 1) \quad
Q_{nc,i} \sim \mathcal{N}(0, 1)
\end{equation*}
\Iterate until convergence: \\
Compute individual contributions to each expert's likelihood as
\begin{align*}
    &L_{c,i} \gets \mathcal{N}_\mathcal{L}(Y_i|\mu=Q_{c,i},\sigma^2=\sigma_c^2) \\
    &L_{nc,i} \gets \mathcal{N}_\mathcal{L}(Y_i|\mu=Q_{nc,i},\sigma^2=\sigma_{nc}^2)
\end{align*}
Compute the posterior probabilities associated with the nodes of the tree as \Comment{E-step}
\begin{align*}
    &h_{c,i} \gets g_{c,i}L_{c,i} / \Big(g_{c,i}L_{c,i} + g_{nc,i}L_{nc,i} \Big) \\
    &h_{nc,i} \gets 1 - h_{nc,i}
\end{align*}
For the expert network $Q_c(\cdot)$ fit $\hat{Q}_c(\cdot)$ via weighted nonparametric regression \Comment{M-step}\\   
$\quad$ with features $X_i^{\zeta}$, targets $Y_i$ and weights $h_{c,i}$. \\
For the expert network $Q_{nc}(\cdot)$ fit $\hat{Q}_{nc}(\cdot)$ via weighted nonparametric regression \\   
$\quad$ with features $X_i^{\zeta}$, targets $Y_i$ and weights $h_{nc,i}$. \\
Update the variance parameter for each expert $\big(Q_c(\cdot),Q_{nc}(\cdot)\big)$ as
\begin{align*}
\sigma_c^2 &\gets  \frac{1}{n'}\sum_{i=1}^{n'} h_{c,i}\big( Y_i - \hat{Q}_{c}(X_i^{\zeta})\big)^2, \\
\sigma_{nc}^2 &\gets  \frac{1}{n'}\sum_{i=1}^{n'} h_{nc,i}\big( Y_i - \hat{Q}_{nc}(X_i^{\zeta})\big)^2
\end{align*}
Update the predictions from the expert networks $\big(Q_c(\cdot),Q_{nc}(\cdot)\big)$ as 
\begin{equation*}
Q_{c,i} \gets \hat{Q}_{c}(X_i^{\zeta}) \quad
Q_{nc,i} \gets \hat{Q}_{nc}(X_i^{\zeta})
\end{equation*}
\Endloop $\hat{Q}_{c}(\cdot)$
\end{footnotesize}
\end{algorithmic}
\end{algorithm}

\begin{algorithm}[H]
\caption{The EM procedure for estimating $Q_c(\cdot;\zeta),$ assuming exclusion restriction, when $Y$ is binary. Here, $Q_c$ denotes either $Q_{c11}$ or $Q_{c00}$ depending on the data subset considered.}\label{algo:7}
\begin{algorithmic}
\begin{footnotesize}
\Input Data $(X_i^{\zeta}, Y_i,P_{c\cdot m}(X_i),P_{k\cdot m}(X_i),P_{d\cdot m}(X_i))_{i:T_i=m}$ where $\{k,m\} \in \big\{\{a,1\},\{n,0\}\big\}$ and $X_i^{\zeta}$ is a relevant subset of the variables contained in $X_i.$

\Initialize the prior probabilities associated with the nodes of the tree as
\begin{equation*}
g_{c,i} \gets P_{c\cdot m}(X_i), \quad
g_{k,i} \gets P_{k\cdot m}(X_i), \quad    
g_{d,i} \gets P_{d\cdot m}(X_i).
\end{equation*}

\Initialize the parameters $(\zeta_c,\zeta_{nc})$ of the experts $\big(Q_c(\cdot),Q_{nc}(\cdot)\big)$ at random e.g., 
\begin{equation*}
\zeta_c \sim \mathcal{N}(0, D), \quad
\zeta_k \sim \mathcal{N}(0, D), \quad
\zeta_d \sim \mathcal{N}(0, D)
\end{equation*}
with $D$ a diagonal matrix.

\Compute individual predictions from the initiated expert networks $\big(Q_c(\cdot),Q_k(\cdot),Q_d(\cdot)\big)$ as 
\begin{align*}
&Q_{c,i} \gets \expit(\zeta_c^TX_i^\zeta), \\ 
&Q_{k,i} \gets \expit(\zeta_k^TX_i^\zeta), \\ 
&Q_{d,i} \gets \expit(\zeta_d^TX_i^\zeta).
\end{align*}

\Iterate until convergence on the parameters $\zeta=(\zeta_c^T,\zeta_k^T,\zeta_d^T)^T$: \\
Compute individual contributions to each expert's likelihood as
\begin{align*}
    &L_{c,i} \gets Q_{c,i}^{Y_i} (1-Q_{c,i})^{1-Y_i}\\
    &L_{k,i} \gets Q_{k,i}^{Y_i} (1-Q_{k,i})^{1-Y_i}\\
    &L_{d,i} \gets Q_{d,i}^{Y_i} (1-Q_{d,i})^{1-Y_i}&&
\end{align*}
Compute the posterior probabilities associated with the nodes of the tree as \Comment{E-step}
\begin{align*}
    &h_{c,i} \gets g_{c,i}L_{c,i} / \Big(g_{c,i}L_{c,i} + g_{k,i}L_{k,i} + g_{d,i}L_{d,i} \Big) \\
    &h_{k,i} \gets g_{k,i}L_{c,i} / \Big(g_{c,i}L_{c,i} + g_{k,i}L_{k,i} + g_{d,i}L_{d,i} \Big) \\
    &h_{d,i} \gets g_{d,i}L_{c,i} / \Big(g_{c,i}L_{c,i} + g_{k,i}L_{k,i} + g_{d,i}L_{d,i} \Big)
\end{align*}
For the expert network $Q_c(\cdot),Q_k(\cdot),Q_d(\cdot)$ estimate parameters $\zeta_c,\zeta_k,\zeta_d$ \\
by solving the IRLS problems \Comment{M-step}   
\begin{align*}
    \zeta_c \gets &\argmax\limits_{\zeta_c} \sum_{i=1}^{n'} h_{c,i} \Big[ Y_i\ln{\{ \expit(\zeta_c^TX_i^\zeta) \}}+ (1-Y_i)\ln{ \{1-\expit(\zeta_c^TX_i^\zeta) \} }\Big]\\
    \zeta_k \gets &\argmax\limits_{\zeta_k} \sum_{i=1}^{n'} h_{k,i} \Big[ Y_i\ln{\{ \expit(\zeta_k^TX_i^\zeta) \}}+ (1-Y_i)\ln{ \{1-\expit(\zeta_k^TX_i^\zeta) \} }\Big]\\
    \zeta_d \gets &\argmax\limits_{\zeta_d} \sum_{i=1}^{n'} h_{d,i} \Big[ Y_i\ln{\{ \expit(\zeta_d^TX_i^\zeta) \}}+ (1-Y_i)\ln{ \{1-\expit(\zeta_d^TX_i^\zeta) \} }\Big] 
\end{align*}
Update the predictions from the expert networks $\big(Q_c(\cdot),Q_{nc}(\cdot)\big)$ as 
\begin{align*}
&Q_{c,i} \gets  \expit(\zeta_c^TX_i^\zeta), \\
&Q_{k,i} \gets  \expit(\zeta_k^TX_i^\zeta), \\
&Q_{d,i} \gets  \expit(\zeta_d^TX_i^\zeta)&& 
\end{align*}
\Endloop $Q_c(x;\hat{\zeta})=\expit{(\zeta_c^Tx)}$
\end{footnotesize}
\end{algorithmic}
\end{algorithm}


\begin{algorithm}[H]
\caption{The EM procedure for estimating $Q_c(\cdot;\zeta),$ assuming exclusion restriction, when Y is continuous. Here, $Q_c$ denotes either $Q_{c11}$ or $Q_{c00}$ depending on the data subset considered.}\label{algo:8}
\begin{algorithmic}
\begin{footnotesize}
\Input Data $(X_i^{\zeta}, Y_i,P_{c\cdot m}(X_i),P_{k\cdot m}(X_i),P_{d\cdot m}(X_i))_{i:T_i=m}$ where $\{k,m\} \in \big\{\{a,1\},\{n,0\}\big\}$ and $X_i^{\zeta}$ is a relevant subset of the variables contained in $X_i.$

\Initialize prior probabilities associated with the nodes of the tree as
\begin{equation*}
g_{c,i} \gets P_{c\cdot m}(X_i), \quad
g_{k,i} \gets P_{k\cdot m}(X_i), \quad    
g_{d,i} \gets P_{d\cdot m}(X_i).
\end{equation*}
\Initialize the parameters $\Big((\zeta_c,\sigma_c^2),(\zeta_k,\sigma_k^2),(\zeta_d,\sigma_d^2)\Big)$ of the experts $\big(Q_c(\cdot),Q_k(\cdot),Q_d(\cdot)\big)$ at random e.g., 
\begin{align*}
\zeta_c \sim \mathcal{N}(0, D) \quad  \sigma_c^2 \gets 1,&
&\zeta_k \sim \mathcal{N}(0, D) \quad  \sigma_k^2 \gets 1, \\
\zeta_d \sim \mathcal{N}(0, D) \quad  \sigma_d^2 \gets 1,&& \text{with $D$ a diagonal matrix.}
\end{align*}

\Compute individual predictions from the initiated expert networks $\big(Q_c(\cdot),Q_k(\cdot),Q_d(\cdot)\big)$ as 
\begin{align*}
&Q_{c,i} \gets \zeta_c^TX_i^\zeta &
&Q_{k,i} \gets \zeta_k^TX_i^\zeta &
&Q_{d,i} \gets \zeta_d^TX_i^\zeta
\end{align*}
\Iterate until convergence on the parameters $\zeta=(\zeta_c^T,\zeta_k^T,\zeta_d^T)^T$: \\
Compute individual contributions to each expert's likelihood as
\begin{align*}
    L_{c,i} \gets \mathcal{N}_\mathcal{L}(Y_i|\mu=Q_{c,i},\sigma^2=\sigma_c^2)\\
    L_{k,i} \gets \mathcal{N}_\mathcal{L}(Y_i|\mu=Q_{k,i},\sigma^2=\sigma_k^2)\\
    L_{d,i} \gets \mathcal{N}_\mathcal{L}(Y_i|\mu=Q_{d,i},\sigma^2=\sigma_d^2)
\end{align*}
Compute the posterior probabilities associated with the nodes of the tree as \Comment{E-step}
\begin{align*}
    &h_{c,i} \gets g_{c,i}L_{c,i} / \Big(g_{c,i}L_{c,i} + g_{k,i}L_{k,i} + g_{d,i}L_{d,i} \Big) \\
    &h_{k,i} \gets g_{k,i}L_{c,i} / \Big(g_{c,i}L_{c,i} + g_{k,i}L_{k,i} + g_{d,i}L_{d,i} \Big) \\
    &h_{d,i} \gets g_{d,i}L_{c,i} / \Big(g_{c,i}L_{c,i} + g_{k,i}L_{k,i} + g_{d,i}L_{d,i} \Big)
\end{align*}
For the expert networks $\big(Q_c(\cdot),Q_k(\cdot),Q_d(\cdot)\big)$ estimate parameters $(\zeta_c, \zeta_k, \zeta_d)$ \\
$\quad$ by solving the weighted least-squares problem \Comment{M-step}   
\begin{align*}
    \zeta_c \gets \argmin\limits_{\zeta_c} \sum_{i=1}^{n'} h_{c,i} \big( Y_i - \zeta_c^TX_i^{\zeta}\big)^2 \\
    \zeta_k \gets \argmin\limits_{\zeta_k} \sum_{i=1}^{n'} h_{k,i} \big( Y_i - \zeta_k^TX_i^{\zeta}\big)^2 \\
    \zeta_d \gets \argmin\limits_{\zeta_d} \sum_{i=1}^{n'} h_{d,i} \big( Y_i - \zeta_d^TX_i^{\zeta}\big)^2 
\end{align*}
Update the variance parameter for each expert $\big(Q_c(\cdot),Q_k(\cdot),Q_d(\cdot)\big)$ as
\begin{align*}
\sigma_j^2 \gets & \frac{1}{n'-d}\sum_{i=1}^{n'} h_{j,i}\big( Y_i - \zeta_j^TX_i^{\zeta}\big)^2 
\end{align*}
with $j \in \{c, k,d\}$.\\
Update the predictions from the expert networks $\big(Q_c(\cdot),Q_k(\cdot),Q_d(\cdot)\big)$ as 
\begin{align*}
&Q_{c,i} \gets \zeta_c^TX_i^\zeta &
&Q_{k,i} \gets \zeta_k^TX_i^\zeta &
&Q_{d,i} \gets \zeta_d^TX_i^\zeta &
\end{align*}
\Endloop $Q_c(x;\hat{\zeta})=\zeta_c^Tx$
\end{footnotesize}
\end{algorithmic}
\end{algorithm}
\begin{algorithm}[H]
\caption{The nonparametric EM-like procedure for estimating $Q_c(\cdot),$ assuming exclusion restriction, when $Y$ is binary. Here, $Q_c$ denotes either $Q_{c11}$ or $Q_{c00}$ depending on the data subset considered.}\label{algo:9}
\begin{algorithmic}
\begin{footnotesize}
\Input Data $(X_i^{\zeta}, Y_i,P_{c\cdot m}(X_i),P_{k\cdot m}(X_i),P_{d\cdot m}(X_i))_{i:T_i=m}$ where $\{k,m\} \in \big\{\{a,1\},\{n,0\}\big\}$ and $X_i^{\zeta}$ is a relevant subset of the variables contained in $X_i.$

\Initialize the prior probabilities associated with the nodes of the tree as
\begin{equation*}
g_{c,i} \gets P_{c\cdot m}(X_i), \quad
g_{k,i} \gets P_{k\cdot m}(X_i), \quad    
g_{d,i} \gets P_{d\cdot m}(X_i).
\end{equation*}

\Initialize the individual predictions from the expert networks $\big(Q_c(\cdot),Q_k(\cdot),Q_d(\cdot)\big)$ as 
\begin{align*}
&Q_{c,i} \sim \mathcal{U}_{[0,1]} &
&Q_{k,i} \sim \mathcal{U}_{[0,1]} &
&Q_{d,i} \sim \mathcal{U}_{[0,1]} &
\end{align*}

\Iterate until convergence: \\
Compute individual contributions to each expert's likelihood as
\begin{align*}
    &L_{c,i} \gets Q_{c,i}^{Y_i} (1-Q_{c,i})^{1-Y_i}\\
    &L_{k,i} \gets Q_{k,i}^{Y_i} (1-Q_{k,i})^{1-Y_i}\\
    &L_{d,i} \gets Q_{d,i}^{Y_i} (1-Q_{d,i})^{1-Y_i}
\end{align*}
Compute the posterior probabilities associated with the nodes of the tree as \Comment{E-step}
\begin{align*}
    &h_{c,i} \gets g_{c,i}L_{c,i} / \Big(g_{c,i}L_{c,i} + g_{nc,i}L_{nc,i} \Big) \\
    &h_{k,i} \gets g_{k,i}L_{c,i} / \Big(g_{c,i}L_{c,i} + g_{nc,i}L_{nc,i} \Big) \\
    &h_{d,i} \gets g_{d,i}L_{c,i} / \Big(g_{c,i}L_{c,i} + g_{nc,i}L_{nc,i} \Big)
\end{align*}
For the expert network $Q_c(\cdot)$ fit $\hat{Q}_c(\cdot)$ via a weighted nonparametric classifier\Comment{M-step}\\
$\quad$ with features $X_i^{\zeta}$, targets $Y_i$ and weights $h_{c,i}$. \\
For the expert network $Q_k(\cdot)$ fit $\hat{Q}_k(\cdot)$ via a weighted nonparametric classifier \\
$\quad$ with features $X_i^{\zeta}$, targets $Y_i$ and weights $h_{k,i}$. \\
For the expert network $Q_d(\cdot)$ fit $\hat{Q}_d(\cdot)$ via a weighted nonparametric classifier \\
$\quad$ with features $X_i^{\zeta}$, targets $Y_i$ and weights $h_{d,i}$. \\
Update the predictions from the expert networks $\big(Q_c(\cdot),Q_k(\cdot),Q_d(\cdot)\big)$ as 
\begin{align*}
&Q_{c,i} \gets \hat{Q}_c(X_i^\zeta)&
&Q_{k,i} \gets \hat{Q}_k(X_i^\zeta)&
&Q_{d,i} \gets \hat{Q}_s(X_i^\zeta)&
\end{align*}
\Endloop $\hat{Q}_{c}(\cdot)$
\end{footnotesize}
\end{algorithmic}
\end{algorithm}

\begin{algorithm}[H]
\caption{The nonparametric EM-like procedure for estimating $Q_c(\cdot),$ assuming exclusion restriction, when $Y$ is continuous. Here, $Q_c$ denotes either $Q_{c11}$ or $Q_{c00}$ depending on the data subset considered.}\label{algo:10}
\begin{algorithmic}
\begin{footnotesize}
\Input Data $(X_i^{\zeta}, Y_i,P_{c\cdot m}(X_i),P_{k\cdot m}(X_i),P_{d\cdot m}(X_i))_{i:T_i=m}$ where $\{k,m\} \in \big\{\{a,1\},\{n,0\}\big\}$ and $X_i^{\zeta}$ is a relevant subset of the variables contained in $X_i.$

\Initialize the prior probabilities associated with the nodes of the tree as
\begin{equation*}
g_{c,i} \gets P_{c\cdot m}(X_i), \quad
g_{k,i} \gets P_{k\cdot m}(X_i), \quad    
g_{d,i} \gets P_{d\cdot m}(X_i).
\end{equation*}

\Initialize the variance parameters $(\sigma_{c}^2,\sigma_k^2,\sigma_d^2)$
\begin{align*}
& \sigma_c^2 \gets 1&
& \sigma_k^2 \gets 1&
& \sigma_d^2 \gets 1&
\end{align*}

\Initialize the individual predictions from the expert networks $\big(Q_c(\cdot),Q_k(\cdot),Q_d(\cdot)\big)$ as 
\begin{align*}
&Q_{c,i} \sim \mathcal{N}(0, 1) &
&Q_{k,i} \sim \mathcal{N}(0, 1) &
&Q_{d,i} \sim \mathcal{N}(0, 1) &
\end{align*}
\Iterate until convergence: \\
Compute individual contributions to each expert's likelihood as
\begin{align*}
    &L_{c,i} \gets \mathcal{N}_\mathcal{L}(Y_i|\mu=Q_{c,i},\sigma^2=\sigma_c^2) \\
    &L_{k,i} \gets \mathcal{N}_\mathcal{L}(Y_i|\mu=Q_{k,i},\sigma^2=\sigma_k^2) \\
    &L_{d,i} \gets \mathcal{N}_\mathcal{L}(Y_i|\mu=Q_{d,i},\sigma^2=\sigma_d^2)
\end{align*}
Compute the posterior probabilities associated with the nodes of the tree as \Comment{E-step}
\begin{align*}
    &h_{c,i} \gets g_{c,i}L_{c,i} / \Big(g_{c,i}L_{c,i} + g_{nc,i}L_{k,i} + g_{nc,i}L_{d,i} \Big) \\
    &h_{k,i} \gets g_{k,i}L_{c,i} / \Big(g_{c,i}L_{c,i} + g_{nc,i}L_{k,i} + g_{nc,i}L_{d,i} \Big) \\
    &h_{d,i} \gets g_{d,i}L_{c,i} / \Big(g_{c,i}L_{c,i} + g_{nc,i}L_{k,i} + g_{nc,i}L_{d,i} \Big)
\end{align*}
For the expert network $Q_c(\cdot)$ fit $\hat{Q}_c(\cdot)$ via weighted nonparametric regression \Comment{M-step}\\   
 $\quad$ with features $X_i^{\zeta}$, targets $Y_i$ and weights $h_{c,i}$. \\
For the expert network $Q_k(\cdot)$ fit $\hat{Q}_k(\cdot)$ via weighted nonparametric regression \\   
 $\quad$ with features $X_i^{\zeta}$, targets $Y_i$ and weights $h_{k,i}$. \\
For the expert network $Q_d(\cdot)$ fit $\hat{Q}_d(\cdot)$ via weighted nonparametric regression \\\   
 $\quad$ with features $X_i^{\zeta}$, targets $Y_i$ and weights $h_{d,i}$. \\ 
Update the variance parameter for each expert $Q_c(\cdot)$ fit $\hat{Q}_c(\cdot)$ as
\begin{align*}
\sigma_c^2 & \gets  \frac{1}{n'}\sum_{i=1}^{n'} h_{c,i}\big( Y_i - \hat{Q}_{c}(X_i^{\zeta})\big)^2,  \\
\sigma_{nc}^2 & \gets  \frac{1}{n'}\sum_{i=1}^{n'} h_{nc,i}\big( Y_i - \hat{Q}_{nc}(X_i^{\zeta})\big)^2
\end{align*}
Update the predictions from the expert networks $Q_c(\cdot)$ fit $\hat{Q}_c(\cdot)$ as 
\begin{align*}
&Q_{c,i} \gets \hat{Q}_{c}(X_i^{\zeta}) &
&Q_{k,i} \gets \hat{Q}_{k}(X_i^{\zeta}) &
&Q_{d,i} \gets \hat{Q}_{d}(X_i^{\zeta}) &
\end{align*}
\Endloop $\hat{Q}_{c}(\cdot)$
\end{footnotesize}
\end{algorithmic}
\end{algorithm}

\begin{algorithm}[H]
\caption{The EM procedure for estimating $\rho_s(\cdot;\delta_s)$ where $s \in \{c,a,n\},$ assuming monotonicity.}\label{algo:11}
\begin{algorithmic}
\begin{footnotesize}
\Input Data $(X_i^\rho, Z_i, T_i)_{1 \leq i \leq n}$ where $X_i^{\rho}$ is a relevant subset of the variables contained in $X_i$. 

\Initialize the prior probabilities associated with the nodes of the tree as
\begin{equation*}
g_{c,i} \gets 1/3, \quad g_{a,i} \gets 1/3, \quad \text{and} \quad g_{n,i} \gets 1/3.    
\end{equation*}

\Compute individual contributions to each expert's likelihood as
\begin{align*}
    &L_{c,i} \gets Z_iT_i + (1-Z_i)(1-T_i)\\
    &L_{a,i} \gets T_i\\
    &L_{n,i} \gets 1-T_i
\end{align*}

\Iterate until convergence on the parameters $\delta=(\delta_c^T,\delta_a^T,\delta_n^T)^T$: \\
Compute the posterior probabilities associated with the nodes of the tree as \Comment{E-step}
\begin{align*}
    &h_{c,i} \gets g_{c,i}L_{c,i} / \sum_{s\in\{c,a,n,d\}} g_{s,i}L_{s,i} \\
    &h_{a,i} \gets g_{a,i}L_{a,i} / \sum_{s\in\{c,a,n,d\}} g_{s,i}L_{s,i} \\
    &h_{n,i} \gets g_{n,i}L_{n,i} / \sum_{s\in\{c,a,n,d\}} g_{s,i}L_{s,i}
\end{align*}
For the gating network $\rho(\cdot)$ estimate parameters $\delta$ by solving the IRLS problem\Comment{M-step}\\ 
\begin{equation*}
    \delta \gets \argmax\limits_{\delta} \sum_{i=1}^n \sum_{s\in\{c,a,n\}} h_{s,i} \ln{ \Big (\frac{\exp{\delta_s^TX_i^\rho}}{\sum_{k\in\{c,a,n\}}\exp{\delta_k^TX_i^\rho} }\Big)} 
\end{equation*}
$\quad$ as a multinomial logistic regression with features $(X_i^{\rho})_{1 \leq i \leq n}$,\\
$\quad$ and targets $(h_{c,i}, h_{a,i}, h_{n,i})_{1 \leq i \leq n}.$ \\
Update the prior probabilities associated with the nodes of the tree as 
\begin{align*}
    g_{c,i} \gets \exp{\delta_c^TX_i^\rho} / \sum_{k\in\{c,a,n\}}\exp{\delta_k^TX_i^\rho} \\
    g_{a,i} \gets \exp{\delta_a^TX_i^\rho} / \sum_{k\in\{c,a,n\}}\exp{\delta_k^TX_i^\rho} \\
    g_{n,i} \gets \exp{\delta_n^TX_i^\rho} / \sum_{k\in\{c,a,n\}}\exp{\delta_k^TX_i^\rho} 
\end{align*}

\Endloop $\rho_s(x;\hat{\delta})=\exp{\delta_s^Tx} / \sum_{k\in\{c,a,n\}}\exp{\delta_k^Tx}.$
\end{footnotesize}
\end{algorithmic}
\end{algorithm}


\begin{algorithm}[H]
\caption{The nonparametric EM-like procedure for estimating $\rho_s(\cdot)$ where $s \in \{c,a,n\},$ assuming monotonicity. }\label{algo:12}
\begin{algorithmic}
\begin{footnotesize}
\Input Data $(X_i^{\rho}, Z_i, T_i)_{1 \leq i \leq n}$ where $X_i^{\rho}$ is a relevant subset of the variables contained in $X_i$. 

\Initialize the prior probabilities associated with the nodes of the tree as
\begin{equation*}
g_{c,i} \gets 1/3, \quad g_{a,i} \gets 1/3, \quad \text{and} \quad g_{n,i} \gets 1/3.    
\end{equation*}

\Compute individual contributions to each expert's likelihood as
\begin{align*}
    &L_{c,i} \gets Z_iT_i + (1-Z_i)(1-T_i)\\
    &L_{a,i} \gets T_i\\
    &L_{n,i} \gets 1-T_i
\end{align*}

\Iterate until convergence: \\
Compute the posterior probabilities associated with the nodes of the tree as \Comment{E-step}
\begin{align*}
    &h_{c,i} \gets g_{c,i}L_{c,i} / \sum_{s\in\{c,a,n\}} g_{s,i}L_{s,i} \\
    &h_{a,i} \gets g_{a,i}L_{a,i} / \sum_{s\in\{c,a,n\}} g_{s,i}L_{s,i} \\
    &h_{n,i} \gets g_{n,i}L_{n,i} / \sum_{s\in\{c,a,n\}} g_{s,i}L_{s,i} 
\end{align*}
For the gating network fit $\hat{\rho}_s(\cdot), s\in\{c,a,n\}$ as a multiclass classification \Comment{M-step}\\ 
$\quad$ problem with features $(X_i^{\hat{\rho}})_{1 \leq i \leq n}$, and targets $(h_{c,i}, h_{a,i}, h_{n,i})_{1 \leq i \leq n}.$ \\
Update the prior probabilities associated with the nodes of the tree as 
\begin{align*}
    g_{c,i} \gets \hat{\rho}_c(X_i^\rho), \quad 
    g_{a,i} \gets \hat{\rho}_a(X_i^\rho), \quad
    g_{n,i} \gets \hat{\rho}_n(X_i^\rho)
\end{align*}

\Endloop $\hat{\rho}_c(\cdot)$
\end{footnotesize}
\end{algorithmic}
\end{algorithm}

\section{Identifiabilty results}\label{app:ident}
\begin{theorem}\label{thm:identif_mix1}
Let $\mathcal{Z} = \{0,1\}$ denote the space of allocated treatment, and assume that the space of covariable $\mathcal{X}$ is an open subset of $\R^d.$ Suppose the functions $\pi(\cdot\,;\beta) \colon \mathcal{X} \times \mathcal{Z} \to \R$ are of the form
\begin{align*}
    \pi(x,z\,; \beta)&=z \rho_{c}(x;\beta)+\rho_{a}(x;\beta) +0 \times \rho_{n}(x;\beta)+(1-z) \rho_{d}(x;\beta)
\end{align*}
where
\begin{align*}
    &\rho_k(x;\beta)=\frac{\exp(\beta^{T}_k x)}{1+ \sum_{l \in \{c,a,n\}} \exp(\beta^{T }_l x) }, \; \textnormal{for}~k \in \{ c, a, n\}, \\
&\rho_d(x;\beta)=\frac{1}{1+ \sum_{l \in \{c,a,n\}} \exp(\beta^{ T }_l x) }.
\end{align*}
with parameters $\beta=(\beta_c,\beta_a,\beta_n) \in \Theta=\R^d\setminus \{0\} \times \R^d \times \R^d.$
Then, the statistical model $\{\pi(\cdot;\beta)\}_{\beta \in \Theta}$ is identifiable.
\end{theorem}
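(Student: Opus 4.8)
The plan is to show that the parametrization map $\beta\mapsto\pi(\cdot;\beta)$ is injective on $\Theta$. Since $\mathcal{Z}=\{0,1\}$ is finite, equality of $\pi(\cdot;\beta)$ and $\pi(\cdot;\beta')$ on $\mathcal{X}\times\mathcal{Z}$ amounts to the two functional identities $\pi(x,1;\beta)=\pi(x,1;\beta')$ and $\pi(x,0;\beta)=\pi(x,0;\beta')$ for all $x\in\mathcal{X}$. Writing $D(x)=1+e^{\beta_c^Tx}+e^{\beta_a^Tx}+e^{\beta_n^Tx}$, one has $\pi(x,1;\beta)=(e^{\beta_c^Tx}+e^{\beta_a^Tx})/D(x)$ and $\pi(x,0;\beta)=(1+e^{\beta_a^Tx})/D(x)$, with baseline term $\rho_d(x;\beta)=1/D(x)>0$. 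The difficulty is that $D$ is a nuisance entangling the three exponentials, so I would first eliminate it by forming ratios in which the common factor $1/D$ cancels, for instance
\begin{align*}
\frac{\pi(x,0;\beta)}{1-\pi(x,1;\beta)}=\frac{e^{\beta_a^Tx}+1}{e^{\beta_n^Tx}+1},\qquad \frac{1-\pi(x,0;\beta)}{1-\pi(x,1;\beta)}=\frac{e^{\beta_c^Tx}+e^{\beta_n^Tx}}{e^{\beta_n^Tx}+1},
\end{align*}
where $1-\pi(x,1;\beta)=\rho_n+\rho_d>0$ guarantees the denominators never vanish.

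The main tool is the classical fact that the functions $x\mapsto e^{w^Tx}$, indexed by distinct $w\in\R^d$, are linearly independent on any nonempty open subset of $\R^d$; this reduces to the one-variable statement by restricting to a generic line $x_0+tv$ along which the scalars $w_j^Tv$ are distinct. Equating the two ratios above for $\beta$ and $\beta'$ and clearing denominators yields identities between finite sums of such exponentials; from the second ratio one obtains
\begin{align*}
e^{(\beta_c+\beta_n')^Tx}+e^{\beta_c^Tx}+e^{\beta_n^Tx}=e^{(\beta_c'+\beta_n)^Tx}+e^{(\beta_c')^Tx}+e^{(\beta_n')^Tx}.
\end{align*}
Linear independence forces the multiset of exponents on the left to equal that on the right. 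Summing the elements of each multiset (a consequence of their equality) gives $2\beta_c+\beta_n+\beta_n'=2\beta_c'+\beta_n+\beta_n'$, hence $\beta_c=\beta_c'$.

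With $\beta_c=\beta_c'$ in hand, the same multiset identity reduces to $\{\beta_c+\beta_n',\beta_n\}=\{\beta_c+\beta_n,\beta_n'\}$, whose only two matchings give either $\beta_n=\beta_n'$ or $\beta_c=0$; the hypothesis $\beta_c\in\R^d\setminus\{0\}$ is exactly what rules out the spurious branch, leaving $\beta_n=\beta_n'$. Finally, applying the same exponent-summing argument to the first ratio yields $\beta_a+\beta_n'=\beta_a'+\beta_n$, so $\beta_n=\beta_n'$ forces $\beta_a=\beta_a'$, giving $\beta=\beta'$. I expect the main obstacle to be the careful bookkeeping of the multiset-matching step: one must verify that after clearing denominators the surviving exponentials genuinely pair up (handling possible coincidences among exponents) and pin down precisely where the non-degeneracy condition $\beta_c\neq0$ enters, since this is the sole place the constraint defining $\Theta$ is used.
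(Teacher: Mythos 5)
Your proposal is correct, and it reaches the conclusion by a genuinely different route from the paper. Both arguments start identically—specializing at $z=0$ and $z=1$ and using $\beta_c\neq 0$ as the sole non-degeneracy input—but the key tool differs. The paper cross-multiplies the full denominators $1+\sum_{l}\exp(\beta_l^Tx)$ and then matches the first- and second-order Taylor coefficients at the origin of the resulting analytic identities; the quadratic step culminates in an identity of the form $[\beta_c^Tx]\,[(\beta_a-\gamma_n-\gamma_a+\beta_n)^Tx]=0$, from which $\beta_c\neq0$ forces the second linear form to vanish. You instead cancel the common denominator outright by forming the ratios $\pi(x,0)/\{1-\pi(x,1)\}$ and $\{1-\pi(x,0)\}/\{1-\pi(x,1)\}$ (legitimate since $1-\pi(x,1;\beta)=\rho_n+\rho_d>0$), clear the remaining two-term denominators, and invoke linear independence of $x\mapsto e^{w^Tx}$ over distinct frequencies $w$ to equate multisets of exponents; I checked the displayed identity (the $e^{(\beta_n+\beta_n')^Tx}$ terms do cancel), the exponent-summing step giving $\beta_c=\beta_c'$, and the two-element matching where $\beta_c\neq0$ kills the spurious branch—all are sound, and the coincidence issue you flag is harmless because every coefficient is $+1$, so grouping by distinct exponent yields exactly multiset equality. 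Your approach buys a cleaner localization of where $\beta_c\neq0$ enters and avoids the second-order Taylor bookkeeping (and the paper's implicit extension of the identity to a neighborhood of $0$ by analyticity); the paper's approach is more mechanical and generalizes more readily to gating networks that are smooth but not sums of exponentials, since it only needs finitely many derivatives at a point.
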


\begin{remark}\label{rmk:identifiability}
In the above model, we assumed that $\beta_c \neq 0$. In view of our application, this assumption that $\beta_c \neq 0$ is very mild. Indeed, letting $\beta_c=0$ implies that $\rho_c=\rho_d$. In plain words, this means that, for every individual, the probability of being a complier and the probability of being a defier are the same. Thus, our assumption holds as soon as a single individual has a probability of being a complier that is different from 
    their probability of being a defier.
\end{remark}

\begin{proof}[Proof of Theorem \ref{thm:identif_mix1}]
 Consider  $\pi(\cdot;\beta)$ and $\pi(\cdot ;\gamma)$ with $\beta$ and $\gamma \in \Theta$. Assume that $\pi(\cdot;\beta)=\pi(\cdot;\gamma)$ on $\mathcal{X}\times \mathcal{Z}$.  Specializing in $z=0$ and $z=1$, we get the following equations on $\mathcal{X}$:

\begin{equation}\label{eq:master_sys}
    \begin{cases}
        \rho_c(x;\beta)+\rho_a(x;\beta)=\rho_c(x;\gamma)+\rho_a(x;\gamma)\\
        \rho_a(x;\beta)+\rho_d(x;\beta)=\rho_a(x;\gamma)+\rho_d(x;\gamma)  .
    \end{cases}
\end{equation}
The system \eqref{eq:master_sys} is equivalent to
\begin{equation}\label{eq:funda_sys}
    \begin{cases}
        \Big(1+ \sum \limits_{l \in \{c,a,n\}} \exp(\gamma^{ T }_l x)\Big) \sum \limits_{l \in \{c,a\}}\exp(\beta^{ T}_l x) =  \Big(1+ \sum \limits_{l \in \{c,a,n\}} \exp(\beta^{ T }_l x)\Big) \sum \limits_{l \in \{c,a\}} \exp(\gamma^{T}_l x)\\
        \Big(1+ \sum \limits_{l \in \{c,a,n\}} \exp(\gamma^{ T }_l x)\Big) \Big(1+ \exp(\beta^{ T}_a x) \Big)=  \Big(1+ \sum \limits_{l \in \{c,a,n\}} \exp(\beta^{ T }_l x)\Big) \Big(1+ \exp(\gamma^{T}_a x) \Big)
    \end{cases}
\end{equation}

Using the expansion of $\exp$ in power series, the first order term of the expansion of the System \eqref{eq:funda_sys} provides the following identities

\begin{equation*}
\begin{cases}
\beta_c+\beta_a+\gamma_n=\gamma_c+\gamma_a+\beta_n\\
\gamma_c+\gamma_n+\beta_a=\beta_c+\beta_n+\gamma_a
\end{cases}
\end{equation*}
which leads to
\begin{equation}\label{eq:id_step1}
\begin{cases}
\beta_c=\gamma_c\\
\gamma_n+\beta_a=\beta_n+\gamma_a.
\end{cases}
\end{equation}
Using again the expansion of $\exp$ in power series, the second order term of the expansion of the System \eqref{eq:funda_sys} provides the following identities
\begin{align*}
\begin{cases}
[\gamma_c^T x + \beta_a^T x]^2+[\gamma_n^T x + \beta_a^T x ]^2 + [\gamma_c^T x]^2+[\gamma_n^T x]^2\\=[\beta_c^T x + \gamma_a^T x]^2+[\beta_n^T x + \gamma_a^T x ]^2 + [\beta_c^T x]^2+[\beta_n^T x]^2\\
\\
[\gamma_c^T x + \beta_a^T x]^2+[\beta_n^T x + \gamma_c^T x ]^2 + [\gamma_c^T x]^2+[\gamma_a^T x]^2\\
=[\beta_c^T x + \gamma_a^T x]^2+[\beta_c^T x + \gamma_n^T x ]^2 + [\beta_c^T x]^2+[\beta_a^T x]^2.
\end{cases}
\end{align*}
Using the relations provided by the System \eqref{eq:id_step1}, the above equations simplify to

\begin{align*}
\begin{cases}
   [\gamma_c^T x + \beta_a^T x]^2 + [\gamma_n^T x]^2=[\beta_c^T x + \gamma_a^T x]^2+[\beta_n^T x]^2\\
   [\beta_n^T x + \gamma_c^T x ]^2+[\gamma_a^T x]^2=[\beta_c^T x + \gamma_n^T x ]^2+[\beta_a^T x]^2 .
\end{cases}
\end{align*}
Expanding and symplifying the above expression, we get
\begin{align*}
\begin{cases}
     [\beta_a^T x]^2+ 2 [\gamma_c^T x] [\beta_a^T x] + [\gamma_n^T x]^2
     =[\beta_n^T x]^2+2 [\beta_c^T x]  [\gamma_a^T x]+ [\gamma_a^T x]^2\\
   [\beta_a^T x]^2+2 [\beta_c^T x] [\gamma_n^T x ]+[\gamma_n^T x ]^2
   =[\gamma_a^T x]^2+2[\beta_n^T x] [\gamma_c^T x] + [\beta_n^T x]^2.
\end{cases}
\end{align*}
Substracting the second equation to the first one in the above system provides the relation
\begin{equation*}
    2 [\beta_c^T x] \left([\beta_a^T x]-[\gamma_n^T x ] \right)= 2 [\beta_c^T x] \left([\gamma_a^T x]-[\beta_n^T x ] \right).
\end{equation*}
That is
\begin{equation*}
    [\beta_c^T x] \left([\beta_a-\gamma_n-\gamma_a+\beta_n]^T x\right)= 0.
\end{equation*}
Consider the function $\phi \colon x \mapsto \beta_c^T x$ and $\psi \colon x \mapsto [\beta_a-\gamma_n-\gamma_a+\beta_n]^T x$. Assuming $\beta_c \neq 0$, we get that $\psi = 0$ on the open set $\mathcal{X} \setminus \ker \phi$. Since $\psi$ is linear, it is zero on all of $\R^d$. It follows that
\begin{equation*}
    \beta_a-\gamma_n=\gamma_a-\beta_n.
\end{equation*}
This relation together with the System of equations \eqref{eq:id_step1} implies that
\begin{align*}
   \beta_c=\gamma_c, \quad \beta_a=\gamma_a ,\quad \beta_n =\gamma_n. 
\end{align*}
\end{proof}
\begin{theorem}\label{thm:identif_mix2_lm}
Assume that the space of covariable $\mathcal{X}$ is a non-empty open subset of $\R^d$  containing zero and that the conditional probability function $P_{cll} \colon \mathcal{X} \to (0,1)$ is a non-constant continuous function. Then, the statistical models of the form
\begin{align*}
    \Big\{P_{cll}(x)\alpha^T x +\{1-P_{cll}(x)\}\beta^T x \colon (\alpha, \beta) \in \Theta \Big\}
\end{align*}
are identifiable.
\end{theorem}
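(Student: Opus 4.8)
The plan is to prove identifiability directly, by showing that any two parameter pairs producing the same conditional-mean function must coincide. Suppose $(\alpha,\beta)$ and $(\alpha',\beta')$ in $\Theta$ yield the same function on $\mathcal{X}$, so that $P_{cll}(x)\alpha^Tx+\{1-P_{cll}(x)\}\beta^Tx=P_{cll}(x)\alpha'^Tx+\{1-P_{cll}(x)\}\beta'^Tx$ for every $x\in\mathcal{X}$. Writing $a=\alpha-\alpha'$ and $b=\beta-\beta'$, and setting $v=b$, $w=a-b$, subtraction and rearrangement yield the master identity
\[
P_{cll}(x)\,w^Tx+v^Tx=0\qquad\text{for all }x\in\mathcal{X}.
\]
Since $0\in\mathcal{X}$ and $\mathcal{X}$ is open, it contains a ball $B(0,r)$. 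Hence if $w=0$ the identity forces $v^Tx=0$ throughout $B(0,r)$, and because an open ball spans $\R^d$ this gives $v=0$; then $a=w+v=0$ and $b=v=0$, so $(\alpha,\beta)=(\alpha',\beta')$. It therefore suffices to prove $w=0$.

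I would argue this by contradiction, assuming $w\neq0$. The set $U=\{x\in\mathcal{X}:w^Tx\neq0\}$ is open and dense in $\mathcal{X}$, since its complement lies in the hyperplane $H=\ker(w^T)$; and on $U$ the master identity rearranges to $P_{cll}(x)=-\,v^Tx/w^Tx$, an explicit ratio of linear functionals. The whole strategy is then to play this rational form against the two structural hypotheses on $P_{cll}$, namely that it is bounded (valued in $(0,1)$) and that it is non-constant, each hypothesis being used in a different regime.

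Accordingly, I would split into two cases depending on whether $v$ is parallel to $w$. If $v=\lambda w$ for some scalar $\lambda$, then $P_{cll}\equiv-\lambda$ on the dense set $U$, and by continuity $P_{cll}$ is constant on all of $\mathcal{X}$, contradicting non-constancy. Otherwise $v\notin\mathrm{span}(w)$, so the functional $x\mapsto v^Tx$ does not vanish identically on $H$; as $H\cap\mathcal{X}$ is a relatively open neighbourhood of $0$ in $H$, there is a point $x^*\in H\cap\mathcal{X}$ with $v^Tx^*\neq0$. Moving off $H$ along $w$, i.e. evaluating at $x^*+sw$ for small $s\neq0$ (which stays in $\mathcal{X}$ by openness), one has $w^T(x^*+sw)=s\|w\|^2\to0$ while $v^T(x^*+sw)\to v^Tx^*\neq0$, so $P_{cll}(x^*+sw)$ diverges as $s\to0^+$, contradicting $P_{cll}\in(0,1)$. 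Both cases being impossible, $w=0$, and the reduction above concludes the argument. The main obstacle is precisely the degenerate parallel case: there the blow-up argument collapses because the rational function is genuinely constant, and one must instead invoke non-constancy, so the proof only goes through because both hypotheses on $P_{cll}$ are available at once.
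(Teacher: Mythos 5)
Your proof is correct. It rests on the same master identity as the paper's, namely $P_{cll}(x)\,w^Tx+v^Tx=0$ with $w=(\alpha-\alpha')-(\beta-\beta')$ and $v=\beta-\beta'$, and it deploys the two hypotheses on $P_{cll}$ in the same roles; but the mechanics differ in the non-degenerate case. The paper observes that, because $P_{cll}$ never vanishes, the linear forms $x\mapsto w^Tx$ and $x\mapsto -v^Tx$ must share their kernel on a neighbourhood of $0$, hence be proportional, which reduces everything at once to the identity $(1+\lambda P_{cll}(x))\,v^Tx=0$; non-constancy plus continuity then kills $v$. You instead split explicitly on whether $v\in\mathrm{span}(w)$, handle the parallel case by the same non-constancy argument (via density of $\{w^Tx\neq0\}$), and handle the non-parallel case by a blow-up of the rational expression $-v^Tx/w^Tx$ near the hyperplane $\ker(w^T)$, which uses boundedness of $P_{cll}$ rather than merely its non-vanishing. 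Your blow-up is valid but more elaborate than necessary: at the point $x^*\in\ker(w^T)\cap\mathcal{X}$ with $v^Tx^*\neq0$ that you construct, the master identity already reads $0+v^Tx^*=0$, an immediate contradiction with no limit required --- this direct evaluation is in effect what the paper's kernel-coincidence step does. The one thing your route buys is that it makes transparent exactly where each hypothesis ($P_{cll}$ bounded away from being identically constant, $P_{cll}$ valued in a bounded set) enters; the paper's route is shorter and needs only $P_{cll}\neq0$ in the non-parallel case.
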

%
\begin{proof}
Let $\{(\alpha, \beta),(\alpha', \beta') \}\in \Theta^2$ such that, $\forall x\in \mathcal{X}$  
\begin{align*}
P_{cll}(x)\alpha^T x +\{1-P_{cll}(x)\}\beta^T x= P_{cll}(x)\alpha'^T x +\{1-P_{cll}(x)\}\beta'^T x.
\end{align*}
From the last equation, algebra yields: $\forall x\in \mathcal{X},$
\begin{align*}
P_{cll}(x)(\alpha-\alpha')^T x +\{1-P_{cll}(x)\}(\beta-\beta')^T x = 0 
\end{align*}
\begin{align}\label{eq:identifiability_step2}
\Big[P_{cll}(x)\big((\alpha-\alpha')^T-(\beta-\beta')^T\big) +(\beta-\beta')^T \Big] x= 0
\end{align}

\begin{align}\label{eq:identiability}
P_{cll}(x)\Big[(\alpha-\alpha')^T - (\beta-\beta')^T \Big] x= -(\beta-\beta')^T x.
\end{align}
Since $P_{cll}$ does not vanish and $\mathcal{X}$ is an open set containing zero, it follows from Equation \eqref{eq:identiability} that the linear forms $ x \mapsto \Big[(\alpha-\alpha')^T - (\beta-\beta')^T \Big] x$ and $ x \mapsto -(\beta-\beta')^T x$ have the same kernel. This implies that there exists $\lambda \in \R$ such that 
\begin{equation*}
    (\alpha-\alpha') - (\beta-\beta')=\lambda (\beta-\beta')
\end{equation*}
Using Equation \eqref{eq:identifiability_step2}, and the above observation, we get
\begin{equation*}
  (1+ \lambda P_{cll}(x)) (\beta-\beta')^T x= 0 \quad \forall x \in \mathcal{X}
\end{equation*}
Since $P_{cll}$ is non-constant, there exists $z \in \mathcal{X}$ such that $1+ \lambda P_{cll}(z) \neq 0$.  Since $P_{cll}$ is continuous, there is an open neighborhood $B_z$ of $z$, such that 
\begin{equation*}
\forall x \in B_z, \quad 1+ \lambda P_{cll}(x) \neq 0.
\end{equation*}
Hence, for every $x \in B_z$, $(\beta-\beta')^T x=0$. Since, the linear form $x \mapsto (\beta-\beta')^T x$ vanishes on an open set, it is zero. This implies that $\beta=\beta^\prime$. In turn, this yields that $\alpha=\alpha^\prime$.
\end{proof}
\begin{theorem}\label{thm:identif_mix2_expit}
Let $\mathcal{X}$ be an open subset of $\R^d$ containing zero and define $ g_{\alpha,\beta} \colon \mathcal{X} \to \R$ as
\begin{equation*}
    g_{\alpha,\beta}(x) \! = \! P_{cll}(x)\expit{(\alpha^T x)} + (1-P_{cll}(x))\expit{(\beta^T x)}.
\end{equation*}
Assume that the function $P_{cll} \colon \mathcal{X} \to (0;1)$ is $\mathcal{C}^2$ and that $\frac{\partial P_{cll}(x)}{\partial x_i}(0)\neq 0$ for all $i\in\{1,2,\dots,d\}.$ 
Then, the statistical models of the form $\Big\{g_{\alpha,\beta} \colon (\alpha, \beta) \in \Theta \Big\}$
are identifiable. 
\end{theorem}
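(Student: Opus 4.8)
The plan is to prove identifiability by matching Taylor expansions at the origin. Suppose $(\alpha,\beta)$ and $(\alpha',\beta')$ in $\Theta$ satisfy $g_{\alpha,\beta}(x)=g_{\alpha',\beta'}(x)$ for all $x\in\mathcal{X}$, and define $F = g_{\alpha,\beta}-g_{\alpha',\beta'}$, so that $F\equiv 0$ on a neighborhood of $0$ (legitimate since $0\in\mathcal{X}$ and $\mathcal{X}$ is open). Because $P_{cll}$ is $\mathcal{C}^2$ and $\expit$ is smooth, $F$ is $\mathcal{C}^2$, and since $F$ vanishes identically near $0$, every partial derivative of $F$ at $0$ up to order two must vanish; equivalently, the homogeneous degree-one and degree-two parts of its Taylor polynomial are identically zero. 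The whole argument rests on three elementary values: $\expit(0)=1/2$, $\expit'(0)=1/4$, and crucially $\expit''(0)=0$.

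First I would extract the first-order information. Writing $\delta_\alpha = \alpha-\alpha'$, $\delta_\beta=\beta-\beta'$, $p_0 = P_{cll}(0)\in(0,1)$ and $v=\nabla P_{cll}(0)$, the difference of two experts expands as $\expit(\alpha^T x)-\expit(\alpha'^T x)=\tfrac14\delta_\alpha^T x+O(|x|^3)$, where the quadratic term drops out precisely because $\expit''(0)=0$. Multiplying by $P_{cll}(x)=p_0+v^Tx+O(|x|^2)$ and by its complement and collecting degree-one terms, the linear part of $F$ equals $\tfrac14\big[p_0\delta_\alpha+(1-p_0)\delta_\beta\big]^Tx$. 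Its vanishing gives the first relation
\[
p_0\,\delta_\alpha+(1-p_0)\,\delta_\beta=0.
\]

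Next I would read off the second-order part. Because the pure-square contributions $(\alpha^Tx)^2$, $(\alpha'^Tx)^2$, $(\beta^Tx)^2$, $(\beta'^Tx)^2$ all carry the factor $\expit''(0)=0$, the only surviving degree-two term of $F$ is the cross term between the gradient of $P_{cll}$ and the first-order expert differences, namely $\tfrac14\,(v^Tx)\,(\delta_\alpha-\delta_\beta)^Tx$. Imposing $\partial^2 F/\partial x_i^2(0)=0$ isolates $v_i\,(\delta_\alpha-\delta_\beta)_i=0$ for each coordinate $i$; since $\partial P_{cll}/\partial x_i(0)=v_i\neq0$ for every $i$ by hypothesis, this forces $\delta_\alpha=\delta_\beta$. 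Substituting into the first relation yields $\delta_\alpha=\delta_\beta=0$, that is $\alpha=\alpha'$ and $\beta=\beta'$, which is the claimed identifiability.

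The hard part will be the careful second-order bookkeeping, and the key structural fact that makes it tractable is $\expit''(0)=0$: without it, the degree-two part of $F$ would also contain the quadratic forms $p_0\big[(\alpha^Tx)^2-(\alpha'^Tx)^2\big]$ and its $\beta$-analogue, entangling the two relations and obstructing a clean coordinatewise conclusion. This is exactly where the assumption that every partial $\partial P_{cll}/\partial x_i(0)$ is nonzero enters, since it lets the diagonal second derivatives peel off each component of $\delta_\alpha-\delta_\beta$ separately. As an alternative to the coordinatewise argument, one may note that $v\neq0$ already suffices: the quadratic form $(v^Tx)(\delta_\alpha-\delta_\beta)^Tx$ vanishing on an open set forces the linear form $(\delta_\alpha-\delta_\beta)^Tx$ to vanish on the nonempty open set $\{v^Tx\neq0\}$, hence everywhere.
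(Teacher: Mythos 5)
Your proof is correct and follows essentially the same route as the paper's: a second-order Taylor expansion of the difference of the two mixtures at the origin, exploiting $\expit'(0)=\tfrac14$ and $\expit''(0)=0$ so that the linear part gives $p_0(\alpha-\alpha')+(1-p_0)(\beta-\beta')=0$ and the quadratic part gives $(\alpha-\alpha')=(\beta-\beta')$ via the nonvanishing partials of $P_{cll}$ at $0$ — the paper simply restricts to the coordinate lines $x=te_i$, which is your diagonal second-derivative computation written in one variable. Your closing observation that $\nabla P_{cll}(0)\neq 0$ alone already suffices (via the open-set argument on the quadratic form) is a correct, minor strengthening of the stated hypothesis.
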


\begin{proof}
Let $\{(\alpha, \beta),(\alpha', \beta') \}\in \Theta^2$ such that, $\forall x\in \mathcal{X}$
\begin{align*}
p(x)\expit{(\alpha^T x)} +\{1-p(x)\}\expit{(\beta^T x)}
= p(x)\expit{(\alpha'^T x)} +\{1-p(x)\}\expit{(\beta'^T x)}
\end{align*}
Setting $p_i(t)=p(t e_i)$ where $e_i$ is the $i^{th}$ vector of the canonical basis of $\R^d$, the above expression yields for $i=1,\ldots,d:$
\begin{align} \label{eq:constraint_along_i}
p_i(t)\expit{(\alpha_i t)} +\{1-p_i(t)\}\expit{(\beta_i t)} 
= p_i(t)\expit{(\alpha_i t)} +\{1-p_i(t)\}\expit{(\beta_i t)}.
\end{align}

Consider, now, the Taylor expansions at order two in zero of the left and right hand-side of Equation \eqref{eq:constraint_along_i}, we get: 
\begin{align*}
\begin{cases}
   p(0)\dfrac{\alpha_i}{4}+ \{1-p(0)\}\dfrac{\beta_i}{4}=p(0)\dfrac{\alpha_i'}{4}+\{1-p(0)\}\dfrac{\beta_i'}{4}\\
  \dfrac{\alpha_i}{4}\der{p_i}{}(0)-\dfrac{\beta_i}{4} \der{p_i}{}(0)=\dfrac{\alpha_i'}{4} \der{p_i}{}(0)-\dfrac{\beta_i'}{4}\der{p_i}{}(0).
\end{cases}
\end{align*}

Since $\der{p_i}{}(0)=\dfrac{\partial p}{\partial x_i}(0)\neq 0$ for all $i\in\{1,2,\dots,d\},$ this system further simplifies to:
\begin{align*}
\begin{cases}
   p(0)\{\alpha_i-\beta_i\}+\beta_i=p(0)\{\alpha'_i-\beta'_i\}+\beta'_i\\
   \alpha_i-\beta_i=\alpha_i'-\beta_i'.
\end{cases}
\end{align*}
Substitution of the second row into the first one yields:
\begin{align*}
\begin{cases}
   \beta_i=\beta'_i\\
   \alpha_i=\alpha_i'.
\end{cases}
\end{align*}
for every $i\in\{1,2,\dots,d\}$. That is, $(\alpha, \beta)=(\alpha', \beta').$
\end{proof}

\section{Asymptotic properties}\label{app:asymp}
Assuming parametric generalized linear models for $\rho_k(\cdot)$, $k\in\{c,a,n,d\}$ as well as  $Q_{c11}(\cdot)$, $Q_{a11}(\cdot)$ and $Q_{c00}(\cdot)\,Q_{n00}(\cdot),$ the estimator $\widehat{\Delta}_{PI}$ can be expressed as 
\begin{equation}\label{eq:estimator}
    \widehat{\Delta}_{PI}=\frac{\sum_{i=1}^n \big\{ Q_{c11}(X_i;\hat{\beta})-Q_{c00}(X_i;\hat{\gamma})\big\}\rho_c(X_i;\hat{\delta})}{\sum_{i=1}^n\rho_c(X_i;\hat{\delta})}.
\end{equation}
Rearranging Equation \eqref{eq:estimator}, we note that $\widehat{\Delta}_{PI}$ is the solution of an equation of the form
\begin{equation*}
    \sum_{i=1}^n \psi_0(X_i;\Delta,\delta,\beta,\gamma)=0
\end{equation*}
where
\begin{align*}
    \psi_0(x;\Delta_{PI},\delta,\beta,\gamma) &= \big\{ Q_{c11}(x;\beta)-Q_{c00}(x;\gamma) -\Delta_{PI}\big\}\rho_c(x;\delta).
\end{align*}
We note that $\hat{\beta}=(\hat{\beta}_c^T,\hat{\beta}_{nc}^T)^T$, $\hat{\gamma}=(\hat{\gamma}_c^T,\hat{\gamma}_{nc}^T)^T$ and $\hat{\delta}=(\hat{\delta}_c^T,\hat{\delta}_a^T,\hat{\delta}_n^T,\hat{\delta}_d^T)^T$ are parameters from mixture of expert models. In fact, each parameter $\hat{\beta},\hat{\gamma},\hat{\delta}$ solve an estimating (score) equation from the corresponding mixture of expert model. For example, the parameters $\hat{\delta}$ solve an equation of the form
\begin{align*}
     &\sum_{i=1}^n \psi_1(X_i,Z_i,T_i;\delta) = 0 
\end{align*}
    where
\begin{align*}
\psi_1(x,z,t;\delta)&= \frac{\partial }{\partial \delta} \ln p_{T|X,Z}(t|x,z;\delta),
\end{align*}
and 
\[
p_{T|X,Z}(t|x,z;\delta)= \hspace{-0.5cm} \sum_{s \in \{c,a,n,d\}}  \hspace{-0.3cm}\frac{\exp (\delta_s^T x) \,\mu_s(z)^{t}(1-\mu_s(z))^{1-t}}{\sum_{k \in \{c,a,n,d \}} \exp \delta_k^T x}.
\]
 In addition, the parameters $\hat{\beta},\hat{\gamma}$ solve equations of the form
\begin{align*}
     &\sum_{i=1}^n \psi_2(X_i,Z_i,T_i,Y_i;\delta,\beta) = 0, \\
     &\sum_{i=1}^n \psi_3(X_i,Z_i,T_i,Y_i;\delta,\gamma) = 0 
\end{align*}
where
\begin{align*}
\psi_2(x,z,t,y;\delta,\beta)&= \mathbbm{1}(z=1)\mathbbm{1}(t=1) \frac{\partial }{\partial \beta}\ln \Big\{ P_{c11}(x;\delta) p_{Y|X}(y|x;\beta_c)+ P_{a11}(x;\delta) p_{Y|X}(y|x;\beta_{nc}) \Big\},\\
\psi_3(x,z,t,y;\delta,\gamma)&= \mathbbm{1}(z=0)\mathbbm{1}(t=0) \frac{\partial }{\partial \gamma} \ln \Big\{ P_{c00}(x;\delta) p_{Y|X}(y|x;\gamma_c)+ P_{n00}(x;\delta) p_{Y|X}(y|x;\gamma_{nc}) \Big\},
\end{align*}
and
\begin{equation*}
p_{Y|X}(y|x;\theta)=\expit(\theta ^T x )^{y} (1-\expit(\theta ^T x ))^{1-y}
\end{equation*}
if $Y$ is binary, or denoting $\theta=(\theta_\zeta^T,\theta_\sigma)^T,$
\begin{equation*}
p_{Y|X}(y|x;\theta)= \frac{1}{{\theta_\sigma \sqrt{2\pi}}} \exp\left\{-\frac{(y - \theta_\zeta^T x)^2}{2\theta_\sigma^2}\right\}
\end{equation*}
if $Y$ is continuous.
The representations above allow to define the following estimating function 
\begin{align*}
\psi(x,z,t,y;\Delta_{PI},\delta,\beta,\gamma)=&\Big(\psi_0^T(x;\Delta_{PI},\delta,\beta,\gamma), \psi_1^T(x,z,t;\delta), \psi_2^T(x,z,t,y;\delta,\beta), \\
&\psi_3^T(x,z,t,y;\delta,\gamma)\Big)^T.
\end{align*}
It can be shown that for all $(\Delta,\delta,\beta,\gamma),$
\begin{equation*}
    \EX_{\Delta,\delta,\beta,\gamma}\Big[\psi(X,Z,T,Y;\Delta,\delta,\beta,\gamma)\Big]=0.
\end{equation*}
Thus, $\psi(x,z,t,y;\Delta,\delta,\beta,\gamma)$ is an unbiased estimating function and $\widehat{\Delta}_{PI}$ is a partial M-estimator of $\psi-$type. Applying standard results from M-estimation theory (\emph{see} for example Equation (7.10) from \citet[p. 301]{stefanski2002}), $(\widehat{\Delta}_{PI},\hat{\delta}^T,\hat{\beta}^T,\hat{\gamma}^T)^T$ are consistent and asymptotically normal estimators for $(\Delta,\delta^T,\beta^T,\gamma^T)^T$; that is, denoting true values of the parameters with subscript $0$,
\begin{equation*}
    (\widehat{\Delta}_{PI},\hat{\delta}^T,\hat{\beta}^T,\hat{\gamma}^T)^T \quad \xrightarrow{p} \quad (\Delta_0,\delta_0^T,\beta_0^T,\gamma_0^T)^T
\end{equation*}
and
 \begin{equation*}
     n^{1/2}         
             \begin{pmatrix}
           \widehat{\Delta}_{PI} - \Delta_0\\ 
          \hat{\delta} - \delta_0 \\
           \hat{\beta} - \beta_0\\
           \hat{\gamma} - \gamma_0
    \end{pmatrix}
         \xrightarrow{\mathcal{D}} \mathcal{N}(0,\Sigma),
 \end{equation*}
where the variance-covariance matrix $\Sigma$ is given by the sandwich formula at the point $v_0=(\Delta_0,\delta_0,\beta_0,\gamma_0)$: 
\begin{align*}
\Sigma &= A^{-1}B\{A^{-1}\}^T \quad \text{with}   \\ 
    A&= \EX_{v_0} \bigg[\frac{\partial \psi(X,Z,T,Y;\Delta,\delta,\beta,\gamma) }{\partial (\Delta,\delta^T,\beta^T,\gamma^T)}\bigg\rvert_{v_0}\bigg], \\
    B&= \EX_{v_0} \bigg[ \psi(X,Z,T,Y;v_0)\, \psi^T(X,Z,T,Y;v_0)\bigg].
\end{align*}
In principle, a closed-form estimator for the asymptotic variance of $\widehat{\Delta}_{PI}$ could be derived by calculating the top-left element of the matrix $\Sigma$. However, because the estimator $\widehat{\Delta}_{PI}$ involve iterative fitting of three mixture of expert models, carrying out the required derivations would be a formidable task. Derivations could be conducted numerically via the \texttt{R} package \texttt{geex} \citep{geex2020} or algorithmically and symbolically through the \texttt{Mestim} package \citep{mestim2022}. However, in our experience both these methods appeared slow and numerically unstable. Accordingly, we recommend that measures of uncertainty for $\widehat{\Delta}_{PI}$ be obtained via a standard nonparametric bootstrap.

\section{Simulation}
\subsection{Description}\label{app:sim_desc}
In this section, we provide further descriptions of the data-generating mechanism used in the simulations. We generate synthetic datasets comprising $7$ Bernoulli and $7$ log-normally distributed, correlated, covariates $X=(X_{1}, X_{2},\dots,X_{14})$ as follows.

\begin{enumerate}
\item We randomly generate correlated intermediate covariates $X_{1}', X_{2}',\dots,X_{14}'$ from a multivariate gaussian distribution
    \begin{displaymath}
    \left(X_{1}', X_{2}', \ldots,  X_{14}' \right)^{T} \sim \mathcal{N}(0,\,\Sigma).
    \end{displaymath}
To generate $\Sigma$, we chose $14$ eigenvalues $(\lambda_{i})_{1  \leq i \leq 14}$ with
$\lambda_i= 1+ (i-1) \times 0.2$, and sample a random orthogonal matrix $O$ of size $14 \times 14$. The covariance matrix $\Sigma$ is obtained via
  \begin{align*}
     \Sigma &=O
     \begin{bmatrix} 
    \lambda_{1} & 0               & \dots  & 0\\
    0           & \lambda_{2}     & \ddots & \vdots\\
    \vdots      & \ddots          & \ddots & 0\\
    0           & \dots           & 0      & \lambda_{14}
    \end{bmatrix}
    O^{T}.
  \end{align*}
\item To allow for the Bernoulli or log-normal distribution of covariates, we generate $$X_{1}, X_{2},\dots,X_{14}$$ as follows 
$$ (X_{1},\dots,X_{7}) =(\mathbbm{1}\{X_{1}'>0\},\dots,\mathbbm{1}\{X_{7}'>0\}), $$
$$(X_{8},\dots,X_{14}) =\big(\exp(X_{8}'),\dots,\exp(X_{14}')\big).$$ We add $X_{0} \equiv 1$ to allow for intercepts.
\item We generate data from the covariates in this manner.\\
The strata are
\begin{align*}
    &\boldsymbol{S}|X \sim \textit{Multinomial}\Big(N=1,p=\big(\rho_c(X),\rho_a(X),\rho_n(X),\rho_d(X)\big)^T\Big)
\end{align*}
with
\begin{align*}
    &\forall s \in \{c,a,n,d\}, \quad \rho_s(X)=\frac{\exp{\delta_s^TX}} { \sum_{k\in\{c,a,n,d\}}\exp{\delta_k^TX}}
\end{align*}
and the parameters $\delta=(\delta_c^T,\delta_a^T,\delta_n^T,\delta_d^T)^T$ are set at random: $\delta \sim \mathcal{U}[-1,1]^{4\times15}.$
The allocated treatment is $Z \sim \textit{Bernoulli}(0.5)$. The treatment effectively taken is $T=S_c Z + S_a + S_d  (1-Z)$. For every $k \in \{c,a,n,d\}$, $l \in \{0,1\}$ and $m \in \{0,1\}$, the elementary potential outcomes are generated as
\begin{align*}
    Y^{s=k,z=l,t=m}|X \sim \textit{Bernoulli}(\expit \beta_{klm}^TX),
\end{align*}
where the parameters $\beta_{klm}$ are set at random: $\beta_{klm} \sim \mathcal{U}[-1,1]^{15}$, for every $k \in \{c,a,n,d\}$, $l \in \{0,1\}$, $m \in \{0,1\}$.
The potential outcomes are 
\begin{align*}
Y^{t=1}&= S_cY^{s=c,z=1,t=1} + S_aZY^{s=a,z=1,t=1} + S_a(1-Z)Y^{s=a,z=0,t=1} + S_dY^{s=d,z=0,t=1}, \\
Y^{t=0}&= S_cY^{s=c,z=0,t=0} + S_nZY^{s=n,z=1,t=0} + S_n(1-Z)Y^{s=n,z=0,t=0} + S_dY^{s=d,z=1,t=0},
\end{align*}
while observed outcomes are $Y=TY^{t=1}+(1-T)Y^{t=0}.$
\end{enumerate}
In the scenarios where the exclusion restriction assumption holds, we set $Y^{s=a,z=1,t=1} \gets Y^{s=a,z=0,t=1}$ and $Y^{s=n,z=1,t=0} \gets Y^{s=n,z=0,t=0}.$ When the monotonicity assumption holds we set $\delta_d^T X \gets -\infty$, so that $\rho_d(\cdot) \equiv 0.$ For well specified scenarios, all parametric models use covariates $(X_{0},X_{1},\dots,X_{14})$ as predictor variables. For misspecified scenarios, all parametric models use covariates $(X_{0},X_{1},\dots,X_{6})$ and $(X_{8},X_{9}\dots,X_{13})$ as predictor variables, such that a Bernoulli distributed and a log-normally distributed relevant variables are omitted.

\subsection{Instrumental variable methods}\label{app:sim_iv}
The Wald and IV matching estimators were calculated as follows \citep{wald1940,angrist1996,frolich2007}:
\begin{align*}
    &\widehat{\Delta}_{IV wald}=
    \frac{\dfrac{\sum_{i=1}^n Z_iY_i}{\sum_{i=1}^n Z_i}-\dfrac{\sum_{i=1}^n \{1-Z_i\}Y_i}{\sum_{i=1}^n \{1-Z_i\}}}{\dfrac{\sum_{i=1}^n Z_iT_i}{\sum_{i=1}^n Z_i}-\dfrac{\sum_{i=1}^n \{1-Z_i\}T_i}{\sum_{i=1}^n \{1-Z_i\}}} \\
    &\\
    &\widehat{\Delta}_{IV\,matching}=\frac{\sum_{i\in\{0,1\},j} (-1)^{i+1} \hat{\EX}[Y|G=j,Z=i]}{\sum_{i \in\{0,1\},j} (-1)^{i+1} \hat{\EX}[T|G=j,Z=i]}.
\end{align*}
In the equation above, $\{Z=1\}$ and $\{Z=0\}$ observations are matched on $\hat{\eta}(X)$, and $G\in\mathbb{N}$ denotes the group an observation belongs to; $\hat{\EX}[\cdot]$ are calculated as (weighted) averages.
\clearpage
\subsection{Supplementary results}\label{app:sim_supp_res}
\begin{figure}[h!]
\centering
\scalebox{.525}{
\includegraphics{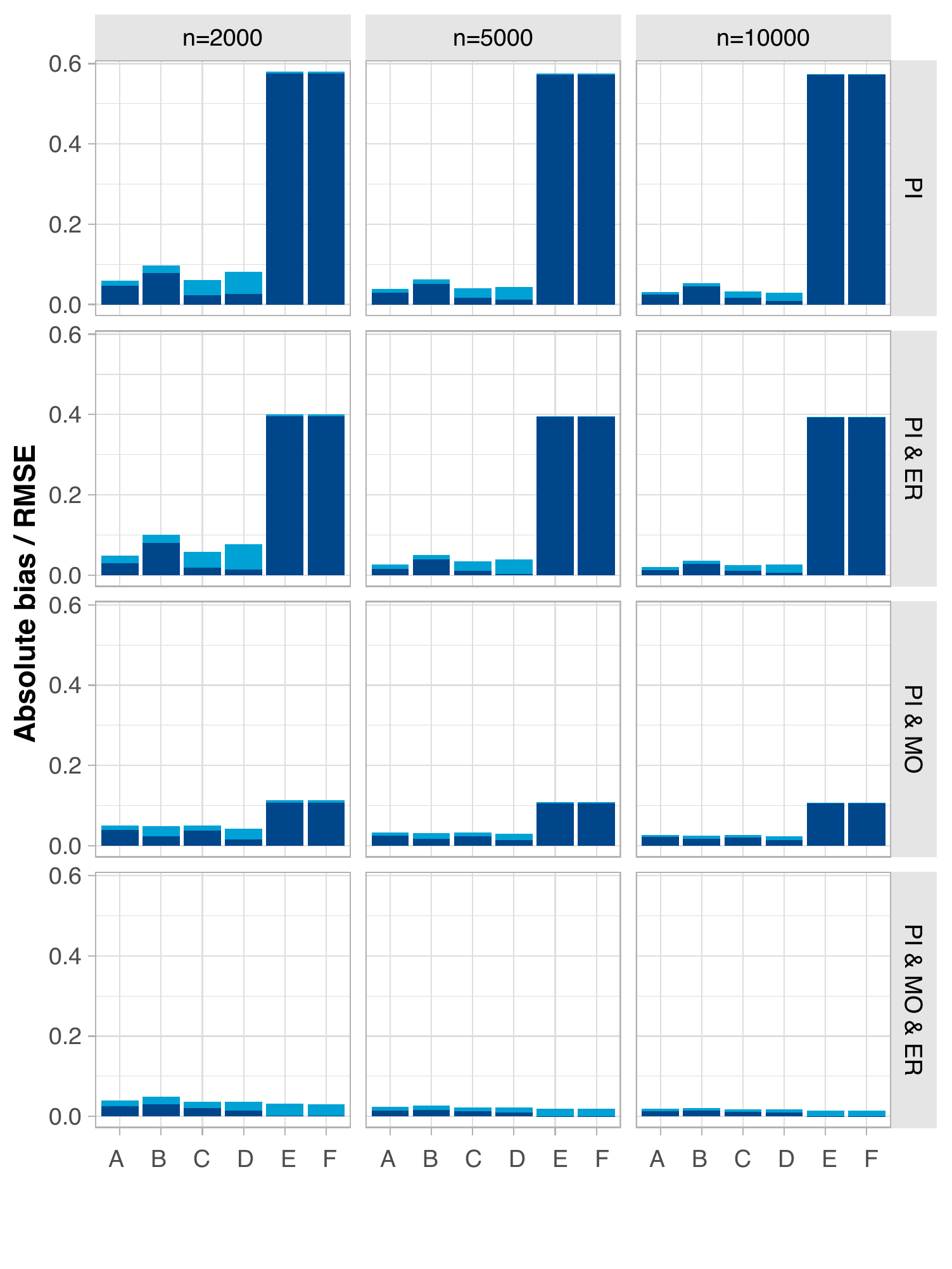}}
\caption{Estimators’ absolute bias and Root Mean Squared Error (RMSE) with parametric models including all the relevant variables, across twelve scenario/sample size combinations.}
\scriptsize  Absolute bias is the darker portion of each bar; RMSE corresponds to the total bar size. Letters A, B, C, D, E and F indicate the estimators $\widehat{\Delta}_{PI}$, $\widehat{\Delta}_{PI}^{ER}$, $\widehat{\Delta}_{PI,MO}$, $\widehat{\Delta}_{PI,MO}^{ER}$, $\widehat{\Delta}_{IV matching}$, and $\widehat{\Delta}_{IV wald}$ respectively. Abbreviations: PI = Principal Ignorability (Scenario 1), PI \& ER = Principal Ignorability and Exclusion Restriction (Scenario 2), PI \& MO= Principal Ignorability and Monotonicity (Scenario 3), PI \& ER \& MO = Principal Ignorability, Exclusion Restriction and Monotonicity (Scenario 4).
\end{figure}

\begin{figure}
\centering
\scalebox{.55}{
\includegraphics{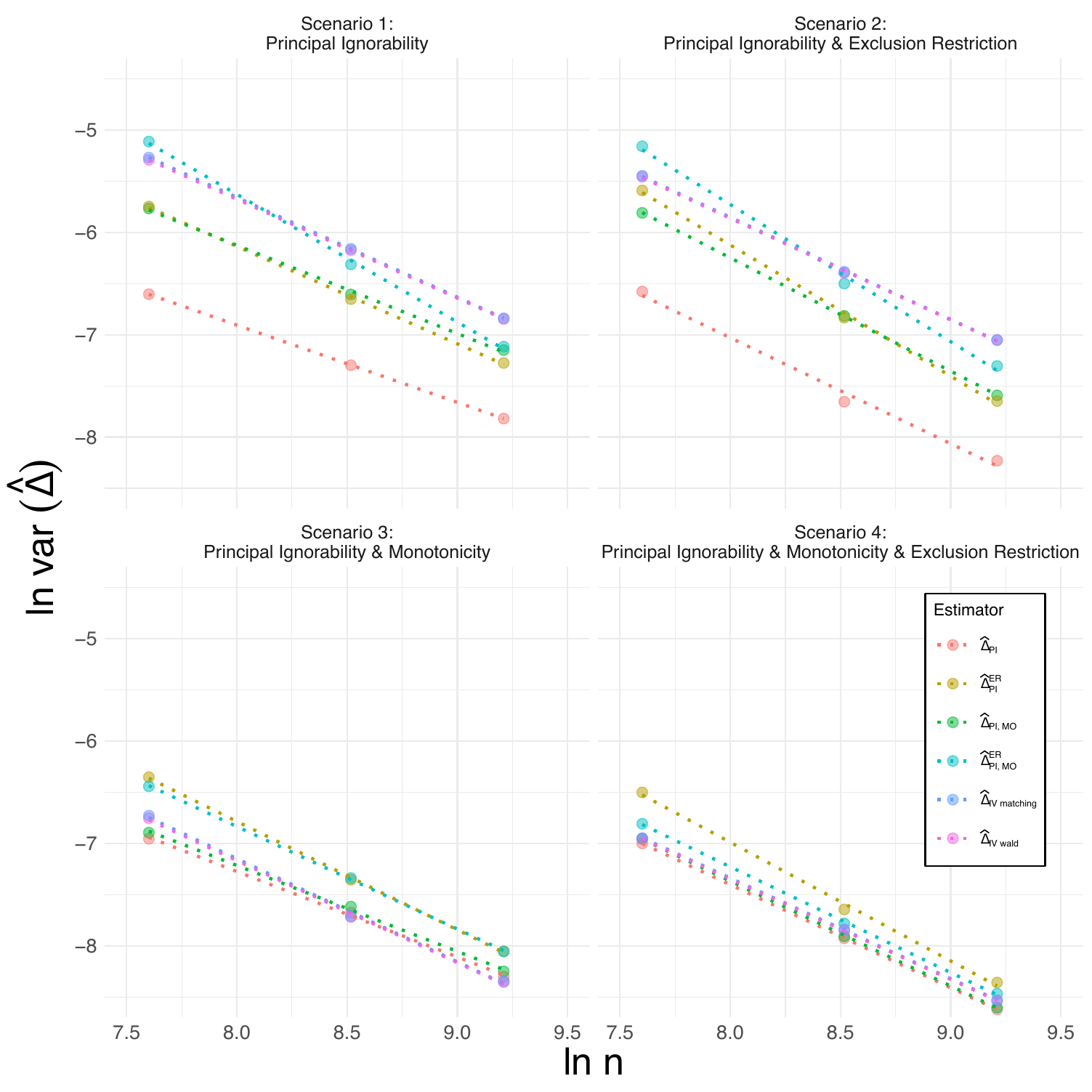}}
\caption{Estimators' variance and rate of convergence with parametric models including all the relevant variables.}
\scriptsize For each estimator/scenario combination, slopes describe rates of convergence (e.g., a slope of $-1/2$ points to a convergence speed of $\sqrt{n}$), while intercepts approximate the logarithm of asymptotic variances.
\end{figure}

\begin{table*}
\caption{Results of the simulation study where parametric models include all relevant variables.\label{tab:sim_res_appendix}}
\centering
\scalebox{.55}{
\begin{tabular}{lcccccccccccc}
\hline
 Assumptions  &   \multicolumn{3}{c}{Scenario 1}  &  \multicolumn{3}{c}{Scenario 2}  &  \multicolumn{3}{c}{Scenario 3}  & \multicolumn{3}{c}{Scenario 4} \\ 
Principal ignorability &  \multicolumn{3}{c}{$+$}  &  \multicolumn{3}{c}{$+$}  &  \multicolumn{3}{c}{$+$}  &  \multicolumn{3}{c}{$+$}  \\
Exclusion restriction &  \multicolumn{3}{c}{$-$}  &  \multicolumn{3}{c}{$+$}  &  \multicolumn{3}{c}{$-$}  &  \multicolumn{3}{c}{$+$}  \\
Monotonicity &  \multicolumn{3}{c}{$-$}  &  \multicolumn{3}{c}{$-$}  &  \multicolumn{3}{c}{$+$}  &  \multicolumn{3}{c}{$+$}  \\ \hline
Estimator & Bias (\%) & SE (\%) & RMSE  (\%) & Bias (\%) & SE (\%) & RMSE  (\%) & Bias (\%) & SE (\%) & RMSE  (\%) & Bias (\%) & SE (\%) & RMSE  (\%) \\ \hline 
$n=2\,000$ & & & & & & & & & & & &  \\ 
$\widehat{\Delta}_{PI}$ & $-$4.65 & {\bf 3.68} & {\bf 5.93} & $-$2.99 & {\bf 3.73} & {\bf 4.78} & $-$3.95 &  {\bf 3.09} &  5.02 & $-$2.50 & {\bf 3.02} & 3.92 \\ 
$\widehat{\Delta}_{PI}^{ER}$ & $-$7.85 & 5.65 & 9.67 & $-$7.95 & 6.11 & 10.03 & $-$2.39 & 4.17 & 4.81 & $-$3.05 & 3.87 & 4.93 \\ 
$\widehat{\Delta}_{PI,MO}$ & {\bf $-$2.33} & 5.59 & 6.06 & $-$1.90 & 5.48 & 5.79 & $-$3.85 & 3.18 & 4.99 & $-$2.08 & 3.09 & 3.72 \\ 
$\widehat{\Delta}_{PI,MO}^{ER}$ & $-$2.68 & 7.76 & 8.20 & {\bf $-$1.46} & 7.58 & 7.72 & {\bf $-$1.64} & 3.99 & {\bf 4.32} & $-$1.48 & 3.32 & 3.64 \\ 
$\widehat{\Delta}_{IV\,\text{matching}}$ & 57.57 &  7.17 & 58.01 & 39.55 & 6.56 & 40.09 & 10.78 & 3.46 & 11.32 & 0.13 & 3.10 & 3.10 \\ 
$\widehat{\Delta}_{IV\,\text{Wald}}$ & 57.52 & 7.09 & 57.96 & 39.53 & 6.53 & 40.07 & 10.76 & 3.41 & 11.29 & {\bf 0.12} & 3.08 & {\bf 3.09} \\ 

$n=5\,000$ & & & & & & & & & &  \\ 
$\widehat{\Delta}_{PI}$ & -2.92 & {\bf 2.60} & {\bf 3.91} & $-$1.59 & {\bf 2.18} & {\bf 2.69} & $-$2.52 & 2.15 & 3.32 & -1.52 & {\bf 1.90} & 2.43\\ 
$\widehat{\Delta}_{PI}^{ER}$ & $-$5.09 & 3.59 & 6.23 & $-$3.87 & 3.28 & 5.07 & $-$1.76 & 2.53 & 3.08 & $-$1.62 & 2.19 & 2.72 \\ 
$\widehat{\Delta}_{PI,MO}$ & $-$1.61 & 3.68 & 4.01 & $-$1.01 & 3.31 & 3.46 & $-$2.38 & 2.22 & 3.25 & $-$1.23 &  1.92 & 2.28 \\ 
$\widehat{\Delta}_{PI,MO}^{ER}$ & {\bf $-$1.22} & 4.26 & 4.43 & {\bf 0.25} & 3.88 & 3.88 & {\bf $-$1.46} & 2.55 & {\bf 2.94} & $-$1.00 & 2.04 & 2.27 \\ 
$\widehat{\Delta}_{IV\,\text{matching}}$ & 57.29 & 4.59 & 57.47 & 39.36 & 4.11 & 39.57 & 10.64 & 2.12 & 10.84 & {\bf 0.02} & 1.98 & 1.98 \\ 
$\widehat{\Delta}_{IV\,\text{Wald}}$ & 57.28 & 4.56 & 57.46 & 39.36 & 4.08 & 39.57 & 10.64 & {\bf 2.11} & 10.85 & 0.03 & 1.97 &  {\bf 1.97}  \\ 

$n=10\,000$ & & & & & & & & & & & &  \\ 
$\widehat{\Delta}_{PI}$ & $-$2.40 & {\bf 2.00} & 3.13 & $-$1.24 & {\bf 1.63} & {\bf 2.05} & $-$2.19 & 1.57 & 2.69 & $-$1.34 & {\bf 1.34} & 1.90 \\ 
$\widehat{\Delta}_{PI}^{ER}$ & $-$4.58 & 2.63 & 5.29 & $-$2.87 & 2.18 & 3.60 & $-$1.69 & 1.78 & 2.46 & $-$1.42 & 1.53 & 2.09 \\ 
$\widehat{\Delta}_{PI,MO}$ & $-$1.69 & 2.80 & 3.27 & $-$1.07 & 2.25 & 2.49 & $-$2.10 & 1.62 & 2.65 & $-$1.11 & 1.35 & 1.75 \\ 
$\widehat{\Delta}_{PI,MO}^{ER}$ &{\bf $-$0.90} & 2.85 & {\bf 2.99} &  {\bf 0.62} & 2.59 & 2.66 &  {\bf 1.44} & 1.78 &  {\bf 2.29} & $-$0.90 & 1.45 & 1.70 \\ 
$\widehat{\Delta}_{IV\,\text{matching}}$ & 57.21 &  3.27 & 57.31 & 39.25 & 2.94 & 39.36 & 10.58 & 1.54 & 10.70  & $-$0.02 & 1.40 & 1.40 \\ 
$\widehat{\Delta}_{IV\,\text{Wald}}$ & 57.22 & 3.26 & 57.32 & 39.26 & 2.94 & 39.37 & 10.59& {\bf 1.53} & 10.70 & {\bf $-$0.02} & 1.40 & {\bf 1.40} \\ 
\hline
\end{tabular}
}
{\small {\raggedright Scenario 1, $\EX[Y^{t=1}-Y^{t=0}]= -2.13\% $,  $\Delta= 20.31\% $, $\EX[(Y^{s=a,z=0,t=1}-Y^{s=a,z=1,t=1})^2]= 60.74\% $, $\EX[(Y^{s=n,z=1,t=0}-Y^{s=a,z=0,t=0})^2]= 54.90\% $, $\EX[\rho_c(X)]= 55.08\% $, and $\EX[\rho_d(X)]= 14.36\% $.  \par}
{\raggedright Scenario 2, $\EX[Y^{t=1}-Y^{t=0}]= -5.79\% $,  $\Delta= 20.31\% $, $\EX[(Y^{s=a,z=0,t=1}-Y^{s=a,z=1,t=1})^2]= 0 $, $\EX[(Y^{s=n,z=1,t=0}-Y^{s=a,z=0,t=0})^2]= 0 $, $\EX[\rho_c(X)]= 55.08\% $, and $\EX[\rho_d(X)]= 14.36\% $.  \par}
{\raggedright Scenario 3, $\EX[Y^{t=1}-Y^{t=0}]= 13.54\% $,  $\Delta= 20.31\% $, $\EX[(Y^{s=a,z=0,t=1}-Y^{s=a,z=1,t=1})^2]= 60.74\% $, $\EX[(Y^{s=n,z=1,t=0}-Y^{s=a,z=0,t=0})^2]= 54.90\% $, $\EX[\rho_c(X)]= 65.30\% $, and $\EX[\rho_d(X)]=0 $. \par}
{\raggedright  Scenario 4, $\EX[Y^{t=1}-Y^{t=0}]= 10.08\% $,  $\Delta= 20.31\% $, $\EX[(Y^{s=a,z=0,t=1}-Y^{s=a,z=1,t=1})^2]= 0$, $\EX[(Y^{s=n,z=1,t=0}-Y^{s=a,z=0,t=0})^2]= 0 $, $\EX[\rho_c(X)]= 56.30\% $, and $\EX[\rho_d(X)]= 0 $.  \par}}
\end{table*}

\vskip 0.2in
\clearpage
\bibliography{main}

\end{document}